\theoremstyle{plain} \numberwithin{equation}{section}
\newtheorem{theorem}{Theorem}[section]
\newtheorem{corollary}[theorem]{Corollary}
\newtheorem{lemma}[theorem]{Lemma}
\theoremstyle{plain}
\newtheorem{property}[theorem]{Property}
\newtheorem{definition}[theorem]{Definition}
\newtheorem{assumption}[theorem]{Assumption}
\newtheorem{claim}{Claim}[section]
\renewcommand{\paragraph}[1]{\medskip\noindent{\bf #1}\xspace}
\colorlet{DarkRed}{red!50!black}
\colorlet{DarkGreen}{green!50!black}
\colorlet{DarkBlue}{blue!50!black}
\begin{document}

\title{An Improved Algorithm for Incremental Cycle Detection and Topological Ordering in Sparse Graphs}

\author{Sayan Bhattacharya\thanks{University of Warwick, Coventry, UK. Email: {\tt S.Bhattacharya@warwick.ac.uk}} \and Janardhan Kulkarni\thanks{Microsoft Research, Redmond, USA. Email: {\tt jakul@microsoft.com}}}

\date{}

\maketitle

\begin{abstract}
We consider the problem of incremental cycle detection and topological ordering in a directed graph $G = (V, E)$ with $|V| = n$ nodes. In this setting, initially the edge-set $E$ of the graph is empty. Subsequently, at each time-step an edge gets inserted into $G$. After every edge-insertion, we have to report if the current graph contains a cycle, and as long as the graph remains acyclic, we have to maintain a topological ordering of the node-set $V$. Let $m$ be the total number of edges that get inserted into $G$. We present a randomized algorithm for this problem with $\tilde{O}(m^{4/3})$ total expected update time. 

Our result improves the $\tilde{O}(m \cdot \min (m^{1/2}, n^{2/3}))$ total update time bound of~\cite{BenderFGT16,HaeuplerKMST08,HaeuplerKMST12,CohenFKR13}. Furthermore, whenever $m = o(n^{3/2})$, our result improves upon the recently obtained $\tilde{O}(m \sqrt{n})$ total update time bound of~\cite{BernsteinC18}. We note that if $m = \Omega(n^{3/2})$, then the algorithm of~\cite{BenderFGT16,BenderFG09,CohenFKR13}, which has $\tilde{O}(n^2)$ total update time, beats the performance of the $\tilde{O}(m \sqrt{n})$ time algorithm of~\cite{BernsteinC18}. It follows that we improve upon the total update time of the algorithm of~\cite{BernsteinC18} in the  ``interesting'' range of sparsity where $m = o(n^{3/2})$.  

Our result also happens  to be the first one that breaks the $\Omega(n \sqrt{m})$ lower bound of~\cite{HaeuplerKMST08} on the total update time of any {\em local} algorithm for a nontrivial range of sparsity. Specifically, the total update time of our algorithm is $o(n \sqrt{m})$ whenever $m = o(n^{6/5})$. From a technical perspective, we obtain our result by combining the algorithm of~\cite{BernsteinC18} with the {\em balanced search} framework of~\cite{HaeuplerKMST12}. 
\end{abstract}

\thispagestyle{empty}

\newcounter{list}

\renewcommand{\L}{\mathcal{L}}

\newpage
\pagenumbering{arabic}

\section{Introduction}

Consider an {\em incremental}  directed graph $G = (V, E)$ with $|V| = n$ nodes. The edge-set $E$ is empty in the beginning. Subsequently, at each time step an edge gets inserted into $E$. After each such {\em update} (edge insertion), we have to report if the current graph $G$ contains a cycle, and as long as the graph remains acyclic, we have to maintain a topological ordering in $G$. The time taken to report the answer after an edge insertion is called the {\em update time}. We want to design an incremental algorithm for this problem with small {\em total update time}, which is defined as the sum of the update times over all the edge insertions. Recall that in the static setting there is an algorithm for cycle detection and topological ordering that runs in linear time. Thus, in the incremental setting, a naive approach would be to run this static algorithm from scratch after every edge-insertion in $G$. Let $m$ be the number of edges in the final graph. Then the naive incremental algorithm will have a total update time of $O(m \times (m+n)) = O(m^2+mn)$. In contrast, we get the following result.

\begin{theorem}
\label{th:main}
There is a randomized algorithm for incremental cycle detection with expected total update time of $\tilde{O}(m^{4/3})$.
\end{theorem}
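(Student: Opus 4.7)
The plan is to combine the randomized search procedure of Bernstein--Chechik~\cite{BernsteinC18} with the balanced forward/backward search framework of Haeupler et al.~\cite{HaeuplerKMST12}, as suggested in the abstract. Concretely, I would maintain an integer level $\ell(v) \in \{0, 1, \ldots, K\}$ for each node $v$, with $K = \tilde{\Theta}(m^{1/3})$, so that every edge $(u,v) \in E$ satisfies $\ell(u) \le \ell(v)$; the topological order is then the lexicographic order on $\ell$ refined by an intra-level order that the balanced search maintains locally. An edge insertion is \emph{compatible} if its levels already obey the order, in which case only the bookkeeping of the new edge is needed.

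When an insertion $(u,v)$ is inverting (i.e., $\ell(u) > \ell(v)$, or $\ell(u) = \ell(v)$ but $u$ follows $v$ in the intra-level order), I would run two interleaved searches: a forward search from $v$ over nodes with level $\le \ell(u)$, and a backward search from $u$ over nodes with level $\ge \ell(v)$. Following~\cite{HaeuplerKMST12}, the two sides are alternated edge-by-edge so that the total exploration is bounded by twice the size of the smaller side. Following~\cite{BernsteinC18}, the out-edges (resp.\ in-edges) of a node visited by the forward (resp.\ backward) search are scanned in uniformly random order and the exploration at that node is truncated based on a sampled random witness. Any forward-cone node whose old level lies strictly below $\ell(u)$ is promoted to level $\ell(u)$; if $u$ itself is reached, a cycle is reported, and otherwise backward-cone nodes on level $\ell(v)$ are reordered within the intra-level structure.

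The amortized accounting splits into two buckets. First, each of the $n$ nodes has its level raised at most $K = \tilde{O}(m^{1/3})$ times, contributing $\tilde{O}(nK)$; since nodes untouched by any edge may be ignored, we may assume $n \le 2m$, yielding $\tilde{O}(m^{4/3})$. Second, the randomized truncation of~\cite{BernsteinC18} is designed so that each edge incurs only $\tilde{O}(m^{1/3})$ expected search work amortized over the entire update sequence, for a total expected cost of $\tilde{O}(m \cdot m^{1/3}) = \tilde{O}(m^{4/3})$.

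The hardest step is this second potential argument. The balanced-search analysis of~\cite{HaeuplerKMST12} charges proportional to the number of \emph{nodes} on the smaller side of the search, while the randomized analysis of~\cite{BernsteinC18} charges proportional to the number of \emph{edges} incident to the explored nodes. Reconciling the two potentials so that the smaller-side node bound of the balanced search is absorbed by the expected random-scan charge of Bernstein--Chechik --- and choosing $K = \tilde{\Theta}(m^{1/3})$ as the precise balance point between level-increase work and per-edge search work --- is the main technical obstacle. I expect this to require a single unified potential function that simultaneously tracks (i)~the number of promotions still available to each node and (ii)~the number of incident edges yet to be ``discovered'' by a random scan, together with a careful handoff argument showing that each unit of balanced-search work is either paid for by a node promotion or by the random witness on some incident edge.
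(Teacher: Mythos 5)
Your high-level intuition --- combining the Bernstein--Chechik framework with the Haeupler et al.\ balanced two-sided search to get a quadratic savings in the charging argument --- is exactly right, and you correctly land on $m^{1/3}$ as the balance point. But the mechanism you describe for the Bernstein--Chechik half is not what that paper (or this one) does, and as a result your accounting has a gap that the sketched potential function would not close.

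First, the randomization in~\cite{BernsteinC18} is not a ``sampled random witness'' that truncates the per-node edge scan. It is a one-shot sampling of a set $S$ of $\tilde{O}(n/\tau)$ nodes, from which one maintains, via $|S|$ incremental single-source reachability structures, a partition $\{V_{i,j}\}$ of $V$ indexed by $(|A_S(x)|, |D_S(x)|)$. The dominant ``level-maintenance'' cost is then $\tilde{O}(|S|\cdot m) = \tilde{O}(mn/\tau)$ for these reachability structures, not $\tilde{O}(nK)$ for level promotions. Your discrete levels $\ell(v)\in\{0,\dots,K\}$ with $K=\tilde\Theta(m^{1/3})$ do not correspond to this partition, and nothing in your sketch maintains the reachability information that makes the partition cheap to update and, crucially, makes it a faithful relaxation of ``on a common cycle.''

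Second, and more fundamentally, you never invoke the quantity that actually pays for the search: the number of \emph{sometime $\tau$-related node pairs}, which~\cite{BernsteinC18} bounds by $O(n\tau)$. The whole point of running the balanced search \emph{inside a single part} $G_{i,j}$ of that partition is the following charging argument, which your proposal is missing: if the balanced search dies after exploring $t$ nodes on each side, then every pair $(y,x)$ with $y$ backward-dead and $x$ forward-dead is a $\tau$-related pair (w.h.p., because $x,y$ lie in the same $V_{i,j}$ and are connected through the new edge $(u,v)$), and it is \emph{newly} $\tau$-related because the topological order certified no $y\to x$ path before this insertion. That gives $t^2$ new pairs per insertion, hence $\sum_l t_l^2 = O(n\tau)$, hence $\sum_l t_l = O(\sqrt{mn\tau})$ by Cauchy--Schwarz, hence search cost $\tilde{O}((m/n)\sqrt{mn\tau}) = \tilde{O}(\sqrt{m^3\tau/n})$. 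Balancing this against $mn/\tau$ yields $\tau = n/m^{1/3}$ and $\tilde{O}(m^{4/3})$. Your ``each edge incurs $\tilde{O}(m^{1/3})$ expected search work'' claim is asserted rather than derived, and the unified potential you gesture at --- promotions remaining plus undiscovered incident edges --- does not obviously dominate $\sum_l t_l$: a single insertion can explore $\Theta(\sqrt{n\tau})$ nodes without any promotions and without any randomness helping, and it is only the \emph{pairwise} counting against $O(n\tau)$ sometime $\tau$-related pairs that makes the sum of such costs small. You need the $\tau$-related-pair bound (Theorem~\ref{th:bound:related:nodes}) and the quadratic charging (Lemma~\ref{lm:search:time:total}); a per-node or per-edge potential will not substitute for it.
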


\subsection{Perspective}

Cycle detection and topological ordering in directed graphs are fundamental, textbook problems. It is natural to ask what happens to the complexity of these problems when the input graph changes with time via a sequence of edge insertions. It comes as no surprise, therefore, that a long and influential line of work in the dynamic algorithms community, spanning over a couple of decades, have focussed on this question~\cite{BernsteinC18,HaeuplerKMST08,HaeuplerKMST12,BenderFGT16,BenderFG09,CohenFKR13,AjwaniF10,AjwaniFM08,KatrielB06,LiuC07,Marchetti-SpaccamelaNR96,PearceK06}. 

Note that the problem is trivial in the {\em offline setting}. Here,  we get an empty graph $G = (V, E)$ and a sequence of edges $e_1, \ldots, e_m$ as input {\em at one go}. For each $t \in [1, m]$, let $G_t$ denote the status of  $G$ after the first $t$ edges $e_1, \ldots, e_t$ have been inserted into $E$. We have to determine, for each $t$, if the graph $G_t$ contains a cycle. This offline version  can easily be solved in $O(m \log m)$ time using binary search. In contrast, we are still far away from designing an algorithm for the actual, incremental version of the  problem that has $\tilde{O}(m)$ total update time.\footnote{Throughout this paper, we use the $\tilde{O}(.)$ notation to hide polylog factors.}  This is especially relevant, because at present we do not know of any technique  in the conditional lower bounds literature~\cite{AbboudW14,HenzingerKNS15,KopelowitzPP16} that can prove a separation between the best possible total update time for an incremental problem and the best possible  running time for the corresponding offline version. Thus, although it might  be the case that there is no incremental algorithm for cycle detection and topological ordering with near-linear total update time, proving such a statement is beyond the scope of current techniques.  With this observation in mind, we now review  the current state of the art on the algorithmic front. We mention   three  results  that are particularly   relevant to this paper.

\smallskip
\noindent  {\bf Result (1):} There is an incremental algorithm with total update time of $\tilde{O}(n^2)$. This follows from the work of~\cite{BenderFGT16,BenderFG09,CohenFKR13}. So the problem is well understood for dense graphs where $m = \Theta(n^2)$.

\smallskip
\noindent  {\bf Result (2):} There is an incremental algorithm with total update time of $\tilde{O}(m \cdot \min(m^{1/2}, n^{2/3}))$. This follows from the work of~\cite{BenderFGT16,HaeuplerKMST08,HaeuplerKMST12,CohenFKR13}. 

\smallskip
\noindent {\bf Result (3):} There is a randomized incremental algorithm with total expected update time of $\tilde{O}(m \sqrt{n})$. This follows from the very recent work of~\cite{BernsteinC18}. 

\medskip
\noindent {\bf Significance of  Theorem~\ref{th:main}.} We obtain a randomized incremental algorithm for cycle detection and topological ordering that has an expected total update time of $\tilde{O}(m^{4/3})$. Prior to this, all incremental algorithms for this problem had a total update time of $\Omega(n^{3/2})$ for sparse graphs with $m = \Theta(n)$. Our algorithm breaks this barrier by achieving a bound of $\tilde{O}(n^{4/3})$ on sparse graphs. More generally, our total update time bound of $\tilde{O}(m^{4/3})$ outperforms  the $\tilde{O}(m \sqrt{n})$ bound from result (3) as long as $m = o(n^{3/2})$. Note that if $m = \omega(n^{3/2})$ then  result (3) gets superseded by  result (1). On the other hand,  result (3) is no worse than  result (2) for all values of $m$.\footnote{Throughout this paper we assume that $m \geq n$. This is because if $m = o(n)$ then many nodes remain isolated (with zero degree) in the final graph, and we can ignore those isolated nodes while analyzing the total update time of the concerned algorithm.}\footnote{It is easy to combine two incremental algorithms and get the ``best of both worlds". For example, suppose that we want to combine results (1) and (3) to get a total update time  of $\tilde{O}(\min(n^2, m \sqrt{n}))$,  without knowing the value of $m$ in advance. Then we can initially start with the algorithm from result (3) and then {\em switch to} the algorithm from result (1) when $m$ becomes $\Omega(n^{3/2})$.}  Thus, prior to our work result (3) gave the best known total update time when $m = o(n^{3/2})$, whereas  result (1) gave the best known total update time when $m = \Omega(n^{3/2})$. We now improve upon the bound from result (3) in this ``interesting" range of sparsity where $m = o(n^{3/2})$.

We are also able to break, for the first time in the literature, a barrier on the total update time of a certain type of  algorithms that was identified by  Haeupler et al.~\cite{HaeuplerKMST12}. Specifically, they defined an algorithm to be {\em local} iff it satisfies the following property. Suppose that currently the graph $G$ is acyclic, and the algorithm maintains a topological ordering $\prec$ on the node-set $V$ such that $x \prec y$ for every edge $(x, y) \in E$. In other words, every edge is a {\em forward edge} under $\prec$.  At this point,  a directed edge $(u, v)$ gets inserted into the graph $G$. Then the algorithm updates the topological ordering after this edge insertion only if $v \prec u$. Furthermore, if $v \prec u$, then the algorithm changes the positions of only those nodes in this topological ordering that lie in the {\em affected region}, meaning that a node $x$ changes its position only if $v \preceq x \preceq u$ just before the insertion of the edge.  Haeupler et al.~\cite{HaeuplerKMST12} showed that any local algorithm for incremental cycle detection and topological ordering must necessarily have a total update time of $\Omega(n \sqrt{m})$. Interestingly, although the algorithms that lead to results (1) and (3) are {\em not} local, prior to our work no algorithm (local or not) was known in the literature that beats this $\Omega(n \sqrt{m})$ lower bound for {\em any} nontrivial value of $m$. In sharp contrast, our algorithm (which is {\em not} local) has a total update time of $\tilde{O}(m^{4/3})$, and this beats the  $\Omega(n \sqrt{m})$ lower bound of Haeupler et al.~\cite{HaeuplerKMST12} when $m = o(n^{6/5})$.

\medskip
\noindent {\bf Our Technique.} We obtain our result by combining the framework of Bernstein and Chechik~\cite{BernsteinC18} with the balanced search procedure of Haeupler et al.~\cite{HaeuplerKMST12}. We first present a high level overview of the algorithm in~\cite{BernsteinC18}. Say that a node $x$ is an ancestor (resp. descendant) of another node $y$ iff there is a directed path from $x$ to $y$ (resp. from $y$ to $x$) in the current graph $G$. The algorithm in~\cite{BernsteinC18} is parameterized by an integer $\tau \in [1, n]$ whose value will be fixed later on. Initially, each node $v \in V$ is sampled with probability $\Theta(\log n/\tau)$.  Bernstein and Chechik~\cite{BernsteinC18} maintain a partition of the node-set $V$ into  subsets $\{ V_{i,j} \}$, where  a node $v$ belongs to a subset $V_{i,j}$ iff it has exactly $i$ ancestors and $j$ descendants among the sampled nodes. A total order $\prec^*$ is defined on the subsets $\{V_{i,j} \}$, where $V_{i,j} \prec^* V_{i', j'}$ iff either $i < i'$ or $\{i =i' \text{ and } j > j'\}$.  Next, it is shown that this partition and the total order satisfies  two important properties. (1) If $G$ contains a cycle, then w.h.p. all the nodes in that cycle belong to the same subset in the partition. (2) As long as $G$ remains acyclic, every edge $(u, v) \in E$ is either an {\em internal edge} or a {\em forward edge} w.r.t. the total order $\prec^*$; this means that the subset containing $u$ is either the same as or appears before the subset containing $v$.  Intuitively, these two properties allow us to {\em decompose} the problem into smaller parts. All we need to do now is (a)  maintain the subgraphs $G_{ij}$ induced by the subsets $V_{ij}$, and (b)  maintain a topological ordering within each subgraph $G_{i,j}$. Task (a) is  implemented by using an incremental algorithm for single-source reachability and a data structure for maintaining an {\em ordered list}~\cite{DietzS87}. 

For task (b), consider the scenario where an edge $(u, v)$ gets inserted and both $u$ and $v$ belong to the same subgraph $G_{i,j}$. Suppose that $u$ appears after $v$ in the current topological ordering in $G_{i,j}$. We now have to check if the insertion of the edge $(u, v)$ creates a cycle, or, equivalently, if there already exists a directed path from $v$ to $u$. In~\cite{BernsteinC18} this task is performed by doing a {\em forward search} from $v$. Intuitively, this means exploring the nodes  that are reachable from $v$ and appear  before $u$ in the current topological ordering. If we encounter the node $u$ during this forward search, then we have found the desired path from $v$ to $u$, and we can report that the insertion of the edge $(u, v)$ indeed creates a cycle. The time taken to implement this forward search is determined by the number of nodes $x$ that are explored during this search. Bernstein and Chechik~\cite{BernsteinC18} now introduce a crucial notion of {\em $\tau$-related pairs of nodes} (see Section~\ref{sec:prelim} for details), and show that for every node $x$ explored during the forward search we get a newly created $\tau$-related pair $(x, u)$. Next, they prove an upper bound of $O(n \tau)$ on the total number of such pairs that can appear throughout the duration of the algorithm. This implies that the total number of nodes explored during forward search is also at most $O(n \tau)$, and this in turn help us  fix the value of  $\tau$ (to balance the time taken for task (a)) and bound the total update time.

We now explain our main idea. Inspired by the balanced search technique from~\cite{HaeuplerKMST12}, we modify the subroutine for implementing task (b) as follows. We simultaneously perform a {\em forward search} from $v$ {\em and} a {\em backward search} from $u$. The forward search proceeds as in~\cite{BernsteinC18}. The backward search, on the other hand, explores the nodes $y$ such that $u$ is reachable from $y$ and $y$ appears before $v$ in the current topological ordering. We alternate between a forward search step and a backward search step, so that at any point in time the number of nodes respectively explored by these two searches are equal to one other. If these two searches {\em meet} at some node $z$, then we have found a path from $v$ to $u$ (the path goes via $z$), and we accordingly declare that the insertion of the edge $(u, v)$ creates a cycle. The time taken to implement task (b) is again determined by the number of nodes explored during the forward search, since this is the same as the number of nodes explored during the backward search. Now comes the following crucial observation. For every node $x$ explored during the forward search and every node $y$ explored during the backward search after the insertion of an edge $(u, v)$, we get a newly created $\tau$-related pair $(x, y)$. Thus, if $\lambda$ nodes are explored by each of these searches, then we get $\Omega(\lambda^2)$ newly created $\tau$-related pairs; although we still explore only $2\lambda$ nodes overall. In contrast, the algorithm in~\cite{BernsteinC18} creates only $O(\lambda)$ many new $\tau$-related pairs whenever it explores $\lambda$ nodes. This {\em quadratic improvement} in the creation of new $\tau$-related pairs leads to a much stronger bound on the total number of nodes explored by our algorithm, because as in~\cite{BernsteinC18} we still can have at most $O(n \tau)$ many newly created  $\tau$-related pairs during the entire course of the algorithm. This improved bound on the number of explored nodes leads to an improved bound of $\tilde{O}(m^{4/3})$ on the total update time.

\section{Our Algorithm: Proof of Theorem~\ref{th:main}}

This section is organized as follows.  In Section~\ref{sec:prelim} we recall some useful concepts from~\cite{BernsteinC18}. In Section~\ref{sec:algo} we  present our incremental algorithm, and in Section~\ref{sec:time} we analyze its total update time. The full version of the algorithm  (containing the proofs missing from the main body) appears in Appendix~\ref{app:sec:full}.

\subsection{Preliminaries}
\label{sec:prelim}

Throughout the paper, we assume that the maximum degree of a node in $G$ is at most $O(1)$ times the average degree. It was observed in~\cite{BernsteinC18} that this assumption is without any loss of generality. 
\begin{assumption}~\cite{BernsteinC18}
\label{assume}
Every node in $G$ has an out-degree of  $O(m/n)$ and an in-degree of  $O(m/n)$.
\end{assumption}

We say that a node $x \in V$ is an {\em ancestor} of another node $y \in V$ iff there is a directed path from $x$ to $y$ in $G$. We let $A(y) \subseteq V$ denote the set of all ancestors of $y \in V$. Similarly, we say that $x$ is a {\em descendant} of $y$ iff there is a directed path from $y$ to $x$ in $G$. We let $D(y) \subseteq V$ denote the set of all descendants of $y$. A node is both an ancestor and a descendant of itself, that is, we have $x \in A(x) \cap D(x)$. We also fix an integral parameter $\tau \in [1, n]$ whose exact value will be determined later on. Note that if there is a path from a node $x$ to another node $y$ in $G$, then  $A(x) \subseteq A(y)$ and $D(y) \subseteq D(x)$. Such a pair of nodes is said to be {\em $\tau$-related} iff the number of nodes in each of the sets $A(y) \setminus A(x)$ and $D(x) \setminus D(y)$ does not exceed $\tau$. 

\begin{definition}~\cite{BernsteinC18}
\label{def:related:nodes}
We say that an  ordered pair of nodes $(x, y)$ is {\em $\tau$-related} in the graph $G$ iff there is a path from $x$ to $y$ in $G$, and $|A(y) \setminus A(x)| \leq \tau$ and $|D(x) \setminus D(y)| \leq \tau$. We emphasize that for the ordered pair $(x, y)$ to be $\tau$-related, it is {\em not} necessary that there be an edge $(x,y) \in E$.
\end{definition}

If two nodes $x, y \in V$ are part of a cycle, then clearly $A(x) = A(y)$ and $D(x) = D(y)$, and both the ordered pairs $(x,y)$ and $(y, x)$ are $\tau$-related. In other words, if an ordered pair $(x, y)$ is not $\tau$-related, then there is no cycle containing both $x$ and $y$. Intuitively, therefore, the notion of $\tau$-relatedness serves as a {\em relaxation} of the notion of two nodes being part of a cycle. Next, note that the graph $G$ keeps changing as more and more edges are inserted into it. So it might  be the case that an ordered pair of nodes $(x, y)$ is {\em not} $\tau$-related in $G$ at some point in time, but {\em is} $\tau$-related in $G$ at some other point in time.  The following definition and the subsequent theorem becomes relevant in light of this observation.

\begin{definition}~\cite{BernsteinC18}
\label{def:related:nodes:time}
We say that an {\em ordered} pair of nodes $(x, y)$ is {\em sometime $\tau$-related} in the graph $G$ iff it is $\tau$-related at some point in time during the entire sequence of edge insertions in $G$.
\end{definition}

\begin{theorem}~\cite{BernsteinC18}
\label{th:bound:related:nodes}
The number of sometime $\tau$-related pairs of nodes in $G$ is at most $O(n \tau)$.
\end{theorem}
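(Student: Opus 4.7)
My plan is to combine a static DAG counting observation with a charging argument across the incremental edge-insertion sequence.

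\textbf{Static bound.} I would first establish the following snapshot version: in any fixed DAG, for each node $y$ the number of ancestors $x$ with $|A(y)\setminus A(x)|\leq\tau$ is at most $\tau+1$. The argument is a topological-sort observation: any such $x$ satisfies $A(x)\subseteq A(y)$ and hence $|A(x)|\geq|A(y)|-\tau$; writing $a=|A(y)|$ and ordering the sub-DAG induced on $A(y)$ as $x_1,\dots,x_a$ in topological order gives $|A(x_i)|\leq i$, so the inequality $|A(x_i)|\geq a-\tau$ forces $i\geq a-\tau$. Including the symmetric descendant condition only tightens this, so each $y$ has $O(\tau)$ such ancestors in any static snapshot, giving $O(n\tau)$ pairs in any single DAG.

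\textbf{From static to dynamic.} For each sometime $\tau$-related pair $(x,y)$ I would fix a canonical witness time $t^*(x,y)$, namely the first time at which the pair is $\tau$-related. At $t^*$, the snapshot $G^{t^*}$ is a DAG in which $(x,y)$ is $\tau$-related, so the static bound applies to that snapshot. The catch is that different pairs can have their witness times scattered across the $m$ edge insertions, so naively summing the per-snapshot bound would give $O(n\tau\cdot m)$ rather than $O(n\tau)$.

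\textbf{Charging scheme.} To eliminate the $m$ factor, I would charge each sometime $\tau$-related pair to the edge insertion $(u,v)$ at time $t^*(x,y)$, splitting into two cases. In Case A, the path $x\to y$ first appears at $t^*$, which forces $x\in A(u)\cup\{u\}$ and $y\in D(v)\cup\{v\}$; applying the static bound at $G^{t^*}$ to each $y\in D(v)\cup\{v\}$ limits the number of newly created $\tau$-related pairs incident to a single edge. In Case B, the path $x\to y$ existed before $t^*$ but one of the set differences just dropped from more than $\tau$ to at most $\tau$, which requires $v\in A(x)\cup\{x\}$ (for $|A(y)\setminus A(x)|$ to shrink) or $u\in D(y)\cup\{y\}$ (for $|D(x)\setminus D(y)|$ to shrink). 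For Case B, I would introduce a potential function such as $\Phi=\sum_{(x,y):\,x\to y}\min(|A(y)\setminus A(x)|,\tau+1)+\min(|D(x)\setminus D(y)|,\tau+1)$ and bound the total number of Case B transitions by the total decrease of $\Phi$ across the sequence.

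\textbf{Main obstacle.} The principal difficulty I anticipate is Case B. Because the set differences are non-monotone (they shrink in exactly the situation $v\in A(x)\cup\{x\}$, where a fresh ancestor of $x$ was already an ancestor of $y$, and they grow when a new ancestor of $y$ arrives without joining $A(x)$), a single pair can enter and leave the $\tau$-related state multiple times. Correctly accounting for this requires an amortized analysis that bounds the total ``migration'' of nodes across the inclusion $A(x)\subseteq A(y)$ (and symmetrically for $D$), and it is here that the incremental (insertion-only) structure is essential. Once Cases A and B are individually bounded by $O(n\tau)$, adding them yields the desired total.
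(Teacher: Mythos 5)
The paper does not actually prove this theorem; it is cited as a black box from Bernstein--Chechik, so there is no in-paper argument to compare against. I will therefore assess your proposal on its own terms.

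Your static observation is correct and cleanly argued: in any fixed DAG, at most $\tau+1$ ancestors $x$ of a given $y$ satisfy $|A(y)\setminus A(x)|\leq\tau$, by the topological-rank argument you give. However, the passage from this per-snapshot $O(n\tau)$ bound to an $O(n\tau)$ bound over the entire insertion sequence has two genuine gaps, both of which you partially flag but neither of which you close. For Case~A, the bound you state is per edge insertion: each inserted edge $(u,v)$ creates at most $(\tau+1)\cdot|D(v)\cup\{v\}|$ new $\tau$-related pairs with a fresh path. Summing over $m$ insertions gives $O(mn\tau)$, not $O(n\tau)$; the fact that each pair has a single canonical witness time does not by itself reduce this, since a priori $\Theta(n\tau)$ distinct pairs could first appear at each of $\Theta(m/\tau)$ different insertions. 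You need a global argument showing that the witness events are not scattered this badly, and you don't supply one. For Case~B, you explicitly leave the argument unfinished, and the potential function you propose does not give the desired bound: $\Phi=\sum_{(x,y):\,x\to y}\min(|A(y)\setminus A(x)|,\tau+1)+\min(|D(x)\setminus D(y)|,\tau+1)$ is far from monotone, and its total downward variation (which is what would need to be $O(n\tau)$) is only controlled by its total upward variation. A single edge insertion can create $\Theta(n^2)$ new ordered pairs with a path, each contributing $\Theta(1)$ to $\Phi$, and each existing term can oscillate up to $\Theta(n)$ times as individual nodes enter $A(y)\setminus A(x)$ (when they become ancestors of $y$) and later leave it (when they become ancestors of $x$). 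So the total variation of $\Phi$ can be polynomially larger than $n\tau$, and bounding Case~B by ``total decrease of $\Phi$'' yields nothing useful. The non-monotonicity of the set differences that you correctly identify as the main obstacle is precisely what your potential fails to tame; the snapshot bound plus a naive charging scheme does not suffice, and an argument that genuinely exploits both the ancestor and descendant side of the definition jointly (rather than applying the static bound to the ancestor side alone and then hoping amortization covers the rest) is required.
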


Following~\cite{BernsteinC18}, we  maintain a partition of the node-set $V$ into subsets $\{V_{i,j}\}$ and the subgraphs $\{G_{i,j} = (V_{i,j}, E_{i,j})\}$ induced by these subsets of nodes.  We sample each node  $x \in V$ independently with probability $\log n/\tau$. Let $S \subseteq V$ denote the set of these sampled nodes.  The outcome of this random sampling gives rise to a partition of the node-set $V$ into $(|S|+1)^2$ many subsets $\{V_{i,j}\}$, where $i,j \in [0, |S|]$. This is  formally defined as follows.  For every node $x \in V$, let $A_S(x) = A(x) \cap S$ and $D_S(x) = D(x) \cap S$ respectively denote the set of ancestors and descendants of $x$ that have been sampled. Each subset $V_{i,j} \subseteq V$  is indexed by an ordered pair $(i, j)$ where $i \in [0, |S|]$ and $j \in [0, |S|]$.  A node $x \in V$ belongs to a subset $V_{i, j}$ iff $|A_S(x)| = i$ and $|D_S(x)| = j$. In words, the index $(i, j)$ of the subset $V_{i,j}$ specifies the number of sampled ancestors and sampled descendants each node $x \in V_{i,j}$ is allowed to have. It is easy to check that the subsets $\{V_{i,j}\}$ form a valid partition the node-set $V$. Let $E_{i, j} = \{ (x, y) \in E : x, y \in V_{i,j} \}$ denote the set of edges in $G$ whose both endpoints lie in $V_{i,j}$, and let $G_{i, j} = (V_{i,j}, E_{i,j})$ denote the subgraph of $G$ induced by the subset of nodes $V_{i,j}$.  We also define a total order $\prec^*$ on the subsets  $\{V_{i,j}\}$, where we have $V_{i,j} \prec^* V_{i',j'}$ iff either  $\{i < i'\}$ or $\{i = i' \text{ and } j > j'\}$.  We slightly abuse the notation by letting  $V(x)$ denote the unique subset $V_{i,j}$ which contains the  node $x \in V$. Consider any edge $(x, y) \in E$. If the two endpoints of the edge belong to two different subsets in the partition $\{V_{i,j}\}$, i.e., if $V(x) \neq V(y)$, then we refer to the edge $(x, y)$ as a {\em cross edge}. Otherwise, if $V(x) = V(y)$, then  the edge $(x,y)$ is an {\em internal edge}.

\begin{lemma}~\cite{BernsteinC18}
\label{lm:propofpartition}
Consider the partition of the node-set $V$ into subsets $\{V_{i,j}\}$, and the subgraphs $\{G_{i,j} = (V_{i,j}, E_{i,j})\}$ induced by these subsets of nodes. They satisfy the following three properties.
\begin{itemize}
\item If there is a cycle in  $G = (V, E)$, then every edge of that cycle is an internal edge. 
\item For every cross edge $(x, y) \in E$, we have $V(x) \prec^* V(y)$.  
\item Consider any two nodes $x, y \in V_{i,j}$ for some $i, j \in [0, |S|]$. If there is a path from $x$ to $y$ in the subgraph $G_{i,j}$, then with high probability the ordered pair $(x, y)$ is $\tau$-related in $G$. 
\end{itemize}
\end{lemma}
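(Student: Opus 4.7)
The plan is to prove the three properties in order, noting that the first two are deterministic and straightforward, while the third is where the random sampling (and hence the high-probability claim) genuinely enters.

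For the first property, I would observe that if $v_1, v_2, \ldots, v_k, v_1$ is a directed cycle in $G$, then every $v_r$ is reachable from every $v_s$ (and vice versa), so $A(v_r) = A(v_s)$ and $D(v_r) = D(v_s)$ for all $r,s$. Intersecting with $S$ yields $A_S(v_r) = A_S(v_s)$ and $D_S(v_r) = D_S(v_s)$, so all the cycle nodes land in the same subset $V_{i,j}$ of the partition, which means every edge of the cycle is internal.

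For the second property, consider any edge $(x,y) \in E$. Extending any path ending at $x$ by the edge $(x,y)$ gives a path ending at $y$, so $A(x) \subseteq A(y)$; symmetrically $D(y) \subseteq D(x)$. Intersecting with $S$, $A_S(x) \subseteq A_S(y)$ and $D_S(y) \subseteq D_S(x)$, so $|A_S(x)| \leq |A_S(y)|$ and $|D_S(x)| \geq |D_S(y)|$. Writing $V(x) = V_{i,j}$ and $V(y) = V_{i',j'}$, this gives $i \leq i'$ and $j \geq j'$. Since $(x,y)$ is a cross edge, $(i,j) \neq (i',j')$, so either $i < i'$ (and $V(x) \prec^* V(y)$ holds), or $i = i'$ in which case $j > j'$ (and $V(x) \prec^* V(y)$ holds again).

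For the third property, suppose $x, y \in V_{i,j}$ and there is a path $x = z_0 \to z_1 \to \cdots \to z_\ell = y$ entirely inside $G_{i,j}$. By the monotonicity argument used in the second property, $A_S(z_0) \subseteq A_S(z_1) \subseteq \cdots \subseteq A_S(z_\ell)$ and $D_S(z_0) \supseteq \cdots \supseteq D_S(z_\ell)$. But all $z_r$ lie in the same subset $V_{i,j}$, so $|A_S(z_r)| = i$ and $|D_S(z_r)| = j$ for all $r$; combined with the inclusions this forces $A_S(x) = A_S(y)$ and $D_S(x) = D_S(y)$. In particular, no node in $A(y) \setminus A(x)$ and no node in $D(x) \setminus D(y)$ is sampled into $S$.

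The main (and only nontrivial) step is now to argue that, with high probability, the two sets $A(y) \setminus A(x)$ and $D(x) \setminus D(y)$ have size at most $\tau$ simultaneously for every such ordered pair $(x,y)$ over the entire sequence of insertions. The key observation is that each node is sampled into $S$ independently with probability $p = c \log n / \tau$ for a sufficiently large constant $c$. If at some moment $|A(y) \setminus A(x)| > \tau$ for a pair with $A_S(x) = A_S(y)$, then more than $\tau$ specific nodes must all have failed the coin flip, an event of probability at most $(1-p)^\tau \leq e^{-c \log n} = n^{-c}$. The failure probability is then controlled by a union bound over at most $O(n^2)$ ordered pairs and at most $m \leq n^2$ time steps; choosing $c$ large enough (say $c \geq 6$) gives overall failure probability at most $n^{-\Omega(1)}$, and the symmetric argument handles $D(x) \setminus D(y)$. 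The hard part is exactly this union bound: one must fix a sufficiently large constant in the sampling probability so that the high-probability guarantee survives being taken over every candidate pair at every insertion, which is why the sampling rate is chosen as $\Theta(\log n / \tau)$ rather than merely $1/\tau$.
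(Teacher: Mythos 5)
Your proof is correct and follows essentially the same route as the paper: the first two properties are purely deterministic consequences of the monotonicity $A(x) \subseteq A(y)$ and $D(y) \subseteq D(x)$ along each edge $(x,y)$, and the third reduces to the observation that a set of size greater than $\tau$ is hit by the sample $S$ with high probability (the paper phrases this as a proof by contradiction for a fixed pair, but the probability calculation is the same). You are slightly more explicit than the paper about the union bound over all $O(n^2)$ ordered pairs and all $m$ time steps -- which the paper leaves implicit, and which does indeed require the constant in the $\Theta(\log n/\tau)$ sampling rate to be set large enough, exactly as you note.
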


\noindent The first property states that the graph $G$ contains a cycle iff some subgraph $G_{i,j}$ contains a cycle. Hence, in order to detect a cycle in $G$ it suffices to only consider the edges that belong to the induced subgraphs $\{G_{i,j}\}$. The second property, on the the other hand, implies that if the graph $G$ is acyclic, then it admits a topological ordering $\prec$ that is {\em consistent} with the total order $\prec^*$, meaning   that $x \prec y$ for all $x, y \in V$ with $V(x) \prec^* V(y)$. Finally, the last property states that whenever a subgraph $G_{i,j}$ contains  a path from a node $x$ to some other node $y$, with high probability  the ordered pair $(x,y)$ is $\tau$-related in the input graph $G$.

\subsection{The algorithm}
\label{sec:algo}

Since edges never get deleted from the graph $G$, our algorithm does not have to do anything once it detects a cycle (for the graph will continue to have a cycle after every edge-insertion in the future). Accordingly, we assume that the graph $G$ has  remained acyclic throughout the sequence of edge insertions till the present moment, and our goal is to check if  the next edge-insertion creates a cycle in $G$. Our algorithm maintains a topological ordering $\prec$ of the node-set $V$ in the graph $G$ that is {\em consistent} with the total order $\prec^*$ on the subsets of nodes $\{V_{i,j}\}$, as defined in Section~\ref{sec:algo}. Specifically, we maintain a {\em priority} $k(x)$ for every node $x \in V$, and for every two nodes $x, y \in V$ with $V(x) \prec^* V(y)$ we  ensure that $k(x) \prec k(y)$. As long as $G$ remains acyclic, the existence of such a topological ordering $\prec$ is guaranteed by Lemma~\ref{lm:propofpartition}.

\medskip
\noindent {\bf Data Structures.} We maintain the partition $\{V_{i,j}\}$ of the node-set $V$ and the subgraphs $\{G_{i,j} = (V_{i,j}, E_{i,j})\}$ induced by the subsets in this partition. We  use an {\em ordered list} data structure~\cite{DietzS87} on the node-set $V$ to implicitly maintain the priorities $\{k(x)\}$ associated with the topological ordering $\prec$. This data structure supports each of the following operations in $O(1)$ time. 
\begin{itemize}
\item INSERT-BEFORE($x, y$): This inserts the node $y$ just before the node $x$ in the topological ordering.
\item INSERT-AFTER($x, y$): This inserts the node $y$ just after the node $x$ in the topological ordering. 
\item DELETE($x$): This deletes the node $x$ from the existing topological ordering.
\item COMPARE($x, y$): If $k(x) \prec k(y)$, then this returns YES, otherwise this returns NO.
\end{itemize}

\noindent The implementation of our algorithm requires the creation of   two {\em dummy nodes} $x_{i,j}$ and $y_{i,j}$ in every subset $V_{i, j}$. We ensure that $k(x_{i,j}) \prec k(x) \prec k(y_{i,j})$ for all $x \in  V_{i, j}$. In words, the dummy node $x_{i,j}$ (resp. $y_{i,j}$) comes {\em first} (resp. {\em last}) in the topological order among all the nodes in $V_{i,j}$. Further, for all nodes $x \in V$ with $V(x) \prec V_{i,j}$ we have $k(x) \prec k(x_{i,j})$, and for all nodes $x \in V$ with $V_{i,j} \prec V(x)$ we have $k(y_{i,j}) \prec k(x)$.

\medskip
\noindent {\bf Handling the insertion of an edge $(u,v)$ in $G$.} By induction hypothesis, suppose that the graph $G$ currently does not contain any cycle and we are maintaining the topological ordering $\prec$ in $G$. At this point, an edge $(u,v)$ gets inserted into $G$. Our task now is to first figure out if the insertion of this edge creates a cycle, and if not, then to update the topological ordering $\prec$. We perform this task in four {\em phases}, as described below.
\begin{enumerate}
\item In phase I, we update the subgraphs $\{ G_{i,j}\}$. 
\item In phase II, we update the total order $\prec$  to make it consistent with the total order $\prec^*$. 
\item In phase III, we check if the edge-insertion  creates a cycle in $G$. See Section~\ref{sec:phase:cycle} for details.
\item If phase III fails to detect a cycle, then in phase IV we further update (if necessary) the total order $\prec$ so as to ensure that it is a topological order in the current graph $G$. See Section~\ref{sec:phase:final} for details.
\end{enumerate}
\noindent {\bf Remark.} We follow the framework developed in~\cite{BernsteinC18} while implementing Phase I and Phase II. We differ from~\cite{BernsteinC18} in Phase III and Phase IV, where we use the {\em balanced search} approach from~\cite{HaeuplerKMST12}.  

\medskip
\noindent {\bf Implementing Phase I.} In the first phase, we update the subgraphs $\{ G_{i,j}\}$ such that they satisfy the properties mentioned in Lemma~\ref{lm:propofpartition}. The  next lemma follows from~\cite{BernsteinC18}. The key idea is to maintain incremental single-source reachability data structures from each of the sampled nodes. Since at most $\tilde{O}(n/\tau)$ many nodes are sampled in expectation, and since each incremental single-source reachability data structure requires $\tilde{O}(m)$ total update time to handle $m$ edge insertions, we get the desired bound of $\tilde{O}(mn/\tau)$.

\begin{lemma}~\cite{BernsteinC18}
\label{lm:partition:time}
In phase I, the algorithm spends $\tilde{O}(m n/\tau)$ total update time in expectation.
\end{lemma}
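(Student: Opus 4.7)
The plan is to follow the approach of Bernstein and Chechik~\cite{BernsteinC18}: drive all the bookkeeping from a small collection of incremental single-source reachability (SSR) structures rooted at the sampled nodes. Concretely, for every $s\in S$ I would maintain one incremental SSR data structure in the forward graph $G$ (tracking $D(s)$, i.e.\ the descendants of $s$) and one in the reverse graph (tracking $A(s)$). Using a standard incremental SSR routine, each such structure processes the whole sequence of $m$ insertions in $\tilde{O}(m)$ total time. Since $\mathbb{E}[|S|]=\tilde{O}(n/\tau)$, the total time spent running these $2|S|$ structures is $\tilde{O}(m\cdot n/\tau)$ in expectation, which already accounts for the dominant term in the lemma.

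Next, I would explain how these structures drive the updates of the partition $\{V_{i,j}\}$ and its induced subgraphs $\{G_{i,j}\}$. The SSR structure rooted at $s$ explicitly reports each new node $x$ that becomes reachable from $s$; similarly, the reverse SSR reports each new node that can reach $s$. Whenever such an event occurs, I increment $|D_S(x)|$ or $|A_S(x)|$ by one and migrate $x$ from its current block $V_{i,j}$ to the new block $V_{i',j'}$. Each such migration requires walking over the in- and out-edges incident to $x$ and reclassifying each edge $(x,y)$ (or $(y,x)$) either as an internal edge of $V_{i',j'}$ or as a cross edge, updating the relevant $E_{i',j'}$ set accordingly. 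By Assumption~\ref{assume}, the degree of $x$ is $O(m/n)$, so each migration costs $O(m/n)$ time.

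To bound the total migration cost, I observe that $|A_S(x)|$ and $|D_S(x)|$ are both monotonically non-decreasing throughout the sequence of insertions, and each is bounded above by $|S|$. Hence every node $x$ is migrated at most $2|S|$ times in total. Summing over all $n$ nodes, the overall migration work is at most
\[
O\!\left(n \cdot |S| \cdot \frac{m}{n}\right) \;=\; O(m\,|S|) \;=\; \tilde{O}\!\left(\frac{mn}{\tau}\right)
\]
in expectation, matching the cost of running the SSR structures.

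The main technical point to verify is that the per-migration cost really is $O(m/n)$ rather than $O(\text{deg}(x))$ in the worst case; this is exactly what Assumption~\ref{assume} buys us, and without it the bound would degrade. The remaining subtleties are purely bookkeeping: checking that no single edge insertion is charged more than a constant number of times per SSR structure, and that the dummy endpoints $x_{i,j},y_{i,j}$ of the ordered-list representation are only touched when a block actually changes membership, both of which are standard.
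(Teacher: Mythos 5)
Your proof is correct and mirrors the paper's argument essentially step for step: run one (forward and one reverse) incremental single-source reachability structure per sampled node for $\tilde{O}(m|S|)$ total time, use monotonicity of $|A_S(x)|$ and $|D_S(x)|$ to bound block migrations at $2|S|$ per node, and charge each migration $O(m/n)$ via Assumption~\ref{assume} to reclassify incident edges, yielding $\tilde{O}(mn/\tau)$ in expectation since $\mathbf{E}[|S|]=\tilde{O}(n/\tau)$.
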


\noindent {\bf Implementing Phase II.}
In this phase we update the total order $\prec$ on the node-set $V$ in a certain manner. Let $G^-$ and $G^+$ respectively denote the graph $G$ just before and just after the insertion of the edge $(u,v)$. Similarly, for every node $x \in V$,  let $V^-(x)$ and $V^+(x)$ respectively denote the subset $V(x)$ just before and just after the insertion of the edge $(u, v)$. At the end of this phase, the following properties are satisfied. 
\begin{property}~\cite{BernsteinC18}
\label{cor:phase:order}
At the end of phase II the total order $\prec$ on $V$ is consistent with the total order $\prec^*$  on  $\{V_{i,j}\}$. Specifically, for any two nodes $x$ and $y$, if $V(x) \prec^* V(y)$, then we also have $k(x) \prec k(y)$. 
\end{property}

\begin{property}~\cite{BernsteinC18}
\label{lm:phase:order:correct}
At the end of phase II the total order $\prec$ on $V$  remains a valid topological ordering of $G^-$, where $G^-$ denotes the graph $G$ just before the insertion of the edge $(u,v)$.
\end{property}

The next lemma bounds the total time spent by the algorithm in phase II. 

\begin{lemma}~\cite{BernsteinC18}
\label{lm:phase:order:time}
The total time spent in phase II across all edge-insertions is at most $\tilde{O}(n^2/\tau)$.
\end{lemma}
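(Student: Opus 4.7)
\medskip
\noindent\textbf{Proof proposal for Lemma~\ref{lm:phase:order:time}.} The plan is to charge the work in Phase~II to node-subset transitions, and then bound the total number of such transitions across the entire sequence of edge-insertions. Concretely, I would argue that Phase~II only needs to touch a node $x$ when the index of the subset containing it changes from $V^-(x)=V_{i,j}$ to $V^+(x)=V_{i',j'}$, i.e.\ when either $|A_S(x)|$ or $|D_S(x)|$ strictly increases after the edge $(u,v)$ is inserted. Whenever $V(x)$ does not change, Property~\ref{cor:phase:order} and Property~\ref{lm:phase:order:correct} already held at the end of the previous Phase~II, so nothing needs to be done for $x$.

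First, I would bound the number of node-subset transitions. The quantities $|A_S(x)|$ and $|D_S(x)|$ are monotonically non-decreasing in an incremental graph, because the ancestor and descendant sets only grow over time, and both are upper bounded by $|S|$. Hence each node $x\in V$ experiences at most $2|S|$ index changes in total over the whole update sequence. Taking expectation over the sampling and using $\mathbb{E}[|S|]=O(n\log n/\tau)$, the total number of subset transitions, summed over all nodes and all edge insertions, is at most $2n\cdot\mathbb{E}[|S|]=\tilde O(n^2/\tau)$.

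Next, I would argue that a single transition can be processed in $O(1)$ time. When a node $x$ moves from $V_{i,j}$ into some $V_{i',j'}$, we can reposition it in the topological order by a constant number of operations on the ordered list of Dietz-Sleator: \textsc{Delete}$(x)$ followed by, say, \textsc{Insert-After}$(x_{i',j'},x)$ for the dummy head of the new subset, which by definition of the dummy nodes places $x$ strictly after every node $y$ with $V(y)\prec^* V_{i',j'}$ and strictly before every node $y$ with $V_{i',j'}\prec^* V(y)$. Combined with the fact that nodes whose subset did not change are untouched (and so their relative order inside their subset is preserved), this immediately yields Property~\ref{cor:phase:order}. Property~\ref{lm:phase:order:correct} then follows because, in $G^-$, by Lemma~\ref{lm:propofpartition} every edge $(x,y)$ with $V^+(x)\neq V^+(y)$ is a cross edge with $V^+(x)\prec^* V^+(y)$, and every internal edge had $x\prec y$ at the end of the previous iteration and the relative order inside a subset was not modified.

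The main thing to get right will be making sure the ``$O(1)$ work per transition'' claim is really implementable: we have to be able to, given the incremental single-source reachability structures of Phase~I, enumerate precisely the nodes whose subset has changed without paying extra. This can be arranged by having each of the $|S|$ single-source reachability data structures, upon adding a node $w$ to the reachable set of a sampled node $s$, emit a notification that increments the appropriate counter $|A_S(w)|$ or $|D_S(w)|$ and triggers a constant-time relocation of $w$ in the ordered list. Each such notification already accounts for the $\tilde O(1)$ amortized cost charged in Lemma~\ref{lm:partition:time}, and there are $\tilde O(n^2/\tau)$ of them, giving the claimed $\tilde O(n^2/\tau)$ total time bound for Phase~II. \qed
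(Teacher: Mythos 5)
Your high-level accounting matches the paper's: charge Phase~II work to subset transitions, bound the number of transitions by $2|S|$ per node using monotonicity of $|A_S(x)|$ and $|D_S(x)|$, and multiply by $\mathbf{E}[|S|]=\tilde O(n/\tau)$. The gap is in the sketched $O(1)$-per-transition repositioning and the correctness argument built on it. You relocate every transitioning node $x$ by \textsc{Insert-After}$(x_{i',j'},x)$, i.e.\ always at the \emph{front} of the destination subset, and then justify Property~\ref{lm:phase:order:correct} by asserting that ``the relative order inside a subset was not modified.'' That assertion is false under your own scheme. A node can move to a subset that is \emph{earlier} under $\prec^*$: if $x$ gains sampled descendants but no sampled ancestors, then $j$ increases while $i$ is fixed, and by definition $V_{i,j+1}\prec^* V_{i,j}$, i.e.\ $V^+(x)\prec^* V^-(x)$. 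Now take a node $y$ that was already in $V^+(x)$ (and stays there) with an edge $(y,x)$ in $G^-$; this is a legal cross edge of $G^-$, since $V^-(y)\prec^* V^-(x)$, and hence $k(y)\prec k(x)$ at the start of Phase~II. Your scheme puts $x$ in front of $y$, so afterwards $k(x)\prec k(y)$, and $\prec$ is no longer a topological ordering of $G^-$ --- exactly the property you need to hand off to Phase~III.

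The fix, and what the paper actually implements, is to route a transitioning node according to the \emph{direction} of its move: if $V^-(x)\prec^* V^+(x)$ insert $x$ immediately after the dummy head $x_{i',j'}$, whereas if $V^+(x)\prec^* V^-(x)$ insert $x$ immediately before the dummy tail $y_{i',j'}$. You additionally have to process the nodes entering a given subset in the same direction in a consistent priority order (e.g.\ extract in decreasing order of current $k(\cdot)$ for front-insertions and increasing order for back-insertions), so that the relative order among the relocated nodes themselves is preserved; otherwise two such nodes joined by an edge could get swapped. With those two adjustments the repositioning is still $\tilde O(1)$ per transition, the correctness properties hold, and your $\tilde O(n^2/\tau)$ bound goes through exactly as in the paper.
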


\begin{proof}(Sketch)
Let $C$ be a counter that keeps track of the number of times some node moves from one subset in the partition $\{V_{i, j}\}$ to another. Recall that  a node $x \in V$ belongs to a subset $V_{i,j}$ iff $|A_S(x)| = i$ and $|D_S(x)| = j$.  As more and more edges keep getting inserted in $G$, the node $x$ can never lose a sampled node in $S$ as its ancestor or descendent. Instead, both the sets $A_S(x)$ and $D_S(x)$ can only grow with the passage of time. Since $|A_S(v)|, |D_S(v)| \in [0, |S|]$, each node $x$ can move from one subset in the partition $\{V_{i,j}\}$ to another at most $2 \cdot |S|$ times. Thus,  we have $C \leq |V| \cdot 2 |S| = O(n |S|)$. Since $\mathbf{E}[|S|] = \tilde{O}(n/\tau)$, we conclude that $\mathbf{E}[C] = \tilde{O}(n^2/\tau)$. Now, phase II can be implemented in such a way that  a call is made to the ordered list data structure~\cite{DietzS87} only when some node moves from one subset of the partition $\{V_{i,j}\}$ to another. So the total time spent in phase II is at most  $C$, which happens to be $\tilde{O}(n^2/\tau)$ in expectation.
\end{proof}

\subsubsection{Phase III: Checking if the insertion of the edge $(u,v)$ creates a cycle.} 
\label{sec:phase:cycle}

Let $G^-$ and $G^+$ respectively denote the graph $G$  before and after the insertion of the edge $(u, v)$. Consider the total order $\prec$  on the set of nodes $V$ in the beginning of phase III (or, equivalently, at the end of phase II). Property~\ref{cor:phase:order} guarantees that  $\prec$ is consistent with the total order $\prec^*$ on $\{V_{i,j}\}$, and Property~\ref{lm:phase:order:correct} guarantees that $\prec$ is a valid topological ordering in $G^-$.  We will use these two properties  throughout the current phase. The pseudocodes of all the subroutines used in this phase appear in Section~\ref{sec:codes}.

In phase III, our goal is to determine if the insertion of the edge $(u, v)$ creates a cycle in $G$. Note that if  $k(u) \prec k(v)$, then   $\prec$ is also a valid topological ordering in $G^+$ as per Property~\ref{lm:phase:order:correct}, and clearly the insertion of the edge $(u,v)$ does not create a cycle.  The difficult case occurs when $k(v) \prec k(u)$. In this case, we first infer that $V(u) = V(u)$, meaning that both  $u$ and $v$ belong to the same subset in the partition $\{V_{i,j}\}$ at the end of phase II. This is because of the following reason. The total order $\prec$ is consistent with the total order $\prec^*$ as per Property~\ref{cor:phase:order}. Accordingly, since $k(v) \prec k(u)$, we conclude that if $V(v) \neq V(u)$ then  $V(v) \prec^* V(u)$. But this would contradict Lemma~\ref{lm:propofpartition} as there is a cross edge from $u$ to $v$.

To summarize, for the rest of this section we assume that $k(v) \prec k(u)$ and $V(v) = V(u) = V_{i,j}$ for some $i, j \in [0, |S|]$. We have to check if there is a path $P_{v,u}$ from $v$ to $u$ in $G^-$. Along with the edge $(u,v)$, such a path $P_{v,u}$ will define a cycle in $G^+$. Hence, by Lemma~\ref{lm:propofpartition}, every edge $e$ in such a path $P_{v, u}$ will belong to the subgraph $G_{i,j} = (V_{i,j}, E_{i,j})$. Thus, from now on our task is to determine if there is a path $P_{v,u}$ from $v$ to $u$ in $G_{i,j}$. We perform this task by calling the subroutine SEARCH($u,v$) described below.

\medskip
\noindent {\bf SEARCH($u,v$).} 
We  conduct two searches in order to find the path $P_{v, u}$: A {\em forward search} from $v$, and a {\em backward search} from $u$. Specifically, let $F$ and $B$ respectively denote the set of nodes visited by the forward search and the backward search till now. We always ensure that $F \cap B = \emptyset$. A node in $F$ (resp. $B$) is referred to as a forward (resp. backward) node.   Every forward node $x \in F$ is reachable from the node $v$ in $G_{i,j}$, whereas the node $u$ is reachable from every backward node $x \in B$ in $G_{i,j}$. We further classify each of the sets $F$ and $B$ into two subsets: $F_a \subseteq F$, $F_d = F \setminus F_a$ and  $B_a \subseteq B$, $B_d = B \setminus B_a$. The nodes in $F_a$ and $B_a$ are called {\em alive}, whereas the nodes in $F_d$ and $B_d$ are called {\em dead}. Intuitively, the dead nodes have already been {\em explored} by the search, whereas the alive nodes have not yet been explored.  When the subroutine begins execution, we have  $F_a = \{v\}$ and $B_a = \{u \}$. The following property is always satisfied.

\begin{property}
\label{prop:forward:back}
Every node $x \in F_a \cup F_d$ is reachable from the node $v$ in $G_{i,j}$, and the node $u$ is reachable from every node $y \in B_a \cup B_d$ in $G_{i,j}$. The sets $F_a, F_d, B_a$ and $B_d$ are pairwise mutually exclusive.
\end{property}

A simple strategy for exploring a  forward and alive node $x \in F_a$ is as follows. For each of its outgoing edges $(x, y) \in E_{i,j}$, we check if $y \in B$. If yes, then we have detected a path from $v$ to $u$: This path goes from $v$ to $x$ (this is possible since $x$ is a forward node), follows the edge $(x, y)$, and then from $y$ it goes to $u$ (this is possible since $y$ is a backward node).  Accordingly, we stop and report that the graph $G^+$ contains a cycle. In contrast, if $y \notin B$ and $y \notin F$, then we insert $y$ into the set $F_a$ (and $F$), so that $y$ becomes a forward and alive node  which will be explored in future. In the end, we move the node $x$ from the set $F_a$ to the set $F_d$. We refer to the subroutine that explores a node $x \in F_a$ as {\bf EXPLORE-FORWARD($x$)}.

Analogously, we explore a  backward and alive node $x \in B_a$ is as follows. For each of its incoming edges $(y, x) \in E_{i,j}$, we check if $y \in F$. If yes, then there is a path from $v$ to $u$: This path goes from $v$ to $y$ (this is possible since $y$ is a forward node), follows the edge $(y, x)$, and then from $x$ it goes to $u$ (this is possible since $x$ is a backward node).  Accordingly, we stop and report that the graph $G^+$ contains a cycle. In contrast, if $y \notin F$ and $y \notin B$, then we insert $y$ into the set $B_a$ (and $B$), so that $y$ becomes a backward and alive node  which will be explored in future. In the end, we move the node $x$ from the set $B_a$ to the set $B_d$. We refer to the subroutine that explores a node $x \in B_a$ as {\bf EXPLORE-BACKWARD($x$)}. 


\begin{property}
\label{prop:node:move}
Once a node $x \in F_a$ (resp. $x \in B_a$) has been explored, we delete it from the set $F_a$ (resp. $B_a$) and insert it into the set $F_d$ (resp. $B_d$).
\end{property}

While exploring a node $x \in F_a$ (resp. $x \in B_a$), we ensure that all its outgoing (resp. incoming) neighbors are included in $F$ (resp. $B$). This leads to the following important corollary.
\begin{corollary}
\label{cor:alive:dead}
Consider any edge $(x, y) \in E_{i,j}$. At any point in time, if $x \in F_d$, then at that time we  also have $y \in F_a \cup F_d$.  Similarly, at any point in time, if $y \in B_d$, then at that time we also have $x \in B_a \cup B_d$.
\end{corollary}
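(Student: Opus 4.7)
The plan is to deduce the corollary from the bookkeeping performed inside \textsc{Explore-Forward} and \textsc{Explore-Backward}, combined with the monotonicity implicit in Property~\ref{prop:node:move}. Consider the forward statement first. Fix an edge $(x,y) \in E_{i,j}$ and a moment at which $x \in F_d$. Since $x$ started outside $F_d$ and entered it exclusively via the last line of some completed call to \textsc{Explore-Forward}($x$), I would first show that at the instant $x$ transitions from $F_a$ to $F_d$, the outgoing neighbor $y$ must already lie in $F_a \cup F_d$; then I would argue that membership in $F_a \cup F_d$ is preserved at all later times.

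For the first step, note that \textsc{Explore-Forward}($x$) sweeps over every outgoing edge, including $(x,y)$, before promoting $x$ to $F_d$. At the moment $(x,y)$ is inspected there are three cases: $y \in B$, in which case the entire search halts and $x$ never reaches $F_d$, yielding a contradiction with $x \in F_d$; $y$ is already in $F$; or $y$ is freshly inserted into $F_a$. Only the latter two cases are compatible with $x \in F_d$, and both leave $y \in F_a \cup F_d$ at the instant $x$ is moved.

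For the second step, I need monotonicity of $F_a \cup F_d$: once $y$ lies there it remains there. Property~\ref{prop:node:move} shows that the only allowed transition within $F$ is from $F_a$ to $F_d$, so no node ever exits $F$ once inside. The only remaining worry is whether $y$ could later be absorbed into $B$, violating the pairwise disjointness guaranteed by Property~\ref{prop:forward:back}. But \textsc{Explore-Backward} inserts a predecessor into $B_a$ only after verifying it is not already in $F$; if such a predecessor is found in $F$, the search halts immediately. Hence $y$ cannot migrate to $B$, and $y \in F_a \cup F_d$ at the present time, as required.

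The backward statement is completely symmetric: if $y \in B_d$ at the current time, then \textsc{Explore-Backward}($y$) must have previously completed without halting, so while processing the incoming edge $(x,y)$ it either found $x \in F$ (impossible by the same halting argument) or guaranteed $x \in B_a \cup B_d$. Monotonicity of $B_a \cup B_d$ and the disjointness $F \cap B = \emptyset$ then propagate the membership to the present moment. I anticipate no real obstacles; the entire argument is a direct unrolling of the subroutine semantics, and the only point worth checking carefully is the disjointness invariant, which is handled uniformly by the halting rule shared by both exploration subroutines.
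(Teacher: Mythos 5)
Your proposal is correct and takes essentially the same approach as the paper, which treats this corollary as an immediate consequence of the subroutine semantics: \textsc{Explore-Forward}($x$) scans every outgoing edge $(x,x')$ in $E_{i,j}$ and either halts (cycle found) or puts $x'$ into $F$, so once $x$ is promoted to $F_d$ all its out-neighbors in $G_{i,j}$ lie in $F_a \cup F_d$; the symmetric reasoning handles the backward case. You additionally spell out the monotonicity of $F_a \cup F_d$ and why $y$ cannot later be absorbed into $B$, which the paper leaves implicit; both points are sound. One small wobble worth noting: the pseudocode in Figure~\ref{fig:forward} performs $F_d \gets F_d \cup \{x\}$ in step~1, \emph{before} the loop over outgoing edges, so your statement that the sweep happens before $x$ is promoted matches the prose description (``In the end, we move the node $x$...'') rather than the code; consequently the claim ``$x$ never reaches $F_d$'' when a cycle is detected is not literally true as coded. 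This is harmless, since the corollary is only invoked at the granularity of the \textsc{While} loop in \textsc{Search} and the cycle-detection branch terminates the entire procedure, but it is the one spot where your narrative diverges from the algorithm as written.
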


Two natural questions arise at this point. First, how frequently do we explore forward nodes compared to exploring backward nodes? Second, suppose that we are going to explore a forward (resp. backward) node at the present moment. Then how do we select the node $x$ from the set $F_a$ (resp. $B_a$) that has to be explored? Below, we  state two crucial properties of our algorithm that address these two questions. 

\begin{property}(Balanced Search)
\label{prop:balance}
We  alternate between calls to  EXPLORE-FORWARD(.) and  EXPLORE-BACKWARD(.).  This ensures that $|B_d| - 1 \leq |F_d| \leq  |B_d|+1$  at every point in time. In other words,  every forward-exploration step is followed by a backward-exploration step and vice versa.
\end{property}

\begin{property}(Ordered Search)
\label{prop:order}
While deciding which node in $F_a$ to explore next, we always pick the node $x \in F_a$ that has {\em minimum} priority $k(x)$. Thus,  we ensure that the subroutine EXPLORE-FORWARD($x$) is only called on the node $x$ that appears before every other node in $F_a$ in the total ordering $\prec$. In contrast, while deciding which node in $B_a$ to explore next, we always pick the node $y \in B_a$ that has {\em maximum} priority $k(y)$. Thus,  we ensure that the subroutine EXPLORE-BACKWARD($y$) is only called on the node $x$ that appears {\em after} every other node in $B_a$ in the total ordering $\prec$.
\end{property}

An immediate consequence of Property~\ref{prop:order} is that there is no {\em gap} in the set $F_d$ as far as reachability from the node $v$ is concerned. To be more specific, consider the sequence of nodes in $G_{i,j}$ that are reachable from $v$ in increasing order of their positions in the total order $\prec$. This sequence starts with $v$. The set of nodes belonging to $F_d$ always form a prefix of this sequence. This observation is formally stated below.

\begin{corollary}
\label{cor:order:forward}
Consider any two nodes $x, y \in V_{i,j}$ such that $k(x) \prec k(y)$ and there is a path in $G_{i,j}$ from $v$ to each of these two nodes. At any point in time, if $y \in F_d$, then we must also have $x \in F_d$. 
\end{corollary}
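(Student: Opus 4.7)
The plan is a proof by contradiction using an extremal time argument. Suppose the corollary fails, and let $T$ be the \emph{earliest} moment during the execution of SEARCH$(u,v)$ at which the statement is violated: there exist $x,y \in V_{i,j}$, each reachable from $v$ in $G_{i,j}$, with $k(x) \prec k(y)$, $y \in F_d$ at time $T$, but $x \notin F_d$ at time $T$. Because $F_d$ only grows over time, and only via a call to EXPLORE-FORWARD (Property~\ref{prop:node:move}), the step occurring at $T$ must be exactly EXPLORE-FORWARD$(y)$.

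Immediately before $T$ we have $y \in F_a$, and by the Ordered Search rule (Property~\ref{prop:order}) $y$ minimises $k(\cdot)$ over $F_a$. Hence every $z \in F_a$ just before $T$ satisfies $k(z) \succeq k(y) \succ k(x)$, so $x \notin F_a$ just before $T$; by the minimality of $T$ we also have $x \notin F_d$ at that instant, so $x \notin F_a \cup F_d$ just before $T$. I would then fix any path $P = w_0, w_1, \dots, w_\ell$ from $w_0 = v$ to $w_\ell = x$ inside $G_{i,j}$, and let $j$ be the largest index with $w_j \in F_a \cup F_d$; this is well-defined since $v \in F_a \cup F_d$, and satisfies $j < \ell$ because $x \notin F_a \cup F_d$.

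The next step is a case split on the status of $w_j$. If $w_j \in F_d$, then Corollary~\ref{cor:alive:dead} applied to the edge $(w_j, w_{j+1}) \in E_{i,j}$ forces $w_{j+1} \in F_a \cup F_d$, contradicting the maximality of $j$. If instead $w_j \in F_a$, then Property~\ref{prop:order} as above gives $k(w_j) \succeq k(y) \succ k(x)$; on the other hand the subpath $w_j \to w_{j+1} \to \cdots \to w_\ell = x$ lies inside $G_{i,j} \subseteq G^-$, and since Property~\ref{lm:phase:order:correct} guarantees that $\prec$ is a topological ordering of $G^-$ at the start of Phase III — and Phase III is read-only with respect to $\prec$ — this path forces $k(w_j) \prec k(x)$, a contradiction.

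The main obstacle will be the second case: converting the existence of a path in $G_{i,j}$ into the strict priority inequality $k(w_j) \prec k(x)$. This works only because $\prec$ is still a valid topological ordering of $G^-$ while SEARCH is running, a delicate point since any reordering of $\prec$ is deferred to Phase IV. Once that point is nailed down, the rest of the argument is routine bookkeeping on the four sets $F_a, F_d, B_a, B_d$.
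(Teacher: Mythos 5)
Your proof is correct in its overall structure and, since the paper states Corollary~\ref{cor:order:forward} as an ``immediate consequence of Property~\ref{prop:order}'' without spelling out an argument, it usefully fills in what the paper leaves implicit: the extremal-time reduction to the event \textsc{Explore-Forward}$(y)$, the ``last frontier node $w_j$ on a path'' device, and the two-way case split on whether $w_j$ is dead or alive are all the right ingredients, and Corollary~\ref{cor:alive:dead} plus Properties~\ref{prop:order} and~\ref{lm:phase:order:correct} are the right tools.

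There is, however, one inaccuracy you should repair. You assert that $G_{i,j} \subseteq G^-$, but this is false: Phase~I runs before Phase~III and updates $G_{i,j}$ to include the freshly inserted edge $(u,v)$, which is an internal edge of $V_{i,j}$ precisely in the case under consideration. So $G_{i,j}$ is a subgraph of $G^+$, not of $G^-$. The fix is to take $P$ to be a \emph{simple} path from $v$ to $x$ in $G_{i,j}$ (you never say ``simple''). A simple path originating at $v$ cannot traverse the edge $(u,v)$, since that edge has $v$ as its head and a simple path cannot revisit its starting node. Hence $P$, and in particular the subpath $w_j \to \cdots \to w_\ell = x$, lies entirely in $G^-$, and since $j < \ell$ and the nodes of a simple path are distinct, this subpath has positive length; Property~\ref{lm:phase:order:correct} (together with the fact that $\prec$ is untouched throughout Phase~III) then gives the strict inequality $k(w_j) \prec k(x)$ you need. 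Taking $P$ simple also ensures $v = w_0 \in F_a \cup F_d$ is invoked correctly and that $w_j \neq x$. A smaller cosmetic point: ``$x \notin F_d$ just before $T$'' follows directly from monotonicity of $F_d$ together with $x \notin F_d$ at time $T$, not from minimality of $T$ (minimality is what pins down the event at $T$); this does not affect the argument's validity.
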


Corollary~\ref{cor:order:backward} is a mirror image of Corollary~\ref{cor:order:forward}, albeit from the perspective of the node $u$.

\begin{corollary}
\label{cor:order:backward}
Consider any two nodes $x, y \in V_{i,j}$ such that $k(x) \prec k(y)$ and there is a path in $G_{i,j}$ from each of these two nodes to $u$. At any point in time, if $x \in B_d$, then we must also have $y \in B_d$. 
\end{corollary}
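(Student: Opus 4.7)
\medskip
\noindent\textbf{Proof plan for Corollary~\ref{cor:order:backward}.}
The plan is to prove this by contradiction, in a manner symmetric to what one would do for Corollary~\ref{cor:order:forward}. Assume the statement fails, and consider the earliest moment in time at which there exist two nodes $x, y \in V_{i,j}$ with $k(x) \prec k(y)$, both having paths to $u$ in $G_{i,j}$, such that $x \in B_d$ but $y \notin B_d$. Since this is the first such moment, it must coincide with an invocation EXPLORE-BACKWARD($x$) that moves $x$ from $B_a$ into $B_d$ (using Property~\ref{prop:node:move}). All other membership relations are unchanged at this moment, so in particular $y \notin B_d$ still holds immediately before this call.

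Next I would argue that $y$ is not in $B$ at all at that moment, i.e.\ $y \notin B_a \cup B_d$. Indeed, by Property~\ref{prop:order} the call EXPLORE-BACKWARD($x$) is made on the node of maximum priority in $B_a$; if we had $y \in B_a$, then $k(y) \succ k(x)$ would force the backward search to pick $y$ instead of $x$, contradicting the choice of $x$. So combining $y \notin B_a$ with $y \notin B_d$ gives $y \notin B$.

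Now I would use the hypothesized path $y = w_0 \to w_1 \to \cdots \to w_k = u$ in $G_{i,j}$ to derive a contradiction. The starting node $w_0 = y$ lies outside $B$, while the terminal node $w_k = u$ lies in $B_a \cup B_d$ (the node $u$ starts in $B_a$ and can only migrate to $B_d$). Hence there is a smallest index $j$ such that $w_j \notin B_a \cup B_d$ but $w_{j+1} \in B_a \cup B_d$. By the backward half of Corollary~\ref{cor:alive:dead}, if $w_{j+1}$ were in $B_d$ then its in-neighbor $w_j$ would be forced to be in $B_a \cup B_d$, which is not the case. Therefore $w_{j+1} \in B_a$, which by Property~\ref{prop:order} gives $k(w_{j+1}) \preceq k(x)$. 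On the other hand, the path $y = w_0 \to \cdots \to w_{j+1}$ lives inside $G_{i,j} \subseteq G^-$, and by Property~\ref{lm:phase:order:correct} the total order $\prec$ is a valid topological order of $G^-$, so $k(y) \prec k(w_{j+1})$. Chaining these inequalities yields $k(y) \prec k(w_{j+1}) \preceq k(x)$, contradicting $k(x) \prec k(y)$.

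The only subtle step is the last one: one has to carefully exploit the fact that $\prec$ is a topological order of the pre-insertion graph $G^-$ (Property~\ref{lm:phase:order:correct}) to compare the priorities of two nodes along an arbitrary path inside $G_{i,j}$, rather than just along the edges directly explored by the search. Once this is in place, everything else is a bookkeeping argument on the forward/backward alive/dead sets, so I expect no further obstacles.
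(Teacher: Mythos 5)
Your proof is correct and formalizes the intuition the paper leaves implicit: the paper asserts Corollary~\ref{cor:order:backward} as the ``mirror image'' of Corollary~\ref{cor:order:forward}, which it in turn calls an ``immediate consequence'' of Property~\ref{prop:order} (the ordered-search rule), without spelling out the contradiction via Corollary~\ref{cor:alive:dead} and the topological order of $G^-$ that you supply. One small imprecision worth fixing: the claim that ``$G_{i,j} \subseteq G^-$'' is not literally true, since after phase~I the subgraph $G_{i,j}$ already contains the freshly inserted edge $(u,v)$; however, the subpath $w_0 \to \cdots \to w_{j+1}$ you exhibit cannot traverse $(u,v)$ (because $u \in B_a \cup B_d$ while $w_0, \ldots, w_j \notin B_a \cup B_d$), so that subpath does lie in $G^-$ and Property~\ref{lm:phase:order:correct} applies as you intend.
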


To complete the description of the subroutine SEARCH($u,v$), we now specify six {\em terminating conditions}. Whenever one of these conditions is satisfied, the subroutine does not need to run any further because it already  knows whether or not the insertion of the edge $(u, v)$ creates a cycle in the graph $G$.

\smallskip
\noindent
{\bf (C1)} {\em  $F_a = \emptyset$}. 

\smallskip
\noindent
In this case, we conclude that the graph $G$ remains acyclic even after the insertion of the edge $(u,v)$. We now justify this conclusion.  Recall that if the insertion of the edge $(u,v)$ creates a cycle, then that cycle must contain a path $P_{v, u}$ from $v$ to $u$ in $G_{i,j}$. When the subroutine SEARCH($u,v$) begins execution, we have $F_a = \{v\}$ and $B_a = \{u\}$. Hence, Property~\ref{prop:node:move} implies that  at the present moment $v \in F_d \cup F_a$ and $u \in B_d \cup B_a$. Since the sets $F_d, F_a, B_d, B_a$ are pairwise mutually exclusive (see Property~\ref{prop:forward:back}) and  $F_a = \emptyset$, we currently have $v \in F_d$ and $u \notin F_d$. Armed with this observation, we consider the path $P_{vu}$ from $v$ to $u$, and let $x$ be the first node in this path that does not belong to  $F_d$. Let $y$ denote the node that appears just before $x$ in this path. Then by definition, we have $y \in F_d$ and $(y, x) \in E_{i,j}$. Now, applying Corollary~\ref{cor:alive:dead}, we get $x \in F_d \cup F_a = F_d$, which leads to a contradiction.

\smallskip
\noindent
{\bf (C2)} {\em  $B_a = \emptyset$}.

\smallskip
\noindent
This is analogous to the condition (C1) above, and we conclude that  $G$ remains acyclic in this case. 

\smallskip
\noindent 
 {\bf (C3)} {\em While exploring a node $x \in F_a$, we discover that $x$ has an outgoing edge to a node $x' \in B_a \cup B_d$.}

\smallskip
\noindent
Here,  we conclude that the insertion of the edge $(u,v)$ creates a cycle. 
We now justify this conclusion. Since $x \in F_a$, Property~\ref{prop:forward:back}  implies that there is a path $P_{v, x}$ from $v$ to $x$. Since $x' \in B_a \cup B_d$, Property~\ref{prop:forward:back} also implies that there is a path $P_{x', u}$ from $x'$ to $u$. We get a cycle by combining the path $P_{v, x}$, the edge $(x, x')$,  the path $P_{x', u}$ and the edge $(u,v)$.

\smallskip
\noindent
{\bf (C4)} {\em While exploring a node $y \in B_a$, we discover that $y$ has an incoming edge from a node $y' \in F_a \cup F_d$.}

\smallskip
\noindent
Similar to condition (C3), in this case we conclude that the insertion of the edge $(u,v)$ creates a cycle. 

\smallskip
\noindent
 {\bf (C5)} {\em  $\min_{x \in F_a} k(x) \succ \min_{y \in B_d} k(y)$.}

\smallskip
\noindent
If this happens, then we conclude that the graph $G$ remains acyclic even after the insertion of the edge $(u, v)$. 
We now justify this conclusion. Suppose that the insertion of the edge $(u,v)$ creates a cycle. Such a cycle defines a path $P_{v, u}$ from $v$ to $u$. Below, we make a  claim that will be proved later on.

\begin{claim}
\label{cl:path}
The path $P_{v, u}$ contains at least one node $x$ from the set $F_a$.
\end{claim}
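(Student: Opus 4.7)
The plan is to prove the claim by contradiction: assume $P_{v,u}$ contains no node of $F_a$, and show that the entire path must then lie in $F_d$, which forces $u \in F_d$ and contradicts the pairwise disjointness in Property~\ref{prop:forward:back} (since $u$ has been in $B_a \cup B_d$ ever since it was placed in $B_a$ at the start of SEARCH$(u,v)$).

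First I would handle the base case. The node $v$ is the first node of $P_{v,u}$ and was initialized into $F_a$; by Property~\ref{prop:node:move}, $v$ can only move from $F_a$ to $F_d$, so $v \in F_a \cup F_d$. Under the contradiction hypothesis, $v \notin F_a$, so $v \in F_d$. I would then carry out an induction along $P_{v,u}$: assuming a node $x$ on the path lies in $F_d$, I need to show that the next node $y$ on the path (reached via an edge $(x,y) \in E_{i,j}$) also lies in $F_d$. The crucial observation is that since $x \in F_d$, EXPLORE-FORWARD$(x)$ has already been executed, and at that moment it considered the outgoing edge $(x,y)$. If $y$ had been in $B_a \cup B_d$ at that moment, terminating condition (C3) would have returned YES; but we are now evaluating (C5), so the search has not yet terminated and (C3) did not fire. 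Therefore $y$ was outside $B$ when $x$ was explored, and the pseudocode of EXPLORE-FORWARD$(x)$ ensures $y \in F_a \cup F_d$ from that moment onward (nodes never leave $F$). The contradiction hypothesis rules out $y \in F_a$, so $y \in F_d$, completing the induction.

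The induction yields $u \in F_d$, which contradicts $u \in B_a \cup B_d$ via the disjointness asserted in Property~\ref{prop:forward:back}, finishing the proof. The only non-routine step in this plan is the careful appeal to the fact that (C3) has not yet triggered, which is what lets us pass outgoing neighbors of $F_d$-nodes through the exploration without their escaping into $B$; everything else is straightforward bookkeeping through the four sets $F_a, F_d, B_a, B_d$. Note that the hypothesis of (C5) itself is not needed for the claim; (C5) will only enter the picture afterward, when the promised node in $F_a$ is combined with the sampled node in $B_d$ that witnesses (C5) to derive the actual contradiction in the enclosing argument.
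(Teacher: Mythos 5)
Your proof is correct and takes essentially the same approach as the paper: both arguments walk along $P_{v,u}$ starting from $v\in F_d$ and use the fact that an $F_d$-node's outgoing neighbour must lie in $F_a\cup F_d$ (the paper's Corollary~\ref{cor:alive:dead}, which you re-derive inline by observing that condition (C3) has not fired) to locate a node in $F_a$. The paper phrases this directly (take the first node on the path outside $F_d$) while you phrase it contrapositively (if none are in $F_a$, induct to get $u\in F_d$, a contradiction); these are the same argument.
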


\noindent Armed with Claim~\ref{cl:path}, we consider any node $x'$ in the path $P_{v,u}$ that belongs to the set $F_a$. Let $y' = \arg \min_{y \in B_d} \{ k(y) \}$.  Note that $k(y') = \min_{y \in B_a} k(y) \prec \min_{x \in F_a} k(x) \preceq  k(x')$. In particular, we infer that $k(y') \prec k(x')$. As  $y' \in B_d$, the node $u$ is reachable from  $y'$ (see Property~\ref{prop:forward:back}). Similarly, as the node $x'$ lies on the path $P_{v, u}$, the node $u$ is also reachable from $x'$. Since the node $u$ is reachable from both  the nodes $y' \in B_d$ and $x'$,  and since $k(y') \prec k(x)$, Corollary~\ref{cor:order:backward} implies that $x' \in B_d$. This leads to a contradiction, for $x' \in F_a$ and $F_a \cap B_d = \emptyset$ (see Property~\ref{prop:forward:back}). Hence, our initial assumption was wrong, and  the insertion of the edge $(u,v)$ does not create a cycle in $G$. It now remains to prove Claim~\ref{cl:path}.

\smallskip
\noindent {\em Proof of Claim~\ref{cl:path}.}
Applying the same argument used to justify condition (C1), we first observation that $v \in F_a \cup F_d$ and $u \in B_a \cup B_d$. As the subsets $F_a, F_d, B_a$ and $B_d$ are pairwise mutually exclusive (see Property~\ref{prop:forward:back}), we have $u \notin F_a \cup F_d$. Note that  if $v \in F_a$, then there is nothing further to prove.  Accordingly, for the rest of the proof we consider the scenario where $v \in F_d$. Since $v \in F_d$ and $u \notin F_d$, there has to be at least one node in the path $P_{v,u}$ that does not belong to the set $F_d$.  Let $x$ be the first such node, and let $y$ be the node that appears just before $x$ in the path $P_{v, u}$. Thus, we have $y \in F_d$, $x \notin F_d$ and $(y, x) \in E_{i,j}$. Hence, Corollary~\ref{cor:alive:dead} implies that $x \in F_a$. So the path $P_{v, u}$ contains some node from the set $F_a$.
\qed

\smallskip
\noindent
{\bf (C6)} {\em  $\max_{y \in B_a} k(y) \prec \max_{x \in F_d} k(x)$.}

\smallskip
\noindent
Similar to condition (C5), here we conclude that the graph $G$ remains acyclic.

\medskip
We now state an important corollary that follows from our stopping  conditions (C5) and (C6). It states that every node $x \in F_d$ appears before every node $y \in B_d$ in the total order $\prec$ in phase III.

\begin{corollary}
\label{cor:order:terminate}
We always have $\max_{x \in F_d} \{k(x)\} \prec \min_{y \in B_d} \{k(y)\}$.
\end{corollary}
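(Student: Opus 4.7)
My plan is to prove the invariant $\max_{x \in F_d} k(x) \prec \min_{y \in B_d} k(y)$ by induction on the number of completed exploration steps performed during the current call to SEARCH$(u,v)$. The base case is trivial: before the first exploration step, $F_d = B_d = \emptyset$ and the statement holds vacuously. For the inductive step, the only moments at which the two sets change are inside EXPLORE-FORWARD or EXPLORE-BACKWARD, so I will handle these two cases separately. Throughout, I will use the fact that priorities are distinct across distinct nodes together with Property~\ref{prop:forward:back} (pairwise disjointness of $F_a,F_d,B_a,B_d$) to upgrade any weak inequality to a strict one.

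The forward case is the heart of the argument. Suppose EXPLORE-FORWARD$(x)$ is about to be executed. By Property~\ref{prop:order}, the node $x$ satisfies $k(x) = \min_{x' \in F_a} k(x')$. Since the subroutine has not already terminated via condition (C5), we have $\min_{x' \in F_a} k(x') \preceq \min_{y \in B_d} k(y)$ (the case $B_d = \emptyset$ leaves the invariant vacuous); combined with the fact that $x \in F_a$ is a distinct node from any element of $B_d$ (so their priorities differ), this strengthens to $k(x) \prec \min_{y \in B_d} k(y)$. The only effect of the exploration on $F_d$ and $B_d$ that is relevant here is the insertion of $x$ into $F_d$ (Property~\ref{prop:node:move}); either the subroutine terminates via (C3) in which case we stop caring, or the set $B_d$ is unchanged. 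Using the inductive hypothesis $\max_{x' \in F_d^{\text{old}}} k(x') \prec \min_{y \in B_d} k(y)$ and the strict inequality just derived, the new maximum $\max\bigl(\max_{x' \in F_d^{\text{old}}} k(x'),\, k(x)\bigr)$ remains strictly below $\min_{y \in B_d} k(y)$, preserving the invariant.

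The backward case is entirely symmetric, with EXPLORE-BACKWARD$(y)$ inserting a node $y$ of maximum priority in $B_a$ into $B_d$, and condition (C6) — together with Property~\ref{prop:order} — guaranteeing $\max_{x \in F_d} k(x) \prec k(y)$, so the new $\min_{y' \in B_d} k(y')$ still exceeds $\max_{x \in F_d} k(x)$. The only subtle point, which is my main (modest) obstacle, is checking that the inequalities extracted from the negations of (C5) and (C6) are strict rather than weak; this is not automatic from the termination conditions themselves but follows because the node being moved lies in an alive set that is disjoint from the relevant dead set, so distinctness of priorities forces strict separation. With that taken care of, the induction goes through and the corollary follows.
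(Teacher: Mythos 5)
Your proof is correct and essentially the same as the paper's: the paper phrases it as a contradiction by considering the first moment the invariant is violated, while you phrase it as a direct induction on exploration steps, but the core step in both is that the negations of termination conditions (C5)/(C6) bound the priority of the node being moved into $F_d$ or $B_d$ on the correct side. Your explicit attention to strictness via disjointness of $F_a, F_d, B_a, B_d$ and distinctness of priorities is a small but fair point that the paper leaves implicit.
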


\begin{proof}
Suppose that the corollary is false. Note that initially when the subroutine SEARCH($u, v$) begins execution, we have $F_d = B_d = \emptyset$ and hence the corollary is vacuously true at that time. Consider the first time-instant (say) $t$ when the corollary becomes false. Accordingly, we have:
\begin{equation}
\label{eq:time}
\max_{x \in F_d} \{k(x)\} \prec \min_{y \in B_d} \{k(y)\} \text{ just before time } t.
\end{equation}
One the following two events must have occurred at time $t$ for the corollary to get violated. 

\smallskip
\noindent (1) A  node $x' \in F_a$  was explored during a call to the subroutine EXPLORE-FORWARD($x'$). The subroutine EXPLORE-FORWARD($x'$) then moved the node $x'$ from the set $F_a$ to the set $F_d$, which violated the corollary. 
Note that a call to  EXPLORE-FORWARD(.) can only be made if  $k(x') \prec \min_{y \in B_d} \{k(y)\}$ just before time $t$ (see stopping condition (C5) and Property~\ref{prop:order}).
Thus, from~(\ref{eq:time}) we conclude that the corollary  remains satisfied even after adding the node $x'$ to the set $F_d$. This leads to a contradiction.

\smallskip
\noindent (2) A  node $x' \in F_a$   was explored during a call to  EXPLORE-FORWARD($x'$). The subroutine EXPLORE-FORWARD($x'$) then moved the node $x'$ from the set $F_a$ to the set $F_d$ 
, which violated the corollary. Applying an argument analogous to the one applied in case (1), we again reach a contradiction.
\end{proof}

The proof of  Lemma~\ref{lm:phaseIII:correct} follows immediately from the preceding discussion. Next, Lemma~\ref{lm:search:time} bounds the time spent in any single call to the subroutine SEARCH($u, v$).

\begin{lemma}
\label{lm:phaseIII:correct}
The subroutine SEARCH($u,v$) in Figure~\ref{fig:search} returns YES if the insertion of the edge $(u,v)$ creates a cycle in the graph $G$, and NO otherwise.
\end{lemma}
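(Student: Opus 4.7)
The plan is to argue that SEARCH($u,v$) must terminate via exactly one of the six stopping conditions (C1)--(C6) and that the returned value is correct in each case. For termination I would observe that by Property~\ref{prop:forward:back} the sets $F_a, F_d, B_a, B_d$ are pairwise disjoint subsets of the finite vertex set $V_{i,j}$, and each iteration of the main loop either invokes an EXPLORE call (which moves a node into $F_d \cup B_d$ without removing any node from $F \cup B$) or triggers a stopping condition; so the total number of iterations is bounded by $2 |V_{i,j}|$.

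For the YES returns, which occur only under (C3) and (C4), I would simply exhibit the cycle. Under (C3), while exploring $x \in F_a$ an edge $(x, x') \in E_{i,j}$ with $x' \in B_a \cup B_d$ is found; Property~\ref{prop:forward:back} gives a path from $v$ to $x$ and a path from $x'$ to $u$ in $G_{i,j}$, and concatenating these with the edge $(x,x')$ and the newly inserted edge $(u,v)$ forms a cycle in $G^+$. Case (C4) is the symmetric mirror via the incoming edge $(y',y)$ with $y' \in F_a \cup F_d$.

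For the NO returns under (C1), (C2), (C5), (C6), the paper has already supplied full justifications for (C1) and (C5); my plan is to invoke these and then state that (C2) and (C6) follow by the obvious dual arguments. For (C2) the dual uses the fact that $u \in B_a \cup B_d$ together with Corollary~\ref{cor:alive:dead} in its backward form: if a hypothetical path $P_{v,u}$ existed, tracing it from $u$ backward one finds a first node outside $B_d$ whose out-neighbor on the path is in $B_d$, contradicting Corollary~\ref{cor:alive:dead}. For (C6) the dual uses Corollary~\ref{cor:order:forward} in place of Corollary~\ref{cor:order:backward}, along with the mirror of Claim~\ref{cl:path} stating that any such $P_{v,u}$ must contain a node of $B_a$; combined with $\max_{y \in B_a} k(y) \prec \max_{x \in F_d} k(x)$ one derives that this node would have to lie in $F_d$, contradicting disjointness in Property~\ref{prop:forward:back}.

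The main (minor) obstacle is simply bookkeeping: verifying that the backward-direction arguments go through unchanged once the roles of $F$, $B$, $v$, $u$ and the orientations of edges are swapped. Because Properties~\ref{prop:forward:back}--\ref{prop:order} and Corollaries~\ref{cor:alive:dead}, \ref{cor:order:forward}, \ref{cor:order:backward} are all stated symmetrically in the forward/backward search, these mirror arguments require no new ideas. Collecting the six cases together yields the lemma.
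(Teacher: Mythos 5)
Your proposal is correct and follows essentially the same route as the paper: the paper's proof of this lemma is precisely ``the claim follows from the preceding discussion of the six terminating conditions (C1)--(C6),'' and you reconstruct that case analysis, supplying the mirror arguments for (C2) and (C6) that the paper leaves implicit (your tracing of $P_{v,u}$ backward from $u$ for (C2), and your use of the $B_a$-version of Claim~\ref{cl:path} together with Corollary~\ref{cor:order:forward} for (C6), are both sound). The one thing you add that the paper does not bother to spell out is an explicit termination argument, which is fine but not strictly necessary since the paper only asserts correctness of the returned answer.
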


\begin{lemma}
\label{lm:search:time}
Consider any call to the subroutine SEARCH($u,v$). The time spent on this call is at most $\tilde{O}(m/n)$ times the size of the set $F_d$ at the end of the call.
\end{lemma}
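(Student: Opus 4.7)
The plan is to charge the entire running time of SEARCH$(u,v)$ against the number of explorations performed, and then use the balanced search property to express everything in terms of $|F_d|$. By Property~\ref{prop:node:move}, every call to EXPLORE-FORWARD$(\cdot)$ moves exactly one node from $F_a$ into $F_d$, and analogously for EXPLORE-BACKWARD$(\cdot)$ and $B_d$. Hence, if at the end of the call we have $|F_d| = f$ and $|B_d| = b$, then EXPLORE-FORWARD$(\cdot)$ was invoked exactly $f$ times and EXPLORE-BACKWARD$(\cdot)$ exactly $b$ times. The balanced search condition, Property~\ref{prop:balance}, gives $|b - f| \le 1$, so $b \le f+1$ and the two counts agree up to an additive constant.

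Next I would bound the cost of one invocation of EXPLORE-FORWARD$(x)$. The subroutine inspects each outgoing edge $(x, x') \in E_{i,j}$ of $x$ and, for each one, performs $O(1)$ membership checks against the sets $F_a, F_d, B_a, B_d$ (implementable in $O(1)$ time by keeping status labels on nodes) and possibly inserts $x'$ into $F_a$. By Assumption~\ref{assume}, the out-degree of $x$ is $O(m/n)$, so at most $O(m/n)$ edges are scanned. Insertions into $F_a$ are done through a priority-queue structure (used to implement Property~\ref{prop:order}, i.e.\ to extract the minimum-priority element of $F_a$); using a standard heap, each heap operation takes $O(\log n)$ time. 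The total work per call is therefore $\tilde{O}(m/n)$. The same analysis applies symmetrically to EXPLORE-BACKWARD$(\cdot)$ by the in-degree bound in Assumption~\ref{assume}.

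Finally, I would bound the bookkeeping performed in the main loop of SEARCH$(u,v)$ itself. Between two successive explorations, the algorithm needs to select the next node to explore (one min-heap and one max-heap query) and check the stopping conditions (C1)--(C6). Conditions (C1)--(C4) are $O(1)$ to test, while (C5) and (C6) require comparing the top of a heap on $F_a$ (resp.\ $B_a$) with the extreme element of $B_d$ (resp.\ $F_d$), which is again done in $\tilde{O}(1)$ time per iteration using the ordered list data structure together with auxiliary heaps. The number of such main-loop iterations is at most $f + b = O(f)$. Adding everything together, the total time for the call is
\[
\tilde{O}(f \cdot m/n) + \tilde{O}(b \cdot m/n) + \tilde{O}(f+b) \;=\; \tilde{O}(|F_d| \cdot m/n),
\]
as claimed. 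The only mildly subtle point is ensuring that the priority-queue data structures can be kept consistent with the ordered list representation of $\prec$; but since priorities in $\prec$ are compared via the ordered-list \texttt{COMPARE} operation in $O(1)$ time, a standard heap on the nodes of $F_a$ and $B_a$ ordered by the $\prec$-ranks suffices, with the total heap maintenance cost absorbed into the $\tilde{O}$.
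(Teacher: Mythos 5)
Your proof is correct and follows essentially the same approach as the paper's: bound the cost of each EXPLORE-FORWARD/EXPLORE-BACKWARD call by $\tilde{O}(m/n)$ via Assumption~\ref{assume}, count the number of such calls via the growth of $F_d$ and $B_d$, and invoke Property~\ref{prop:balance} to reduce $|B_d|$ to $|F_d|$ up to a constant. The additional detail you supply about keeping the heaps consistent with the ordered-list representation of $\prec$ is a legitimate and sensible implementation remark that the paper leaves implicit.
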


\begin{proof}(Sketch)
Each call to  EXPLORE-FORWARD($x$) or EXPLORE-BACKWARD($x$) takes time proportional to the out-degree (resp. in-degree) of $x$ in the subgraph $G_{i,j}$. Under Assumption~\ref{assume}, the maximum in-degree and maximum out-degree of a node in $G_{i,j}$ are both at most $O(m/n)$. Thus, a single call to  EXPLORE-FORWARD($x$) or EXPLORE-BACKWARD($x$) takes $O(m/n)$ time. 

According to Property~\ref{prop:order}, whenever we want to explore a node during forward-search (resp. backward-search), we select a forward-alive (resp. backward-alive) node with minimum (resp. maximum) priority. This step can be implemented using a priority queue data structure in $\tilde{O}(1)$ time.

So the time spent by procedure SEARCH($u,v$) is at most $\tilde{O}(m/n)$ times the number of calls to the subroutines EXPLORE-FORWAD(.) or EXPLORE-BACKWARD(.). Furthermore, after each call to the subroutine EXPLORE-FORWAD(.) or EXPLORE-BACKWARD(.), the size of the set $F_d$ or $B_d$ respectively increases by one. 
Accordingly, the time spent on one call to  SEARCH($u,v$) is at most $\tilde{O}(m/n)$ times the size of the set $F_d \cup B_d$ at the end of the call. The lemma now follows from Property~\ref{prop:balance}.
\end{proof}

\medskip
\noindent {\bf Total time spent in phase III.} We now analyze the total time spent  in phase III, over the entire sequence of edge insertions in $G$.
For $l \in [1, m]$, consider the $l^{th}$ edge-insertion in the graph $G$, and let $t_l$ denote the size of the set $F_d$ at the end of phase III while handling this $l^{th}$-edge insertion. Lemma~\ref{lm:search:time} implies that the total time spent in phase III  is at most $\tilde{O}\left((m/n) \cdot \sum_{l=1}^m t_l\right)$. We now focus on upper bounding the sum $\sum_{l=1}^m t_l$. 

\begin{lemma}
\label{lm:search:time:total}
We have $\sum_{l=1}^m t_l^2 = O(n \tau)$. 
\end{lemma}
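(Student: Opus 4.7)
My plan is to exploit the quadratic improvement highlighted in the introduction: each nontrivial call to SEARCH$(u,v)$ simultaneously discovers $|F_d|$ forward-dead nodes and $|B_d|$ backward-dead nodes, and I will show that the $|F_d|\cdot|B_d|$ ordered pairs formed between them are all freshly created sometime $\tau$-related pairs of $G$. Combined with the global budget $O(n\tau)$ from Theorem~\ref{th:bound:related:nodes}, this yields the claimed bound on $\sum_l t_l^2$.

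Fix an insertion $(u,v)$ that enters the nontrivial branch of phase~III, so that $k(v)\prec k(u)$ and $u,v\in V_{i,j}$ for some $i,j$. Let $F_d^{(l)},B_d^{(l)}$ denote the dead sets at the end of this phase, so $t_l=|F_d^{(l)}|$ and, by Property~\ref{prop:balance}, $|B_d^{(l)}|\ge t_l-1$. The main claim is: for every $x\in F_d^{(l)}$ and every $y\in B_d^{(l)}$, the ordered pair $(y,x)$ becomes $\tau$-related in $G$ as a result of the $l$-th insertion, and was not $\tau$-related at any earlier time. Granted this claim, the at least $t_l(t_l-1)=\Omega(t_l^2)$ pairs contributed by the $l$-th insertion are disjoint from those contributed by any other insertion, so summing and applying Theorem~\ref{th:bound:related:nodes} gives $\sum_l t_l(t_l-1)=O(n\tau)$, and hence $\sum_l t_l^2=O(n\tau)$ (the handful of insertions with $t_l\le 1$ contribute at most $m$, which is absorbed into $O(n\tau)$ for the parameter regime used in the paper).

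To establish $\tau$-relatedness of $(y,x)$ after the insertion, I use Property~\ref{prop:forward:back}: there is a directed path $y\leadsto u$ in $G_{i,j}$ and a directed path $v\leadsto x$ in $G_{i,j}$. Since $u,v\in V_{i,j}$, the newly inserted edge $(u,v)$ is internal and belongs to $G_{i,j}^{+}$, so the concatenation $y\leadsto u\to v\leadsto x$ is a path from $y$ to $x$ in $G_{i,j}^{+}$. The third item of Lemma~\ref{lm:propofpartition} then yields that $(y,x)$ is $\tau$-related in $G^{+}$ with high probability. To establish newness, I use Property~\ref{lm:phase:order:correct} together with Corollary~\ref{cor:order:terminate}: $\prec$ is a valid topological ordering of $G^{-}$, and $k(x)\prec k(y)$ at the end of phase~III because $x\in F_d^{(l)}$ and $y\in B_d^{(l)}$. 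Hence no directed path from $y$ to $x$ exists in $G^{-}$, and a fortiori no such path exists in any subgraph appearing at an earlier insertion, so $(y,x)$ was not $\tau$-related at any earlier time.

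The step I expect to require the most care is handling the ``with high probability'' qualifier in Lemma~\ref{lm:propofpartition} when aggregating across insertions: naively the claim ``$(y,x)$ is $\tau$-related'' is per-pair probabilistic, but since the total count of candidate pairs is polynomial in $n$, a union bound (or, alternatively, passing to an expected bound that still matches the statement of Theorem~\ref{th:main}) is enough to turn the per-pair guarantee into the global inequality $\sum_l t_l^2=O(n\tau)$ claimed by the lemma.
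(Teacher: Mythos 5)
Your proof follows essentially the same route as the paper's: count the $|F_d^{(l)}|\cdot|B_d^{(l)}|$ ordered pairs $(y,x)$ with $x\in F_d^{(l)}$, $y\in B_d^{(l)}$ as newly created sometime $\tau$-related pairs (newness via Property~\ref{lm:phase:order:correct} and Corollary~\ref{cor:order:terminate}; $\tau$-relatedness via the path $y\leadsto u\to v\leadsto x$ inside $G_{i,j}$ and the third item of Lemma~\ref{lm:propofpartition}), charge them to the $O(n\tau)$ budget of Theorem~\ref{th:bound:related:nodes}, and then use Property~\ref{prop:balance} to relate the product to $t_l^2$. The high-probability caveat is handled the same way in the paper (condition on the event, union bound over polynomially many pairs, expected total time).

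The one place you deviate is the $t_l\le 1$ fallback, and it is slightly off: absorbing a potential $\Theta(m)$ contribution into $O(n\tau)$ presumes $m=O(n\tau)$, which with the paper's eventual choice $\tau=n/m^{1/3}$ amounts to $m=O(n^{3/2})$. That is indeed the regime in which the algorithm is useful, but the lemma as stated makes no such restriction on $\tau$, so as written your argument does not prove the lemma for all $\tau\in[1,n]$. The cleaner observation (implicit in the paper's use of Property~\ref{prop:balance}) is that whenever $t_l\ge 1$, SEARCH$(u,v)$ explores $v$ forward and then, within the same iteration of the loop, unconditionally explores $u$ backward --- condition (C6) cannot trigger at that point because $\max_{x\in F_d}k(x)=k(v)\prec k(u)\le\max_{y\in B_a}k(y)$. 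Hence $|B_d^{(l)}|\ge\max(1,t_l-1)\ge t_l/2$, giving $t_l^2\le 2\,|F_d^{(l)}|\,|B_d^{(l)}|$ directly for every such $l$, with the sole exception of the (at most one) insertion where a cycle is found inside the very first call EXPLORE-FORWARD$(v)$, which contributes only $O(1)$ to the sum. This removes the dependence on the parameter regime.
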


\begin{proof}
For any $l \in [1, m]$, let $F_d^{(l)}$ and $B_d^{(l)}$ respectively denote the sets $F_d$ and $B_d$ at the end of phase III while handling the $l^{th}$ edge-insertion in $G$. Furthermore, let $G^{(l)}$ and $G_{i,j}^{(l)}$ respectively denote the input graph $G$ and the subgraph $G_{i,j}$  after the $l^{th}$ edge-insertion in $G$. 

Suppose that the edge $(u,v)$ is the $l^{th}$ edge to be inserted into $G$. We focus on the procedure for handling this edge insertion. During this procedure,  if we find  $k(u) \prec k(v)$ in the beginning of phase III, then our algorithm immediately declares that the insertion of the edge $(u,v)$ does not create a cycle and moves on to phase IV. In such a scenario, we clearly have $F_d^{(l)} = B_d^{(l)} = \emptyset$ and hence $t_l = 0$. Accordingly, from now on we assume that $k(v) \prec k(u)$ in the beginning of phase III. Consider any two nodes $x \in F_d^{(l)}$ and $y \in B_d^{(l)}$.  The nodes $x$ and $y$ belong to the same subgraph $G_{i,j}^{(l)}$. Property~\ref{prop:forward:back} guarantees that there is a path $P_{y, x}$ from $y$ to $x$ in $G_{i,j}^{(l)}$ -- we can go from $y$ to $u$, take the edge $(u,v)$ and then go from $v$ to $x$. Hence, by Lemma~\ref{lm:propofpartition},  the ordered pair $(y, x)$ is $\tau$-related in $G^{(l)}$ with high probability. We condition on this event for the rest of the proof.  We now claim that there was no path from $y$ to $x$ in  $G^{(l-1)}$: this is the graph $G$ just before the $l^{th}$ edge-insertion, or equivalently, just after the $(l-1)^{th}$ edge-insertion. To see why this claim is true, we recall Property~\ref{lm:phase:order:correct}. This property states that in the beginning of phase III (after the $l^{th}$ edge-insertion) the total order $\prec$ on the node-set $V$ is a topological order in the graph $G^{(l-1)}$. Since $y \in B_d^{(l)}$ and $x \in F_d^{(l)}$, Corollary~\ref{cor:order:terminate} implies that $x$ appears before $y$ in the total order $\prec$ in phase III (after the $l^{th}$ edge-insertion). From these last two observations, we conclude that there is no path from $y$ to $x$ in $G^{(l-1)}$. As edges only get inserted into $G$ with the passage of time, this also implies that there is no path from $y$ to $x$ in the graph $G^{(l')}$, for all $l' < l$. Accordingly, the ordered pair $(y,x)$ is {\em not} $\tau$-related in the graph $G^{(l')}$ for any $l' < l$.

To summarize, for every node $x \in F_d^{(l)}$ and every node $y \in B_d^{(l)}$ the following conditions hold. (1) The ordered pair $(y,x)$ is $\tau$-related in the graph $G^{(l)}$. (2) For all $l' < l$, the ordered pair $(y,x)$ is {\em not} $\tau$-related in the graph $G^{(l')}$. Let $C$ denote a counter which keeps track of the number of sometime $\tau$-related pairs of nodes (see Definition~\ref{def:related:nodes:time}). Conditions (1) and (2) imply that every ordered pair of nodes $(y, x)$, where $y \in B_d^{(l)}$ and $x \in F_d^{(l)}$, contributes one towards the counter $C$.  A simple counting argument gives us: 
\begin{equation}
\label{eq:count:new}
\sum_{l=1}^m \left| F_d^{(l)} \right| \cdot \left| B_d^{(l)} \right| \leq C = O(n \tau)
\end{equation}
In the above derivation,  the last equality follows from Theorem~\ref{th:bound:related:nodes}. We now recall Property~\ref{prop:balance}, which says that our algorithm in phase III explores (almost) the same number of forward and backward nodes. In particular, we have $\left| F_d^{(l)} \right| \cdot \left| B_d^{(l)} \right| = O\left(\left| F_d^{(l)} \right|^2\right) = O(t_l^2)$ for all $l \in [1, m]$. This observation, along with~(\ref{eq:count:new}), implies that $\sum_{l=1}^m t_l^2 = O(n \tau)$. This concludes the proof of the lemma.
\end{proof}

\begin{corollary}
\label{cor:search:time:total}
We have $\sum_{l=1}^m t_l = O(\sqrt{mn\tau})$.
\end{corollary}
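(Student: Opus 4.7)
The plan is to derive the corollary directly from Lemma~\ref{lm:search:time:total} via a one-line application of the Cauchy--Schwarz inequality. Since Lemma~\ref{lm:search:time:total} already gives the non-trivial bound $\sum_{l=1}^m t_l^2 = O(n\tau)$, the remaining work is purely an inequality between the $\ell_1$ and $\ell_2$ norms of the vector $(t_1, \ldots, t_m)$.

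Concretely, I would write
\[
\sum_{l=1}^m t_l \;=\; \sum_{l=1}^m t_l \cdot 1 \;\leq\; \left(\sum_{l=1}^m t_l^2\right)^{1/2} \cdot \left(\sum_{l=1}^m 1^2\right)^{1/2} \;=\; \left(\sum_{l=1}^m t_l^2\right)^{1/2} \cdot \sqrt{m},
\]
and then substitute the bound $\sum_{l=1}^m t_l^2 = O(n\tau)$ from Lemma~\ref{lm:search:time:total} to obtain $\sum_{l=1}^m t_l = O(\sqrt{mn\tau})$, as claimed.

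There is essentially no obstacle: the only thing worth noting is that the Cauchy--Schwarz bound is tight up to constants exactly when the $t_l$'s are roughly equal across the $m$ edge-insertions, which is the worst case for upper-bounding the $\ell_1$ sum given a budget on the $\ell_2^2$ sum. Since the algorithm will ultimately multiply $\sum_l t_l$ by the $\tilde{O}(m/n)$ per-exploration cost from Lemma~\ref{lm:search:time}, this Cauchy--Schwarz step is exactly where the ``quadratic improvement'' produced by the balanced search (i.e., bounding $\sum_l t_l^2$ rather than $\sum_l t_l$) is cashed in.
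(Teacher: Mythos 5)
Your proof is correct. The paper proves the same bound by a hand-rolled thresholding argument: it splits the indices into those with $t_l \leq \sqrt{n\tau/m}$ and those with $t_l > \sqrt{n\tau/m}$, bounds the first group trivially by $m\cdot\sqrt{n\tau/m}=\sqrt{mn\tau}$, and for the second group writes $t_l = \sqrt{n\tau/m}+\delta_l$ and extracts $\sum_{l\in Y}\delta_l = O(\sqrt{mn\tau})$ from the $\ell_2^2$ budget. Your one-line Cauchy--Schwarz application reaches the identical conclusion and is cleaner; the paper's two-regime argument is really just an elementary, self-contained way of deriving the same $\ell_1$-versus-$\ell_2$ inequality without invoking Cauchy--Schwarz by name. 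Neither approach buys anything the other doesn't, and your final remark correctly identifies the role this step plays in cashing in the quadratic gain from the balanced search.
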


\begin{proof}
We partition the set of indices $\{1, \ldots, m\}$ into two subsets: 
$$X = \left\{ l \in [1, m] : t_l \leq \sqrt{n \tau/m} \right\} \text{ and } Y = \left\{ l \in [1, m] : t_l > \sqrt{n \tau/m}\right\}.$$ 
It is easy to check that $\sum_{l \in X} t_l \leq |X| \cdot \sqrt{n \tau/m} \leq m \cdot \sqrt{n \tau/m} = \sqrt{mn\tau}$.  Accordingly, for the rest of the proof we focus on bounding the sum $\sum_{l \in Y} t_l$. Towards this end, for each $l \in Y$, we first express the quantity $t_l$ as $t_l = \sqrt{n \tau/m} + \delta_l$, where $\delta_l > 0$. Now, Lemma~\ref{lm:search:time:total} implies that:
\begin{equation}
\label{eq:last}
\sum_{l \in Y} t_l^2 = \sum_{l \in Y} \left(\sqrt{n \tau/m} + \delta_l\right)^2  = O(n \tau)
\end{equation}
We also note that:
\begin{equation}
\label{eq:last:1}
 \sum_{l \in Y} \left(\sqrt{n \tau/m} + \delta_l\right)^2  \geq \sum_{l \in Y} \left(\delta_l \cdot \sqrt{n \tau/m}\right) =  \sqrt{n \tau/m} \cdot \sum_{l \in Y} \delta_l
\end{equation}
From~(\ref{eq:last}) and~(\ref{eq:last:1}), we get $\sqrt{n \tau/m} \cdot \sum_{l \in Y} \delta_l = O(n \tau)$, which in turn gives us: $\sum_{l \in Y} \delta_l = O\left(\sqrt{mn\tau}\right)$. This leads to the following upper bound on the sum $\sum_{l \in Y} t_l$.
\begin{equation}
\nonumber
\sum_{l \in Y} t_l = \sum_{l \in Y} \left(\sqrt{n \tau/m} + \delta_l\right) = \sum_{l \in Y} \sqrt{n \tau/m} + \sum_{l \in Y} \delta_l \leq m \cdot  \sqrt{n \tau/m} + O\left(\sqrt{mn\tau}\right)
= O\left(\sqrt{mn\tau}\right).
\end{equation}
This concludes the proof of the corollary.
\end{proof}

We are now ready to upper bound the total time spent by our algorithm in phase III.

\begin{lemma}
\label{lm:search:final:total:time}
We spend $\tilde{O}\left(\sqrt{m^3 \tau/n}\right)$ total time in phase III, over the entire sequence of edge-insertions.
\end{lemma}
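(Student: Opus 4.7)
The plan is to combine Lemma~\ref{lm:search:time} with Corollary~\ref{cor:search:time:total} in a straightforward manner; essentially all the heavy lifting is already done in the preceding analysis, and this lemma just assembles those bounds.

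First I would set up notation as in the discussion just before the lemma: for each $l \in [1, m]$, let $t_l$ denote the size of $F_d$ at the end of phase III while handling the $l^{th}$ edge insertion into $G$. By Lemma~\ref{lm:search:time}, the time spent during phase III while handling this $l^{th}$ edge insertion is at most $\tilde{O}(m/n) \cdot t_l$. Summing over all $m$ insertions, the total time spent in phase III is bounded by
\[
\tilde{O}\!\left(\frac{m}{n}\right) \cdot \sum_{l=1}^{m} t_l.
\]

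Next I would plug in the bound $\sum_{l=1}^m t_l = O\!\left(\sqrt{m n \tau}\right)$ from Corollary~\ref{cor:search:time:total}. This yields a total phase III time of at most
\[
\tilde{O}\!\left(\frac{m}{n}\right) \cdot O\!\left(\sqrt{m n \tau}\right) \;=\; \tilde{O}\!\left(\sqrt{\frac{m^2}{n^2} \cdot m n \tau}\right) \;=\; \tilde{O}\!\left(\sqrt{\frac{m^3 \tau}{n}}\right),
\]
which is exactly the claimed bound.

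There is no real obstacle here: the per-insertion cost of a forward/backward step is controlled by Assumption~\ref{assume} (maximum degree $O(m/n)$), the balanced-search invariant (Property~\ref{prop:balance}) equates forward and backward work up to a factor of $2$, and the $\tau$-related pair accounting (Theorem~\ref{th:bound:related:nodes} together with Lemma~\ref{lm:search:time:total}) controls $\sum_l t_l^2$, from which Corollary~\ref{cor:search:time:total} extracts the linear-sum bound via a standard Cauchy--Schwarz-flavored split into small and large $t_l$. The present lemma is simply the arithmetic combination of these two inputs.
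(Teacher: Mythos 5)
Your proof is correct and matches the paper's own argument exactly: both bound the total phase III time by $\tilde{O}(m/n)\cdot\sum_{l=1}^m t_l$ via Lemma~\ref{lm:search:time} and then plug in $\sum_l t_l = O(\sqrt{mn\tau})$ from Corollary~\ref{cor:search:time:total}. The arithmetic $\tilde{O}(m/n)\cdot\sqrt{mn\tau}=\tilde{O}(\sqrt{m^3\tau/n})$ is right, and no further justification is needed.
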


\begin{proof}
 Lemma~\ref{lm:search:time} implies that  the total time spent in phase III is  $O\left( (m/n) \cdot \sum_{l=1}^m t_l \right)$. The lemma now follows from Corollary~\ref{cor:search:time:total}.
\end{proof}

\subsubsection{Phase IV: Ensuring that $\prec$ is a  topological ordering for $G^+$ (only when $G^+$ is acyclic)}
\label{sec:phase:final}

As in Section~\ref{sec:phase:cycle}, we let $G^-$ and $G^+$ respectively denote the graph $G$ just before and after the insertion of the edge $(u,v)$. If in phase III we detect a cycle, then we do not need to perform any nontrivial computation from this point onward, for the graph $G$ will contain a cycle after every future edge-insertion. Hence, throughout this section we assume that no cycle was detected in phase III, and as per Lemma~\ref{lm:phaseIII:correct} the graph $G^+$ is acyclic.  Our goal in phase IV is to update the total order $\prec$ so that it becomes a topological ordering in $G^+$. Towards this end, note that  $\prec$ does not change during phase III. Furthermore, if $k(u) \prec k(v)$ in phase III, then  the first three paragraphs of Section~\ref{sec:phase:cycle} imply that $\prec$ is already a topological ordering of $G^+$, and nothing  further needs to be done. Thus, from now on we assume that $k(v) \prec k(u)$ and $V(u) = V(v) = V_{i,j}$ for some $i, j \in [0, |S|]$ in phase III.

Recall the six terminating conditions for the subroutine SEARCH($u, v$) used in phase III (see the discussion after Corollary~\ref{cor:order:backward}). We have already assumed that we do not detect any cycle in phase III. Hence, the subroutine SEARCH($u, v$) terminates under one of the following four conditions: (C1), (C2), (C5) and (C6). How we update the total order $\prec$ in phase IV depends on the terminating condition under which the subroutine SEARCH($u,v$) returned in phase III. In particular, there are two cases to consider.

\medskip
\noindent {\bf Case 1. The subroutine SEARCH($u,v$) returned under condition (C2) or (C6) in phase III.} 

\smallskip
\noindent In this scenario, we update the total order $\prec$ by calling the subroutine described in Figure~\ref{fig:update:forward} (see Section~\ref{sec:codes}). In this subroutine, the symbols $F_d$ and $B_d$ respectively denotes the set of forward-dead and backward-dead nodes at the end of phase III. Similarly, we will use the symbols $F_a$ and $B_a$ respectively to denote the set of forward-alive and backward-alive nodes at the end of phase III. The subroutine works as follows.

When the subroutine SEARCH($u,v$) begins execution in phase III, we had $v \in F_a$ and $u \in B_a$. Since SEARCH($u,v$) returned under conditions (C2) or (C6), Property~\ref{prop:node:move} implies that $v \in F_d$ and $u \in B_d$ at the end of phase III.  Thus, when phase IV begins, let $v, x_1, \ldots, x_f$ be the nodes in $F_d$ in increasing order of priorities, so that $k(v) \prec k(x_1) \prec \cdots \prec k(x_f)$. Similarly, let $y_1, \cdots, y_b, u$ be the nodes in $B_d$ in increasing order of priorities, so that $k(y_1) \prec \cdots \prec k(y_b) \prec k(u)$.  By Corollary~\ref{cor:order:terminate}, we  have $k(x_f) \prec k(y_1)$. Now that the edge $(u,v)$ has been inserted, we need to update the relative ordering among the nodes in $F_d \cup B_d$.

Steps 1-8 in Figure~\ref{fig:update:forward} update the total order $\prec$ in such a way  that it satisfies the following properties. (1) We still have $k(v) \prec k(x_1) \prec \cdots \prec k(x_f)$. So the relative ordering among the nodes in $F_d$ does not change. (2) Consider any two nodes $x, y \in V$
 such that  $k(x) \prec k(x_f) \prec k(y)$ at the end of phase III. Then we still have $k(x) \prec k(x_f) \prec k(y)$ at the end of step 8 in Figure~\ref{fig:update:forward}. So the relative position of $x_f$ among all the nodes in $V$ does not change. (3)  The nodes in $F_d$ occur in consecutive positions in the total order $\prec$. Thus, at the end of step 8 it cannot be the case that $k(x') \prec k(x) \prec k(x'')$ if $x \notin F_d$ and $x', x'' \in F_d$. 
 
 \begin{claim}
 \label{cl:critical}
 Consider any edge $(x, y)$ in $G^-_{i,j}$ where $y \in B_d$ and $x \notin B_d$. Then  $k(x) \prec k(v)$ at the end of step 8 in Figure~\ref{fig:update:forward}.
 \end{claim}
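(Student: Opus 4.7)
The plan is to trace what happens to $x$ during Phase~III to pin down which of the search sets $\{F_a, F_d, B_a, B_d\}$ contains it, and then combine the terminating condition under which SEARCH$(u,v)$ returned with the three stated properties of step~8 of Figure~\ref{fig:update:forward}.

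First, because $y \in B_d$, the subroutine EXPLORE-BACKWARD$(y)$ was executed at some point in Phase~III, and since $(x, y)$ is an edge of $G^-_{i,j}$ (and hence of the Phase~III subgraph $G_{i,j}$), it is one of the incoming edges of $y$ that this call inspects. At the moment of that inspection, $x$ cannot have been in $F_a \cup F_d$: otherwise terminating condition (C4) would have fired and Phase~III would have declared a cycle, contradicting our standing assumption. Hence EXPLORE-BACKWARD$(y)$ either found $x \in B_a \cup B_d$ already, or it inserted $x$ into $B_a$. Either way, $x \in B_a \cup B_d$ at the end of Phase~III, and combining this with the hypothesis $x \notin B_d$ yields $x \in B_a$ at the end of Phase~III.

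Next, I rule out terminating condition (C2): if (C2) had fired then $B_a = \emptyset$, which is incompatible with $x \in B_a$. Therefore SEARCH$(u,v)$ must have returned under (C6), so at the end of Phase~III
\[
k(x) \;\preceq\; \max_{x' \in B_a} k(x') \;\prec\; \max_{x' \in F_d} k(x') \;=\; k(x_f).
\]
Moreover $x \in B_a$ forces $x \notin F_d$, since the sets $F_a, F_d, B_a, B_d$ are pairwise disjoint by Property~\ref{prop:forward:back}.

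Finally, I push $k(x) \prec k(x_f)$ from the priorities at the end of Phase~III to the priorities produced by step~8. Property~(2) of step~8 preserves this relation, and Property~(3) says that the nodes of $F_d = \{v, x_1, \ldots, x_f\}$ occupy consecutive positions in $\prec$ with $v$ at the front and $x_f$ at the back. Because $x \notin F_d$, the node $x$ cannot lie inside this consecutive block, so the surviving inequality $k(x) \prec k(x_f)$ forces $x$ to lie strictly before the entire block, giving $k(x) \prec k(v)$ as required. I expect the main subtlety to be precisely this last step: condition (C6) naturally bounds $k(x)$ against $k(x_f)$, whereas the claim is phrased against $k(v)$, the minimum priority in $F_d$, and the ``$F_d$ becomes consecutive'' property of step~8 is exactly what bridges the gap.
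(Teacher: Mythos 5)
Your proof is correct and follows essentially the same route as the paper's. The one cosmetic difference is in the first step: where you establish $x \in B_a$ by re-tracing the call to EXPLORE-BACKWARD($y$) and invoking terminating condition (C4) to rule out $x \in F_a \cup F_d$, the paper simply cites Corollary~\ref{cor:alive:dead}, which already encapsulates precisely the invariant you re-derive. Everything else — ruling out (C2) because $B_a \neq \emptyset$, invoking (C6) to get $k(x) \prec k(x_f)$, using disjointness to get $x \notin F_d$, and then combining Properties (2) and (3) of steps 1--8 to slide from $k(x) \prec k(x_f)$ to $k(x) \prec k(v)$ — matches the paper's argument step for step. You also correctly identify the last transition as the crux: (C6) only bounds $k(x)$ against $k(x_f)$, and the consecutivity of the $F_d$-block after step 8 is exactly what converts that into a bound against $k(v)$.
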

 
 \begin{proof}
Since $y \in B_d$, $x \notin B_d$ and there is an edge from $x$ to $y$, Corollary~\ref{cor:alive:dead} implies that $x \in B_a$. Hence, the subroutine SEARCH($u,v$) returned under condition (C6), and {\em not} under condition (C2). By condition (C6), we have $k(x) \prec k(x_f)$ at the end of phase III. Since steps 1-8 in Figure~\ref{fig:update:forward} ensure that the nodes in $F_d$ occur in consecutive positions in $\prec$ and they do not change the relative position of $x_f$ among all the nodes in $V$, we get $k(x) \prec k(v)$ at the end of step 8 in Figure~\ref{fig:update:forward}.
 \end{proof}
 
 Steps 9-15 in Figure~\ref{fig:update:forward} further update the total order $\prec$ in such a way  that it satisfies the following properties. (4) We still have $k(y_1) \prec \cdots \prec k(y_b) \prec k(u)$. In words, the relative ordering among the nodes in $B_d$ does not change. (5) The node $u$ is placed immediately before the node $v$ in the total order $\prec$. This is consistent with the fact that the edge $(u,v)$ has been inserted into the graph $G$. (6) The nodes in $B_d$ occur in consecutive positions in the total order $\prec$. In other words, at the end of step 15 we cannot find any node $y \notin B_d$ and any two nodes $y', y'' \in F_d$ such that $k(y') \prec k(y) \prec k(y'')$. 
 
 To summarize, at this point in time, in the total order  $\prec$ the nodes $y_1, \ldots, y_b, u, v, x_1, \ldots, x_f$ occur consecutive to one another, and in this order. Accordingly, Corollary~\ref{cor:order:forward}, Corollary~\ref{cor:order:backward} and Claim~\ref{cl:critical} ensure that the total order $\prec$ remains a topological order in $G^-$ at this point in time. Since $u$ appears before $v$ in $\prec$, we also conclude that at this point in time $\prec$ is also a topological order in $G^+$.
 
 \medskip
\noindent {\bf Case 2. The subroutine SEARCH($u,v$) returned under condition (C1) or (C5) in phase III.} 

 \smallskip
 \noindent This case is completely analogous to case 1 above, and we omit its description.

\begin{lemma}
\label{lm:time:update:order}
We spend $\tilde{O}\left(  \sqrt{mn \tau} \right)$ time in phase IV, over the entire sequence of edge-insertions in $G$.
\end{lemma}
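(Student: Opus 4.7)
The plan is to show that Phase IV's per-edge-insertion cost is proportional to $|F_d| + |B_d|$, and then to invoke Corollary~\ref{cor:search:time:total} together with Property~\ref{prop:balance} to sum across all insertions. First I would inspect Figure~\ref{fig:update:forward} and confirm that, in Case~1, the only nodes whose ordered-list positions are modified are those in $F_d \cup B_d$: steps~1--8 relocate exactly the nodes of $F_d$ into a consecutive block anchored at the original slot of $x_f$, and steps~9--15 do the symmetric thing for $B_d$, placing $u$ immediately before $v$. Case~2 is symmetric. Each such relocation consists of a constant number of \textsc{Insert-Before}, \textsc{Insert-After}, and \textsc{Delete} calls into the Dietz--Sleator ordered-list structure, each of which runs in $O(1)$ time. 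Provided Phase~III hands off pointers to $x_f$, $v$, $y_1$, and $u$ (which can be maintained on the fly via the same priority queues already used for Property~\ref{prop:order}), no additional scanning is required to locate the insertion anchors, and so the running time of Phase~IV on the $l$-th edge insertion is $O(|F_d^{(l)}| + |B_d^{(l)}|)$.

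Next, Property~\ref{prop:balance} gives $|F_d^{(l)}| + |B_d^{(l)}| \leq 2 t_l + 1$, where $t_l := |F_d^{(l)}|$ is exactly the quantity used in the Phase~III analysis. In the trivial case where $k(u) \prec k(v)$ already holds at the end of Phase~II, both $F_d^{(l)}$ and $B_d^{(l)}$ are empty and Phase~IV performs no work, so the bound $O(t_l)$ still applies. Summing over all edge insertions and invoking Corollary~\ref{cor:search:time:total} yields
$$\sum_{l=1}^m O(t_l) \;=\; O\!\left(\sum_{l=1}^m t_l\right) \;=\; O\!\left(\sqrt{mn\tau}\right),$$
which matches the $\tilde{O}(\sqrt{mn\tau})$ bound claimed in the lemma.

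The part I expect to require the most care is verifying that Phase~IV really avoids touching any nodes outside $F_d \cup B_d$, in particular that the ``anchors'' needed to splice the two consecutive blocks back into the ordered list (the $\prec$-neighbours of $x_f$ and of $v$) can be located in $O(1)$ time without any scan. This reduces to a small amount of bookkeeping during Phase~III: alongside the priority queues used to implement Property~\ref{prop:order}, I would maintain pointers to the current extremal-priority members of $F_d$ and $B_d$, updated each time \textsc{Explore-Forward} or \textsc{Explore-Backward} deactivates a node. With that bookkeeping in place the accounting above goes through, and the lemma follows.
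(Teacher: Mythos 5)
Your proof is correct and follows essentially the same route as the paper's: bound the per-insertion Phase~IV work by $O(|F_d^{(l)}|+|B_d^{(l)}|) = O(t_l)$ via constant-time ordered-list operations and Property~\ref{prop:balance}, then sum and invoke Corollary~\ref{cor:search:time:total}. The extra bookkeeping discussion (anchor pointers for $x_f$, $v$, $y_1$, $u$) is a reasonable fleshing-out of the paper's terser claim that steps~7 and~14 run in $O(1)$, but it does not change the argument.
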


\begin{proof}(Sketch)
Steps 7 and 14 in Figure~\ref{fig:update:forward} can be implemented in $O(1)$ time using the ordered list data structure~\cite{DietzS87}. Hence, the time spent in phase IV after a given edge-insertion is proportional to the sizes of the sets $F_d$ and $B_d$ at the end of phase III, and  by Property~\ref{prop:balance}, the sizes of the sets $F_d$ and $B_d$ are (almost) equal to one another. For $l \in [1, m]$, let $t_l$ denote the size of the set $F_d$ at the end of phase III while handling the $l^{th}$ edge-insertion in $G$. We conclude that the total time spent in phase IV, during the entire sequence of edge-insertions in $G$, is given by $O\left(\sum_{l=1}^m t_l \right)$. The lemma now follows from Corollary~\ref{cor:search:time:total}.
\end{proof}

\subsection{Bounding the Total Update Time of Our Algorithm}
\label{sec:time}

We simply add up the total time spent by our algorithm in each of these four phases, throughout the entire sequence of edge-insertions in $G$. In particular, we invoke Lemma~\ref{lm:partition:time}, Lemma~\ref{lm:phase:order:time}, Lemma~\ref{lm:search:final:total:time} and Lemma~\ref{lm:time:update:order} and conclude that the total expected update time of our algorithm is at most:
\begin{equation}
\label{eq:total:time:final}
\tilde{O}\left( mn/\tau + n^2/\tau + \sqrt{m^3\tau/n} + \sqrt{mn\tau} \right) = \tilde{O}\left( mn/\tau +  \sqrt{m^3\tau/n}  \right).
\end{equation}
In the above derivation, we have made the assumption that $m = \Omega(n)$. Now, setting $\tau = n/m^{1/3}$, we get a total expected update time of $\tilde{O}(m^{4/3})$. This concludes the proof of Theorem~\ref{th:main}.

\newpage

\subsection{Pseudocodes for the subroutines in Phase III and Phase IV}
\label{sec:codes}

\begin{figure}[h!]
	\centerline{\framebox{
			\begin{minipage}{5.5in}
				\begin{tabbing}
					01. \  \ \=   $Q = F_d$. \\
					02. \> $x^* = \arg \max_{x \in Q} \{ k(x) \}$	 \\				
					03. \> $Q = Q \setminus \{x^*\}$. \\
					04. \> {\sc While} $Q \neq \emptyset$: \\
					05. \> \ \ \ \ \ \ \ \ \= $x' = \arg \max_{x \in Q} \{k(x)\}$. \\
					06. \> \> $Q = Q \setminus \{x'\}$. \\
					07. \> \> INSERT-BEFORE($x', x^*$). \\
					08. \> \> $x^* = x'$. \\
					09. \> $y^* = v$. \\
					10. \> $Q = B_d$. \\
					11. \> {\sc While} $Q \neq \emptyset$: \\
					12. \> \> $y' = \arg \max_{y \in Q} \{ k(y) \}$. \\
					13. \> \> $Q = Q \setminus \{ y' \}$. \\
					14. \> \> INSERT-BEFORE($y', y^*$). \\
					15. \> \> $y^* = y'$. 
					\end{tabbing}
			\end{minipage}
	}}
	\caption{\label{fig:update:forward} Subroutine:  UPDATE-FORWARD$(.)$ used in phase IV.}
\end{figure}

\begin{figure}[h!]
	\centerline{\framebox{
			\begin{minipage}{5.5in}
				\begin{tabbing}
					01. \   \= {\sc Initialize:} $F_a = \{v\}$,  $B_a = \{u\}$, $F_d = \emptyset$  and  $B_d = \emptyset$. \\
					02. \> {\sc While} $F_a \neq \emptyset$ AND $B_a \neq \emptyset$: \\
					03. \>  \ \ \ \ \ \ \  \= $x = \arg \min_{x' \in F_a} \{ k(x) \}$. \\
					04. \> \> {\sc If} $k(x) \succ \min_{y' \in B_d} \{ k(y') \}$, {\sc Then} \\
					05. \> \> \ \ \ \ \ \ \ \ \= {\sc Return} NO. \qquad // Insertion of the edge $(u,v)$ does not create a cycle. \\
					06. \> \> {\sc Else} \\
					07. \> \> \> EXPLORE-FORWARD($x$). \\
					09. \>  \> $y = \arg \max_{y' \in B_a} \{ k(y') \}$. \\
					10. \> \> {\sc If} $k(y) \succ \max_{x' \in B_d} \{ k(x') \}$, {\sc Then} \\
					11. \> \> \> {\sc Return} NO. \qquad // Insertion of the edge $(u,v)$ does not create a cycle. \\
					12. \> \> {\sc Else} \\
					13. \> \> \> EXPLORE-BACKWARD($y$). \\
					14. \> {\sc Return} NO. \qquad \qquad  \qquad // Insertion of the edge $(u,v)$ does not create a cycle.
				\end{tabbing}
			\end{minipage}
		}}
		\caption{\label{fig:search} Subroutine:  SEARCH$(u, v)$ used in phase III.}
	\end{figure}
	
	\begin{figure}[h!]
		\centerline{\framebox{
				\begin{minipage}{5.5in}
					\begin{tabbing}
						1. \   \=  $F_a = F_a \setminus \{x\}$ and  $F_d = F_d \cup \{x\}$. \\
						2. \> {\sc For all} $(x,x') \in E$ with $V(x) = V(x')$: \\
						3. \> \ \ \ \ \ \ \ \= {\sc If} $x' \in B_a \cup B_d$, {\sc Then} \\
						4. \> \> \ \ \ \ \ \ \ \ \  \= {\sc Retrun} YES. \qquad // Insertion of the edge $(u,v)$ creates a cycle. \\
						5. \> \> {\sc Else if} $x' \notin F_a \cup F_d$, {\sc Then} \\
						6. \> \> \> $F_a = F_a \cup \{x'\}$. 
					\end{tabbing}
				\end{minipage}
			}}
			\caption{\label{fig:forward} Subroutine:  EXPLORE-FORWARD$(x)$ used in phase III.}
		\end{figure}
	
	\begin{figure}[h!]
			\centerline{\framebox{
					\begin{minipage}{5.5in}
						\begin{tabbing}
							1. \   \=  $B_a = B_a \setminus \{y\}$ and  $B_d = B_d \cup \{y\}$. \\
							2. \> {\sc For all} $(y', y) \in E$ with $V(y') = V(y)$: \\
							3. \> \ \ \ \ \ \ \ \= {\sc If} $y' \in F_a \cup F_d$, {\sc Then} \\
							4. \> \> \ \ \ \ \ \ \ \ \  \= {\sc Retrun} YES. \qquad // Insertion of the edge $(u,v)$ creates a cycle. \\
							5. \> \> {\sc Else if} $y' \notin B_a \cup B_d$, {\sc Then} \\
							6. \> \> \> $B_a = B_a \cup \{y'\}$. 
						\end{tabbing}
					\end{minipage}
				}}
				\caption{\label{fig:backward} Subroutine:  EXPLORE-BACKWARD$(y)$ used in phase III.}
			\end{figure}

\printbibliography[heading=bibintoc] 

\appendix

\section{Full Version of Our Algorithm}
\label{app:sec:full}

 This section is organized as follows. In Section~\ref{app:sec:prelim}, we define some preliminary concepts and notations. In Section~\ref{app:sec:algo}, we  present our algorithm for incremental cycle detection and prove its correctness. In Section~\ref{app:sec:time}, we analyze the total update time of the algorithm, which leads to a proof of Theorem~\ref{th:main}.

\subsection{Preliminaries}
\label{app:sec:prelim}

Throughout the paper, we assume that the maximum degree of a node in $G$ is at most $O(1)$ times the average degree. It was observed by Bernstein and Chechik~\cite{BernsteinC18} that this assumption is without any loss of generality. 
\begin{assumption}~\cite{BernsteinC18}
\label{app:assume}
Every node in $G$ has an out-degree of  $O(m/n)$ and an in-degree of  $O(m/n)$.
\end{assumption}

We say that a node $u \in V$ is an {\em ancestor} of another node $v \in V$ iff there is a directed path from $u$ to $v$ in $G$. We let $A(v) \subseteq V$ denote the set of all ancestors of $v \in V$. Similarly, we say that $u$ is a {\em descendant} of $v$ iff there is a directed path from $v$ to $u$ in $G$. We let $D(v) \subseteq V$ denote the set of all descendants of $v$. A node is both an ancestor and a descendant of itself, that is, we have $v \in A(v) \cap D(v)$. We also fix an integral parameter $\tau \in [1, n]$ whose exact value will be determined later on.

 We  recall a crucial definition from Bernstein and Chechik~\cite{BernsteinC18}. First, note that if there is a path from a node $u$ to another node $v$ in $G$, then  $A(u) \subseteq A(v)$ and $D(v) \subseteq D(v)$. Such a pair of nodes is said to be {\em $\tau$-related} iff the number of nodes in each of the sets $A(v) \setminus A(u)$ and $D(u) \setminus D(v)$ does not exceed $\tau$. 

\begin{definition}~\cite{BernsteinC18}
\label{app:def:related:nodes}
We say that an  ordered pair of nodes $(u, v)$ is {\em $\tau$-related} in the graph $G$ iff there is a path from $u$ to $v$ in $G$, and $|A(v) \setminus A(u)| \leq \tau$ and $|D(u) \setminus D(v)| \leq \tau$. We emphasize that for the ordered pair $(u, v)$ to be $\tau$-related, it is {\em not} necessary that there be an edge $(u,v) \in E$.
\end{definition}

If two nodes $u, v \in V$ are part of a cycle, then clearly $A(u) = A(v)$ and $D(u) = D(v)$. In this case, it follows that both the ordered pairs $(u,v)$ and $(v, u)$ are $\tau$-related. In other words, if an ordered pair $(u, v)$ is not $\tau$-related, then there is no cycle containing both $u$ and $v$. Intuitively, therefore, the notion of $\tau$-relatedness serves as a {\em relaxation} of the notion of two nodes being part of a cycle. This is also the reason why this notion turns out to be extremely useful in designing an algorithm for incremental cycle detection.

Note that the graph $G$ keeps changing as more and more edges are inserted into it. Thus, it might  be the case that an ordered pair of nodes $(u, v)$ is {\em not} $\tau$-related in $G$ at some point in time, but {\em is} $\tau$-related in $G$ at some other point in time.  The following definition becomes relevant in light of this observation.

\begin{definition}~\cite{BernsteinC18}
\label{app:def:related:nodes:time}
We say that an {\em ordered} pair of nodes $(u, v)$ is {\em sometime $\tau$-related} in the graph $G$ iff it is $\tau$-related at some point in time during the sequence of edge insertions in $G$.
\end{definition}

In~\cite{BernsteinC18}, the following upper bound was derived on the number of sometime $\tau$-related pairs of nodes.

\begin{theorem}~\cite{BernsteinC18}
\label{app:th:bound:related:nodes}
The number of sometime $\tau$-related pairs of nodes in $G$ is at most $O(n \tau)$.
\end{theorem}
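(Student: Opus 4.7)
The plan is a charging argument. For each sometime $\tau$-related pair $(u,v)$, I would pick a witness time $t_{uv}$ at which the pair is $\tau$-related and charge the pair to the node $v$. My goal is to show that each node receives at most $O(\tau)$ charges over the entire insertion sequence, which immediately gives the total bound $O(n\tau)$.

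The core ingredient is a snapshot bound: at any fixed time $t$ and node $v$, the set
\[
N(v,t) \;=\; \{\, u \in A(v,t) : |A(v,t) \setminus A(u,t)| \leq \tau \,\}
\]
has size $O(\tau)$. To prove this, I would note that for any $u \in N(v,t) \setminus \{v\}$, the set $A(v,t) \setminus A(u,t)$ contains $v$ together with every other $u' \in N(v,t) \setminus \{v\}$ that is not an ancestor of $u$ in $G$ at time $t$. Writing $k = |N(v,t) \setminus \{v\}|$, each such $u$ is therefore forced to have at least $k - \tau$ of the other $u'$'s among its own ancestors. Summing this lower bound over $u$ and comparing with the trivial upper bound $\binom{k}{2}$ on the number of ordered ancestor pairs inside a $k$-node induced DAG (valid since $G$ remains acyclic throughout the insertion sequence) gives $k(k-\tau) \leq \binom{k}{2}$, which forces $k \leq 2\tau - 1$. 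A symmetric statement bounds the near-descendants of each $u$ at time $t$.

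Lifting this per-moment bound to the entire history is the main obstacle. A natural approach is to take $t_{uv}$ to be the first time a path from $u$ to $v$ appears in $G$ and to argue that both $|A(v) \setminus A(u)|$ and $|D(u) \setminus D(v)|$ are monotone non-decreasing from that moment onward, so that any later $\tau$-relatedness already implied $\tau$-relatedness at $t_{uv}$; applying the snapshot bound at $t_{uv}$ then caps the charges per $v$. The complication is that $|A(v) \setminus A(u)|$ can strictly decrease: an edge insertion that gives $u$ a new ancestor can transfer a node out of $A(v) \setminus A(u)$ into $A(u)$. To absorb this, I would combine both conditions from Definition~\ref{app:def:related:nodes} and observe that whenever $|A(v) \setminus A(u)|$ drops, $|A(u)|$ grows by the same amount, which via the acyclic structure tightly constrains how often $u$ can re-enter $N(v,\cdot)$. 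A careful amortization over such re-entries, combined with the snapshot bound, should yield $O(\tau)$ distinct near-ancestors of $v$ over the whole history, completing the proof.
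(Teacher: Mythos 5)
Your snapshot bound is correct and self-contained: fixing a time $t$ and a node $v$, the counting argument over ordered ancestor pairs inside $N(v,t)\setminus\{v\}$ together with the acyclicity bound $\binom{k}{2}$ indeed gives $k\le 2\tau-1$. That is a clean, correct per-moment statement, and it does not appear in the paper (which simply cites~\cite{BernsteinC18} for the theorem).

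However, the snapshot bound is not what the theorem asserts, and the part of your argument that would bridge the two is missing. The theorem is about the \emph{union over all times} of the sets $N(v,\cdot)$ being of size $O(\tau)$ per node $v$, not about the instantaneous size. Your snapshot bound caps the size of $N(v,t)$ at any single moment, but says nothing about turnover: in principle a different $O(\tau)$-size subset of $A(v)$ could occupy $N(v,\cdot)$ at each of many different times, giving $\omega(\tau)$ distinct sometime-$\tau$-related ancestors even though every snapshot is small. You identify exactly this obstacle (non-monotonicity of $|A(v)\setminus A(u)|$) and then resolve it with ``a careful amortization \dots should yield $O(\tau)$''; that phrase is doing all the work, and the amortization you gesture at does not close the gap. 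Noting that each unit drop in $|A(v)\setminus A(u)|$ is matched by a unit increase in $|A(u)|$ only bounds, for a \emph{fixed} pair $(u,v)$, how many times $u$ can leave and re-enter $N(v,\cdot)$ (by $n$, which is useless); it gives no bound on how many \emph{distinct} nodes $u$ ever enter. Bounding that quantity requires an argument that couples the first-$\tau$-related times of different $u$'s against a single acyclic structure, and almost certainly needs the descendant half of Definition~\ref{app:def:related:nodes} (the condition $|D(u)\setminus D(v)|\le\tau$) in an essential way; you state a symmetric descendant snapshot bound and say you would ``combine both conditions,'' but no such combination is actually carried out. As written, the proposal proves a genuinely weaker fact and leaves the heart of the theorem open.
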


Following the framework developed in~\cite{BernsteinC18}, we will maintain a partition of the node-set $V$ into subsets $\{V_{i,j}\}$ and the subgraphs $\{G_{i,j} = (V_{i,j}, E_{i,j})\}$ induced by these subsets of nodes. At a high level, this partition serves as a useful proxy for determining if a given ordered pair of nodes is $\tau$-related.

We sample each node  $x \in V$ independently with probability $\log n/\tau$. Let $S \subseteq V$ denote the set of these sampled nodes.  The outcome of this random sampling defines a partition of the node-set $V$ into $(|S|+1)^2$ many subsets $\{V_{i,j}\}$, where $i,j \in [0, |S|]$. This is  formally define as follows.  For every node $v \in V$, let $A_S(v) = A(v) \cap S$ and $D_S(v) = D(v) \cap S$ respectively denote the set of ancestors and descendants of $v$ that have been sampled. Each subset $V_{i,j} \subseteq V$  is indexed by an ordered pair $(i, j)$ where $i \in [0, |S|]$ and $j \in [0, |S|]$.  A node $v \in V$ belongs to a subset $V_{i, j}$ iff $|A_S(v)| = i$ and $|D_S(v)| = j$. In words, the index $(i, j)$ of the subset $V_{i,j}$ specifies the number of sampled ancestors and sampled descendants each node $v \in V_{i,j}$ is allowed to have. It is easy to check that the subsets $\{V_{i,j}\}$ form a valid partition the node-set $V$. Let $E_{i, j} = \{ (u, v) \in E : u, v \in V_{i,j} \}$ denote the set of edges in $G$ whose both endpoints lie in $V_{i,j}$, and let $G_{i, j} = (V_{i,j}, E_{i,j})$ denote the subgraph of $G$ induced by the subset of nodes $V_{i,j}$.  We also define a total order $\prec^*$ on the subsets  $\{V_{i,j}\}$. For every two subsets $V_{i,j}$ and $V_{i',j'}$ in the partition, we have $V_{i,j} \prec^* V_{i',j'}$ iff either  $\{i < i'\}$ or $\{i = i' \text{ and } j > j'\}$.  We slightly abuse the notation by letting  $V(v)$ denote the unique subset $V_{i,j}$ which containing the  node $v \in V$. Consider any edge $(u, v) \in E$. If the two endpoints of the edge belong to two different subsets in the partition $\{V_{i,j}\}$, i.e., if $V(u) \neq V(v)$, then we refer to the edge $(u, v)$ as a {\em cross edge}. Otherwise, if $V(u) = V(v)$, then we refer to the edge $(u,v)$ as an {\em internal edge}.

We now state three lemmas that will be crucially used in our algorithm for incremental cycle detection. Although these lemmas were derived in~\cite{BernsteinC18}, for the sake of completeness we briefly describe their proofs here. Lemma~\ref{app:lm:cycle} states that the graph $G$ contains a cycle iff some subgraph $G_{i,j}$ contains a cycle. Hence, in order to detect a cycle in $G$ it suffices to only consider the edges that belong to the induced subgraphs $\{G_{i,j}\}$. Lemma~\ref{app:lm:cross}, on the the other hand, implies that if the graph $G$ is acyclic, then it admits a topological ordering $\prec$ that is {\em consistent} with the total order $\prec^*$, meaning   that $u \prec v$ for all $u, v \in V$ with $V(u) \prec^* V(v)$. Finally, Lemma~\ref{app:lm:partition:related:nodes} states that whenever a subgraph $G_{i,j}$ contains  a path from a node $u$ to some other node $v$, with high probability  the ordered pair $(u,v)$ is $\tau$-related in the input graph $G$.

\begin{lemma}~\cite{BernsteinC18}
\label{app:lm:cycle}
If there is a cycle in  $G = (V, E)$, then every edge of that cycle is an internal edge. 
\end{lemma}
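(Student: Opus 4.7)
The plan is to show that any two nodes lying on a common cycle must belong to the same subset $V_{i,j}$, which immediately implies that every edge of that cycle is internal (i.e., its endpoints lie in the same $V_{i,j}$).

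First, I would fix a cycle $C$ in $G$ and pick any two nodes $u, v$ on $C$. Since $C$ is a directed cycle, there is a directed path from $u$ to $v$ along $C$ and a directed path from $v$ to $u$ along $C$. From the first path, every ancestor of $u$ is an ancestor of $v$ and every descendant of $v$ is a descendant of $u$, so $A(u) \subseteq A(v)$ and $D(v) \subseteq D(u)$. From the second path, the symmetric containments hold. Hence $A(u) = A(v)$ and $D(u) = D(v)$.

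Intersecting with the sampled set $S$, this yields $A_S(u) = A_S(v)$ and $D_S(u) = D_S(v)$, so in particular $|A_S(u)| = |A_S(v)|$ and $|D_S(u)| = |D_S(v)|$. By the definition of the partition, this forces $u$ and $v$ to lie in the same subset $V_{i,j}$, i.e., $V(u) = V(v)$.

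Finally, I would apply this to the endpoints of an arbitrary edge $(x,y)$ of $C$. Since both $x$ and $y$ lie on $C$, the above argument gives $V(x) = V(y)$, so $(x,y)$ is an internal edge. As $(x,y)$ was an arbitrary edge of $C$, every edge of the cycle is internal, completing the proof. There is no real obstacle here; the only subtlety is making sure one extracts both containments $A(u) \subseteq A(v)$ and $A(v) \subseteq A(u)$ from the two directed paths around the cycle, and similarly for descendants.
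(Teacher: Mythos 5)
Your proof is correct and follows essentially the same argument as the paper's: both observe that any two nodes on a common cycle have identical ancestor and descendant sets, hence identical sampled-ancestor and sampled-descendant counts, hence lie in the same $V_{i,j}$. You simply spell out the two directed-path containments in a bit more detail than the paper's sketch.
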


\begin{proof}(Sketch)
The key observation is that if two nodes $u$ and $v$ lie on a cycle, then they have exactly the same set of ancestors and descendants, that is, $A(u) = A(v)$ and $D(u) = D(v)$. For such a pair of nodes $u$ and $v$, we clearly have $A_S(u) = A_S(v)$ and $D_S(u) = D_S(v)$. In other words, if there is an edge $(u, v) \in E$ that is part of a cycle, then both the endpoints of that edge belong to the same subset in the partition $\{V_{i,j}\}$, so that $V(u) = V(v)$. Hence, every edge that is part of a cycle is an internal edge.
\end{proof}

\begin{lemma}~\cite{BernsteinC18}
\label{app:lm:cross}
For every cross edge $(u, v) \in E$, we have $V(u) \prec^* V(v)$. 
\end{lemma}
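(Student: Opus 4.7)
The plan is to exploit monotonicity of ancestor/descendant sets along the edge $(u,v)$, and then do a short case analysis. Since $(u,v) \in E$, there is trivially a directed path from $u$ to $v$ in $G$. Consequently every ancestor of $u$ is an ancestor of $v$, and every descendant of $v$ is a descendant of $u$, that is $A(u) \subseteq A(v)$ and $D(v) \subseteq D(u)$. Intersecting with the sampled set $S$ preserves these inclusions, so $A_S(u) \subseteq A_S(v)$ and $D_S(v) \subseteq D_S(u)$. Writing $|A_S(u)| = i$, $|A_S(v)| = i'$, $|D_S(u)| = j$, $|D_S(v)| = j'$, this gives the two inequalities $i \leq i'$ and $j \geq j'$.

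Now I would split into cases according to $i$ versus $i'$. If $i < i'$, then by the definition of $\prec^*$ we have $V_{i,j} \prec^* V_{i',j'}$, i.e.\ $V(u) \prec^* V(v)$, and we are done. Otherwise $i = i'$, in which case the subset inclusion $A_S(u) \subseteq A_S(v)$ together with the equal cardinalities forces $A_S(u) = A_S(v)$. A symmetric argument shows that if, in addition, $j = j'$, then $D_S(u) = D_S(v)$, and hence $u$ and $v$ lie in the same part $V_{i,j}$ of the partition, contradicting the hypothesis that $(u,v)$ is a cross edge (which by definition requires $V(u) \neq V(v)$). Therefore, in the case $i = i'$, we must have $j > j'$ (strict, by $j \geq j'$ and $j \neq j'$), and again $V_{i,j} \prec^* V_{i',j'}$ by the definition of $\prec^*$.

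The main technical content is really the observation that the total order $\prec^*$ on subsets is designed precisely so that the natural ``information gained by traversing an edge'' (gaining sampled ancestors, losing sampled descendants) corresponds to moving forward in $\prec^*$. Once this is recognized, there is no serious obstacle: the entire argument is a one-line monotonicity statement plus a case split to handle the boundary case $i = i'$. The only subtlety worth flagging is that we use the cross-edge hypothesis exclusively to rule out the degenerate subcase $i = i'$, $j = j'$; without it the claim would degrade to $V(u) \preceq^* V(v)$, which is of course still consistent with the partition.
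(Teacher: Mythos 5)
Your proof is correct and follows essentially the same route as the paper: establish $A_S(u) \subseteq A_S(v)$ and $D_S(v) \subseteq D_S(u)$ from the edge $(u,v)$, deduce $i \leq i'$ and $j \geq j'$, and then case-split on whether $i < i'$ or $i = i'$, using the cross-edge hypothesis to rule out $j = j'$ in the latter case. The intermediate remark that $A_S(u) = A_S(v)$ when $i = i'$ is a harmless extra observation not needed for the conclusion.
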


\begin{proof}(Sketch)
Consider any cross edge $(u, v) \in E$, where $V(u) = V_{i,j}$ and $V(v) = V_{i', j'}$. Since $(u, v)$ is a cross edge, by definition $V_{i,j} \neq V_{i', j'}$. Clearly, every ancestor of $u$ is also an ancestor of $v$, and thus we have $A(u) \subseteq A(v)$. This implies that $A_S(u) \subseteq A_S(v)$ and hence $|A_S(u)| \leq |A_S(u)|$. Now, consider two possible cases. Either $|A_S(u)| < |A_S(v)|$ or $|A_S(u)| = |A_S(v)|$. In the former case, we  have $i = |A_S(u)| < i' = |A_S(v)|$, which means that $V_{i,j} \prec^* V_{i',j'}$.  In the latter case, we have $i = i'$. Here, we note that every descendant of $v$ is also a descendant of $u$, and using exactly the same argument as before we conclude that $|D_S(u)| \geq |D_S(v)|$, which gives us: $j = |D_S(u)| \geq j' = |D_S(v)|$.  Since $(u, v)$ is a cross edge, we have  $V_{i, j} \neq V_{i', j'}$. Furthermore,  we are now considering the case where $i = i'$. Hence, we cannot have $j = j'$, and accordingly, we derive that $j > j'$. As $i = i'$ and $j > j'$, we again get $V_{i, j} \prec^* V_{i', j'}$. 
\end{proof}

\begin{lemma}~\cite{BernsteinC18}
\label{app:lm:partition:related:nodes}
Consider any two nodes $u, v \in V_{i,j}$ for some $i, j \in [0, |S|]$. If there is a path from $u$ to $v$ in the subgraph $G_{i,j}$, then with high probability the ordered pair $(u, v)$ is $\tau$-related in $G$. 
\end{lemma}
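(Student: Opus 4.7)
The plan is a direct probabilistic argument exploiting the sampling probability $\log n/\tau$ used to define $S$. First I would extract a purely structural consequence of the hypothesis $u,v\in V_{i,j}$ together with the existence of a $u \to v$ path. Any path from $u$ to $v$ forces $A(u)\subseteq A(v)$ and $D(v)\subseteq D(u)$, hence $A_S(u)\subseteq A_S(v)$ and $D_S(v)\subseteq D_S(u)$. But $u,v\in V_{i,j}$ means $|A_S(u)|=|A_S(v)|=i$ and $|D_S(u)|=|D_S(v)|=j$, so the inclusions must actually be equalities: $A_S(u)=A_S(v)$ and $D_S(u)=D_S(v)$.

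The quantitative bound then comes from a contrapositive sampling argument. Suppose, for contradiction, that $|A(v)\setminus A(u)|>\tau$. Each of these more than $\tau$ nodes is independently placed in $S$ with probability $\log n/\tau$, so the probability that none of them is sampled is at most
\[
\left(1-\frac{\log n}{\tau}\right)^{\tau}\leq e^{-\log n}=\frac{1}{n}.
\]
But if any $w\in A(v)\setminus A(u)$ happens to lie in $S$, then $w\in A_S(v)\setminus A_S(u)$, contradicting the equality $A_S(u)=A_S(v)$ obtained in the first step. Hence $|A(v)\setminus A(u)|\leq\tau$ with probability at least $1-1/n$, and the symmetric argument applied to descendants gives $|D(u)\setminus D(v)|\leq\tau$ with the same probability.

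Finally, I would apply a union bound over all $O(n^{2})$ ordered pairs $(u,v)$ and all $m$ time steps of the incremental sequence to lift this into a single high-probability guarantee that holds simultaneously for every $u,v,i,j$ and every time during the algorithm's execution. To drive the overall failure probability below $1/n^{c}$ for any desired constant $c$, one inflates the sampling rate by a constant factor, replacing $\log n/\tau$ with $\Theta(c\log n/\tau)$; this changes none of the asymptotic bounds in the rest of the paper. There is no real obstacle here: the structural step (cardinality-forcing equality of $A_S$ and $D_S$) and the single exponential tail bound together close the argument, and the only care required is to take the union bound over a polynomially large collection of (pair, time) events.
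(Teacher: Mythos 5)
Your proof is correct and takes essentially the same approach as the paper: assume for contradiction that one of $|A(v)\setminus A(u)|$, $|D(u)\setminus D(v)|$ exceeds $\tau$, show that with high probability some node from that difference set is sampled into $S$, and observe that such a sampled node would force $|A_S(u)|\neq |A_S(v)|$ or $|D_S(u)|\neq |D_S(v)|$, contradicting $u,v\in V_{i,j}$. Your explicit union bound over the $O(n^2 m)$ pair/time events, together with the constant-factor inflation of the sampling probability, is a useful bit of rigor that the paper leaves implicit.
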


\begin{proof}
Suppose that there is a path from $u$ to $v$ in the subgraph $G_{i,j}$, but  the ordered pair $(u,v)$ is {\em not} $\tau$-related in $G$. Then, either $|A(v) \setminus A(u)| > \tau$ or $|D(u) \setminus D(v)| > \tau$. For the rest of the proof, we assume that $|D(u) \setminus D(v)| > \tau$. An analogous argument applies in the other case. Each node $x \in D(u) \setminus D(v)$ is sampled in $S$ independently with probability $\log n/\tau$. Since $|D(u) \setminus D(v)| > \tau$, by linearity of expectation at least $\log n$ nodes from $D(u) \setminus D(v)$ are sampled in $S$. Applying Chernoff bound, we conclude that $S \cap \left(D(u) \setminus D(v)\right) \neq \emptyset$ with high probability. We condition on this event, and consider a node $x' \in S \cap \left(D(u) \setminus D(v)\right)$. Since $x' \in D(u)$, $x' \notin D(v)$ and $x' \in S$, we infer that $x' \in D_S(u)$ and $x' \notin D_S(v)$, and we get $D_S(u) \neq D_S(v)$. As there is a path from $u$ to $v$ in $G$, we have $D_S(v) \subseteq D_S(u)$. The last two observations, taken together, imply that $|D_S(u)| > |D_S(v)|$. Since both the nodes $u$ and $v$ belong to the same subset $V_{i,j}$, by definition we also have $|D_S(u)| = |D_S(v)|$. This leads to a contradiction and  our initial assumption, therefore, must have been wrong. This concludes the proof of the lemma.
\end{proof}

\subsection{The algorithm}
\label{app:sec:algo}

Since edges never get deleted from the graph $G$, our algorithm does not have to do anything once it detects a cycle (for the graph will continue to have a cycle after every edge-insertion in the future). Accordingly, we assume that the graph $G$ has  remained acyclic throughout the sequence of edge insertions till the present moment, and our goal is to check if  the next edge-insertion creates a cycle in $G$. Our algorithm maintains a topological ordering $\prec$ of the node-set $V$ in the graph $G$ that is {\em consistent} with the total order $\prec^*$ on the subsets of nodes $\{V_{i,j}\}$, as defined in Section~\ref{app:sec:algo}. Specifically, we maintain a {\em priority} $k(x)$ for every node $x \in V$, and for every two nodes $x, y \in V$ with $V(x) \prec^* V(y)$ we  ensure that $k(x) \prec k(y)$. As long as $G$ remains acyclic, the existence of such a topological ordering $\prec$ is guaranteed by Lemma~\ref{app:lm:cross}.

\medskip
\noindent {\bf Data Structures.} We maintain the partition $\{V_{i,j}\}$ of the node-set $V$ and the subgraphs $\{G_{i,j} = (V_{i,j}, E_{i,j})\}$ induced by the subsets in this partition. We  use an {\em ordered list} data structure~\cite{DietzS87} on the node-set $V$ to implicitly maintain the priorities $\{k(x)\}$ associated with the topological ordering $\prec$. This data structure supports each of the following operations in $O(1)$ time. 
\begin{itemize}
\item INSERT-BEFORE($x, y$): This inserts the node $y$ just before the node $x$ in the topological ordering.
\item INSERT-AFTER($x, y$): This inserts the node $y$ just after the node $x$ in the topological ordering. 
\item DELETE($x$): This deletes the node $x$ from the existing topological ordering.
\item COMPARE($x, y$): If $k(x) \prec k(y)$, then this returns YES, otherwise this returns NO.
\end{itemize}

\noindent The implementation of our algorithm requires the creation of   two {\em dummy nodes} $x_{i,j}$ and $y_{i,j}$ in every subset $V_{i, j}$. We ensure that $k(x_{i,j}) \prec k(x) \prec k(y_{i,j})$ for all $x \in  V_{i, j}$. In words, the dummy node $x_{i,j}$ (resp. $y_{i,j}$) comes {\em first} (resp. {\em last}) in the topological order among all the nodes in $V_{i,j}$. Further, for all nodes $x \in V$ with $V(x) \prec V_{i,j}$ we have $k(x) \prec k(x_{i,j})$, and for all nodes $x \in V$ with $V_{i,j} \prec V(x)$ we have $k(y_{i,j}) \prec k(x)$.

\medskip
\noindent {\bf Handling the insertion of an edge $(u,v)$ in $G$.} By induction hypothesis, suppose that the graph $G$ currently does not contain any cycle and we are maintaining the topological ordering $\prec$ in $G$. At this point, an edge $(u,v)$ gets inserted into $G$. Our task now is to first figure out if the insertion of this edge creates a cycle, and if not, then to update the topological ordering $\prec$. We perform this task in four {\em phases}, as described below.
\begin{enumerate}
\item In phase I, we update the subgraphs $\{ G_{i,j}\}$. See Section~\ref{app:sec:phase:partition} for details.
\item In phase II, we update the total order $\prec$ on the node-set $V$ to make it consistent with the total order $\prec^*$ on the subsets of nodes $\{V_{i,j}\}$. See Section~\ref{app:sec:phase:order} for details.
\item In phase III, we check if the edge-insertion  creates a cycle in $G$. See Section~\ref{app:sec:phase:cycle} for details.
\item If phase III fails to detect a cycle, then in phase IV we further update (if necessary) the total order $\prec$ so as to ensure that it is a topological order in the current graph $G$. See Section~\ref{app:sec:phase:final} for details.
\end{enumerate}

\subsubsection{Phase I: Updating the subgraphs $\{ G_{i,j} = (V_{i,j}, E_{i,j}) \}$}
\label{app:sec:phase:partition}

In this phase, we update the subgraphs $\{ G_{i,j}\}$. Lemma~\ref{app:lm:partition:time} upper bounds the total time spent in this phase. Although the lemma was derived in~\cite{BernsteinC18}, we include its proof  here for the sake of completeness.

\begin{lemma}~\cite{BernsteinC18}
\label{app:lm:partition:time}
There is an algorithm for maintaining the subgraphs $\{ G_{i,j} = (V_{i,j}, E_{i,j})\}$ in an incremental setting with $\tilde{O}(m n/\tau)$ expected total update time.
\end{lemma}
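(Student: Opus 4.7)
(Plan.)
The plan is to implement Phase~I by maintaining, for each sampled node $s \in S$, two Italiano-style incremental single-source reachability (SSR) data structures: one on $G$ rooted at $s$ (tracking $D(s)$, the descendants of $s$) and one on the reverse graph $G^{\mathrm{rev}}$ rooted at $s$ (tracking $A(s)$, the ancestors of $s$). Each SSR data structure processes a sequence of $m$ edge insertions in $\tilde{O}(m)$ total time. Moreover, these data structures can be made to fire a constant-time ``notification'' each time a new node $x$ becomes reachable from $s$ (respectively, each time $s$ becomes reachable from $x$). From these notifications I can incrementally maintain, at each node $x \in V$, two counters $|A_S(x)|$ and $|D_S(x)|$, which give the index $(i,j)$ of the subset $V_{i,j}$ containing $x$.

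Whenever one of these counters at a node $x$ increases (either $|A_S(x)|$ or $|D_S(x)|$ increases by one), I move $x$ from its current subset $V_{i,j}$ into the appropriate new subset $V_{i',j'}$. Simultaneously, I update the edge sets $E_{i,j}$: for each edge incident to $x$ in $G$, I recompute whether its two endpoints now lie in the same subset and, if so, insert it into the corresponding $E_{i',j'}$ (otherwise, remove it from whatever internal edge set it belonged to previously). When the new edge $(u,v)$ itself is inserted, I also check whether $V(u) = V(v)$ and, if so, add $(u,v)$ to $E_{V(u)}$. By Assumption~\ref{app:assume}, the total degree of $x$ in $G$ is $O(m/n)$, so each such reclassification of a node $x$ costs $\tilde{O}(m/n)$.

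For the running time, I first note that $\mathbf{E}[|S|] = O(n \log n / \tau) = \tilde{O}(n/\tau)$ since each of the $n$ nodes is sampled with probability $(\log n)/\tau$. Running the $2|S|$ SSR data structures over the entire sequence of $m$ edge insertions costs $\tilde{O}(m \cdot |S|) = \tilde{O}(mn/\tau)$ in expectation. For the cost of moving nodes between subsets, I recall (as in the proof sketch of Lemma~\ref{lm:phase:order:time}) that each counter $|A_S(x)|$ or $|D_S(x)|$ is monotone non-decreasing and lies in $[0,|S|]$, so each node $x$ changes its subset at most $2|S|$ times. Hence the total work for subset reassignments and edge reclassifications is at most $O(n \cdot |S|) \cdot \tilde{O}(m/n) = \tilde{O}(m|S|) = \tilde{O}(mn/\tau)$ in expectation. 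Summing the two contributions gives the claimed $\tilde{O}(mn/\tau)$ total update time.

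The main obstacle is verifying that an off-the-shelf incremental SSR data structure (e.g.\ the one of Italiano, as used in~\cite{BernsteinC18}) can indeed be augmented to emit the required constant-amortized notifications whenever a new node joins the reachable set, and that this augmentation does not blow up the $\tilde{O}(m)$ per-source bound; once this is taken for granted, the accounting above is routine.
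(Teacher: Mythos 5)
Your proof takes essentially the same route as the paper: run $|S|$ incremental single-source reachability structures from the sampled nodes (on $G$ and on $G^{\mathrm{rev}}$), translate their reachability updates into increments of per-node counters $|A_S(x)|, |D_S(x)|$, and pay $\tilde{O}(m/n)$ per subset move (bounded by $2|S|$ per node via monotonicity) to maintain the induced edge sets $E_{i,j}$, totalling $\tilde{O}(m|S|) = \tilde{O}(mn/\tau)$ in expectation. The ``obstacle'' you flag at the end is not a real one: the standard incremental SSR data structures (e.g.\ Italiano's) already explicitly discover and mark each newly-reachable node as part of their $\tilde{O}(m)$ total-time budget, so the notification mechanism is immediate.
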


\begin{proof}(Sketch) We first show how to maintain the subsets of nodes $\{V_{i, j}\}$. 
The key observation is that in the incremental setting, single-source reachability can be maintained in $O(m)$ total update  time. Specifically, for any  node $v \in V$, we can maintain the sets $A(v)$ and $D(v)$ in $O(m)$ total update time. We use this  subroutine as follows. Initially, when the graph is empty, we set $A_S(v) = D_S(v) = \emptyset$ for all nodes $v \in V \setminus S$ and $A_S(v) = D_S(v) = \{v \}$ for all nodes $v \in S$. Subsequently, while processing the sequence of edge insertions into the graph, we  run $|S|$ incremental single source reachability algorithms -- one for each sampled node $s \in S$. Whenever one of these incremental algorithms (say for the sampled node $s \in S$) inserts a node $v \in V$ into the set $A(s)$ or the set $D(s)$, we respectively insert the node $s$ to the sets $A_S(v)$  or $D_S(v)$. This procedure correctly maintains the sets $A_S(v)$ and $D_S(v)$ for all  $v \in V$. Its  total update time is equal to $|S|$ times the total update time of one incremental single-source reachability algorithm. We accordingly get an expected total update time of $O(|S| \cdot m) = \tilde{O}(m n/\tau)$. The last equality holds since $\mathbf{E}[|S|] = n\log n/\tau$ by linearity of expectation. Now, note that it is straightforward to extend this procedure to maintain the subsets $\{V_{i,j}\}$ without incurring any overhead in the total update time:  We  keep two counters for each node $v \in V$, to keep track of the sizes of the sets $A_S(v)$ and $D_S(v)$. Whenever a sampled node $s$ gets added to one of these sets, we increment the corresponding counter and accordingly move the node $v$ from one subset to another in the partition $\{V_{i,j}\}$.

We now show how to maintain the subsets of edges $\{E_{i,j}\}$. Whenever a node $v$ moves from one subset (say) $V_{i,j}$ to another subset (say) $V_{i',j'}$, we visit all the (incoming ``and" outgoing) neighbors of $v$ and inform them about the fact that $v$ has moved to a new subset $V_{i',j'}$. While visiting a neighbor $u$ of $v$ we also check which subset does the node $u$ currently belong to, and this helps us identify the set of edges in $E_{i,j}$ that are incident on $v$. Overall, this takes time proportional to the sum of the in and out degrees of $v$. Under Assumption~\ref{app:assume},   the latter quantity  is  at most $O(m/n)$. Let $c_v$ be a counter which keeps track of the total number of times a node $v$ moves from one subset to another in the partition $\{ V_{i,j} \}$. The preceding discussion implies that the time spent on the node $v$ is at most $O(c_v \cdot m/n)$, and the total time spent on maintaining the subsets of edges $\{ E_{i,j} \}$ is at most $\sum_{v \in V} O(c_v \cdot m/n)$. Now, note that as edges only keep getting inserted in $G$ in the incremental setting, the sets $A_S(v)$ and $D_S(v)$ can only keep getting larger and larger with the passage of time. In particular,  whenever a node $v$ moves from one subset $V_{i,j}$ to another subset $V_{i', j'}$, we must have either $\{i' \geq i+1 \text{ and } j' \geq j\}$ or $\{ i' \geq i \text{ and } j' \geq j+1\}$. Since  $0 \leq i, j \leq |S|$ for every subset $V_{i,j}$, we infer that a node $v$ can move from one subset to another at most $2 \cdot |S|$ times, and hence we have $c_v \leq 2 \cdot |S|$ for all $v \in V$. Accordingly, the total time spent on maintaining the subsets of edges $\{E_{i,j}\}$ is upper bounded by $\sum_{v \in V} O(c_v \cdot (m/n)) = \sum_{v \in V} O(|S| \cdot (m/n)) = O(|S| \cdot m) = \tilde{O}(m n/\tau)$. The last equality holds since $n \log n/\tau$ nodes are sampled in $S$ in expectation.  

From the above discussion, it follows that we can maintain both the subsets of nodes $\{V_{i,j}\}$ and the subsets of edges $\{E_{i,j}\}$ in $\tilde{O}(mn/\tau)$ total expected update time. Hence, we can clearly maintain the induced subgraphs $\{G_{i,j}\}$ also in $\tilde{O}(mn/\tau)$ total expected update time. 
\end{proof}

\subsubsection{Phase II: Making $\prec$ consistent with the total order $\prec^*$ on the subsets of nodes $\{V_{i,j}\}$}
\label{app:sec:phase:order}

Let $G^-$ and $G^+$ respectively denote the graph $G$ just before and just after the insertion of the edge $(u,v)$. Similarly, for every node $x \in V$,  let $V^-(x)$ and $V^+(x)$ respectively denote the subset $V(x)$ just before and just after the insertion of the edge $(u, v)$.
Furthermore, let $X = \{ x \in V : V^-(x) \neq V^+(x)\}$ denote the set of all nodes that move from one subset to another during phase I in the partition $\{ V_{i,j} \}$.  Finally, for each $i, j \in [0, |S|]$, define  two subsets $X^{up}_{i,j} = \{ x \in X : V^-(x) \prec V^+(x) = V_{i,j} \}$ and $X^{down}_{i,j} = \{ x \in X : V_{i,j} = V^+(x) \prec V^-(x) \}$. In words, during phase I the nodes $x \in X^{up}_{i,j}$ move to the subset $V_{i,j}$  from a subset of lower priority, i.e., they move {\em up}. The nodes $x \in X^{down}_{i,j}$, on the other hand, move to the subset $V_{i,j}$ during phase I from a subset of higher priority, i.e., they move {\em down}. The set $X$ is clearly partitioned into the subsets $\{X_{i,j}^{up}\}$ and $\{X^{down}_{i,j}\}$. We need to change the positions of the nodes $x \in X$ in the total order $\prec$ to make it consistent with the total order $\prec^*$ defined over $\{V_{i,j}\}$. This is done as follows.
\begin{itemize}
\item For every nonempty subset $X^{up}_{i,j}$, we  change the positions of the nodes $x \in X^{up}_{i,j}$ in the total order $\prec$ by calling the subroutine  MOVE-UP($i,j$). See Figure~\ref{app:fig:move:up} for details. Intuitively, this subroutine ensures that the following three conditions are satisfied by the total order $\prec$. (1) All the nodes $x \in X^{up}_{i,j}$ are placed in between the dummy nodes $x_{i,j}$ and $y_{i,j}$. In other words, the nodes in $X^{up}_{i,j}$ are placed within the designated boundary of the interval that ought to be defined by the nodes in the subset $V_{i,j}$. (2) The relative ordering among the nodes within $X^{up}_{i,j}$ does not change, although each of the nodes in $X^{up}_{i,j}$ itself moves from one position to another in the total order  $\prec$. (3) When the subroutine returns, every node $x \in X^{up}_{i,j}$ is still placed {\em before} every non-dummy node $x' \in V$ with $V^-(x') = V^+(x') =  V_{i,j}$ in the total order $\prec$.
\item Analogously, for every nonempty subset $X^{down}_{i,j}$, we  change the positions of the nodes $y \in X^{down}_{i,j}$ in the total order  $\prec$ by calling the subroutine MOVE-DOWN($i,j$). See Figure~\ref{app:fig:move:down} for  details.  This subroutine ensures that the following three conditions are satisfied by the total order $\prec$. (1) All the nodes $y \in X^{down}_{i,j}$ are placed in between the dummy nodes $x_{i,j}$ and $y_{i,j}$. In other words, the nodes in $X^{down}_{i,j}$ are placed within the designated boundary of the interval that ought to be defined by the nodes in the subset $V_{i,j}$. (2) The relative ordering among the nodes within $X^{down}_{i,j}$ does not change, although each of the nodes in $X^{down}_{i,j}$ itself moves from one position to another in the total order  $\prec$. (3) When the subroutine returns, every node $y \in X^{down}_{i,j}$ is still placed {\em after} every non-dummy node $y' \in V$ with $V^-(y') = V^+(y') = V_{i,j}$ in the total order $\prec$.
\end{itemize}
From the description of the procedure above, we immediately get the following two corollaries.

\begin{corollary}
\label{app:cor:phase:order}
At the end of phase II the total order $\prec$ on $V$ is consistent with the total order $\prec^*$  on  $\{V_{i,j}\}$. Specifically, for any two nodes $x$ and $y$, if $V(x) \prec^* V(y)$, then we also have $k(x) \prec k(y)$. 
\end{corollary}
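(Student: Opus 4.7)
The plan is to establish a cleaner intermediate invariant from which the corollary follows immediately: at the end of phase II, every non-dummy node $x \in V$ satisfies $k(x_{V(x)}) \prec k(x) \prec k(y_{V(x)})$, i.e., $x$ lies strictly between the dummy endpoints of its post-phase-I subset. Combined with the ordering invariant on the dummies themselves, namely $k(y_{i,j}) \prec k(x_{i',j'})$ whenever $V_{i,j} \prec^* V_{i',j'}$, the corollary follows: given any two non-dummy nodes $x,y$ with $V(x) = V_{i,j} \prec^* V_{i',j'} = V(y)$, I chain
\[
k(x) \prec k(y_{i,j}) \prec k(x_{i',j'}) \prec k(y)
\]
to conclude $k(x) \prec k(y)$. (For $V(x) = V(y)$ the conclusion is trivial since the condition $V(x) \prec^* V(y)$ is vacuous.)

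To verify the intermediate invariant, I would do a case analysis based on whether $x$ changed subset during phase I. If $x \notin X$, i.e., $V^-(x) = V^+(x) = V_{i,j}$, then the inductive hypothesis (the corollary applied at the end of the previous edge-insertion) gives $k(x_{i,j}) \prec k(x) \prec k(y_{i,j})$ at the start of the current edge-insertion. Since phase II only calls MOVE-UP$(i',j')$ and MOVE-DOWN$(i',j')$ on nodes in $X^{up}_{i',j'} \cup X^{down}_{i',j'} \subseteq X$, which does not contain $x$, and since the dummy nodes are themselves never repositioned by phase II, the relative position of $x$ with respect to $x_{i,j}$ and $y_{i,j}$ is preserved. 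If instead $x \in X^{up}_{i,j}$, condition (1) of MOVE-UP$(i,j)$ explicitly places $x$ between $x_{i,j}$ and $y_{i,j}$; the symmetric condition (1) of MOVE-DOWN$(i,j)$ handles the case $x \in X^{down}_{i,j}$.

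The main obstacle will be pinning down the dummy-ordering invariant, since the excerpt only informally stipulates that $k(x_{i,j}) \prec k(x) \prec k(y_{i,j})$ for $x \in V_{i,j}$ and that the dummies of $V_{i,j}$ bracket every node of every lower (resp.\ higher) $\prec^*$-subset, without specifying when or how the dummies are inserted into the ordered list. I would make this explicit as a standalone bookkeeping claim: each pair of dummies $(x_{i,j}, y_{i,j})$ is created once (lazily, the first time the subset $V_{i,j}$ becomes nonempty) and inserted into $\prec$ via two calls to INSERT-BEFORE / INSERT-AFTER at the positions dictated by $\prec^*$ (immediately after $y_{i'',j''}$ for the $\prec^*$-predecessor subset currently present, and immediately before $x_{i''',j'''}$ for the $\prec^*$-successor), and is never moved thereafter. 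Given this clean statement, the chain of inequalities above goes through, and the corollary reduces to the routine case analysis already sketched.
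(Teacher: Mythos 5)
Your proposal is correct and spells out the argument the paper compresses into a single remark (``From the description of the procedure above, we immediately get the following two corollaries.''): the intermediate bracketing invariant $k(x_{i,j}) \prec k(x) \prec k(y_{i,j})$ for all $x \in V_{i,j}$ is exactly the maintained dummy-node invariant declared in the data-structure section, and your case analysis over $x \in X^{up}_{i,j}$, $x \in X^{down}_{i,j}$, and $x \notin X$ is precisely the check that conditions (1)--(3) of MOVE-UP and MOVE-DOWN are designed to certify. The one small imprecision is the parenthetical ``the corollary applied at the end of the previous edge-insertion'' as your inductive hypothesis: the corollary by itself does not entail bracketing by dummies (it only orders real nodes), so the induction should be carried directly on the bracketing and dummy-ordering invariants, which is plainly what you intend; you are also right that the paper never specifies when or how dummy nodes are inserted into the ordered list, so the bookkeeping claim you add is a genuine, if routine, gap-filler.
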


\begin{corollary}
\label{app:cor:phase:order:1}
Consider any two nodes $x$ and $y$ that belong to the same subset (say) $V_{i,j}$ at the end of phase I, that is, we have $V^+(x) = V^+(y) = V_{ij}$. Then, while updating the total order $\prec$ in phase II, our algorithm does not change the relative ordering among these two nodes $x$ and $y$.  In other words, we get $k(x) \prec k(y)$ at the end of phase II iff $k(x) \prec k(y)$ at the end of phase I.
\end{corollary}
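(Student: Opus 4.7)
The plan is to prove the corollary by a case analysis on whether each of $x,y$ was moved by phase~I into $V_{i,j}$ --- that is, whether it belongs to $X^{up}_{i,j}$, to $X^{down}_{i,j}$, or to the ``static'' set of nodes satisfying $V^{-}(\cdot)=V^{+}(\cdot)=V_{i,j}$. The key pre-condition that feeds every case is the invariant carried over from the previous iteration: at the start of phase~II the total order $\prec$ is consistent with $\prec^{*}$ taken over the \emph{old} partition. Consequently, before phase~II begins, every node currently in $X^{up}_{i,j}$ sits before $x_{i,j}$ in $\prec$, every static node lies strictly inside the interval $(x_{i,j}, y_{i,j})$, and every node currently in $X^{down}_{i,j}$ sits after $y_{i,j}$. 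I will combine this with the three properties explicitly listed for MOVE-UP$(i,j)$ and MOVE-DOWN$(i,j)$: (1) inclusion in $(x_{i,j}, y_{i,j})$ after the call, (2) preservation of the intra-set relative order, and (3) placement before (resp.\ after) every static non-dummy node of $V_{i,j}$.

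First I would dispose of the easy case in which neither $x$ nor $y$ lies in $X$: since phase~II only issues INSERT-BEFORE / INSERT-AFTER / DELETE operations on nodes inside $X$, the ordered-list structure never touches $x$ or $y$, and their relative order is preserved trivially. Next I would handle the case where exactly one of them moved, say $x\in X^{up}_{i,j}$ and $y$ static: before phase~II, consistency with $\prec^{*}$ gives $k(x) \prec k(x_{i,j}) \prec k(y)$, and after the call to MOVE-UP$(i,j)$ property~(3) places $x$ before $y$, so $k(x) \prec k(y)$ persists; the three symmetric sub-cases (using MOVE-DOWN, or swapping the roles of $x$ and $y$) are identical. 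For the case in which both nodes lie in $X^{up}_{i,j}$, or both in $X^{down}_{i,j}$, property~(2) of the corresponding subroutine immediately yields what we need, and no other call in phase~II touches them.

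The only delicate situation is the one in which one of the two nodes --- say $x$ --- lies in $X^{up}_{i,j}$ while the other --- $y$ --- lies in $X^{down}_{i,j}$. Before phase~II, consistency with $\prec^{*}$ forces $k(x) \prec k(x_{i,j}) \prec k(y_{i,j}) \prec k(y)$, so $k(x) \prec k(y)$ initially. After phase~II, I want to argue that $k(x) \prec k(y)$ still holds. If $V_{i,j}$ contains at least one static non-dummy node $z$, then property~(3) of MOVE-UP$(i,j)$ gives $k(x) \prec k(z)$ and property~(3) of MOVE-DOWN$(i,j)$ gives $k(z) \prec k(y)$, finishing the argument. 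The hard part, and the only place where I would need to peek at the actual pseudocode (Figures~\ref{app:fig:move:up} and~\ref{app:fig:move:down}), is the corner sub-case in which $V_{i,j}$ contains no static non-dummy node at all; here I would verify that MOVE-UP$(i,j)$ inserts $X^{up}_{i,j}$ immediately after $x_{i,j}$ and MOVE-DOWN$(i,j)$ inserts $X^{down}_{i,j}$ immediately before $y_{i,j}$, so that the disjointness $X^{up}_{i,j}\cap X^{down}_{i,j}=\emptyset$ still forces $k(x) \prec k(y)$. Apart from this pseudocode check, the whole proof is a short consequence of consistency with $\prec^{*}$ before phase~II plus the three listed properties of the MOVE subroutines.
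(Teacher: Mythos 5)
Your proposal is correct and matches the paper's intended argument: the paper leaves the proof implicit (``From the description of the procedure above, we immediately get the following two corollaries''), and your case analysis using the three stated properties of MOVE-UP$(i,j)$ and MOVE-DOWN$(i,j)$, together with consistency of $\prec$ with $\prec^*$ over the old partition $V^-(\cdot)$ at the start of phase~II, is exactly the elaboration being alluded to. You also rightly flag the one place where the stated properties alone do not suffice---the $X^{up}_{i,j}$ vs.\ $X^{down}_{i,j}$ sub-case when $V_{i,j}$ has no static non-dummy node---and the pseudocode in Figures~\ref{app:fig:move:up} and~\ref{app:fig:move:down} (inserting right after $x_{i,j}$ and right before $y_{i,j}$, respectively) does close that corner case as you anticipate.
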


\begin{figure}[h!]
	\centerline{\framebox{
			\begin{minipage}{5.5in}
				\begin{tabbing}
					1. \ \ \  \= $Q = X^{up}_{i,j}$. \\
					2. \> {\sc While} $Q \neq \emptyset$: \\
					3. \> \ \ \ \ \ \ \ \= $x = \arg \max_{x' \in Q} \{k(x')\}$. \\
					4. \> \> $Q \leftarrow Q \setminus \{x\}$. \\
					5. \>  \> INSERT-AFTER($x_{i,j}, x$).			
					\end{tabbing}
			\end{minipage}
	}}
	\caption{\label{app:fig:move:up} Subroutine:  MOVE-UP$(i, j)$.}
\end{figure}

\begin{figure}[h!]
	\centerline{\framebox{
			\begin{minipage}{5.5in}
				\begin{tabbing}
					1. \ \ \  \= $Q = X^{down}_{i,j}$. \\
					2. \> {\sc While} $Q \neq \emptyset$: \\
					3. \> \ \ \ \ \ \ \ \=  $x = \arg \min_{x' \in Q} \{k(x')\}$. \\
					4. \> \> $Q \leftarrow Q \setminus \{x\}$. \\
					5. \>  \> INSERT-BEFORE($y_{i,j}, x$).			
					\end{tabbing}
			\end{minipage}
	}}
	\caption{\label{app:fig:move:down} Subroutine:  MOVE-DOWN$(i, j)$.}
\end{figure}

We now prove a crucial lemma that will be useful when we go to phase III.

\begin{lemma}
\label{app:lm:phase:order:correct}
At the end of phase II the total order $\prec$ on $V$  remains a valid topological ordering of $G^-$, where $G^-$ denotes the graph $G$ just before the insertion of the edge $(u,v)$.
\end{lemma}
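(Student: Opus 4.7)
The plan is a two-case analysis on an arbitrary edge of $G^-$, based on whether or not its endpoints lie in the same subset of the partition $\{V_{i,j}\}$ at the end of phase I. Concretely, I need to show that for every edge $(a,b) \in E^-$, we have $k(a) \prec k(b)$ at the end of phase II.

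Fix such an edge $(a,b) \in E^-$. First I would observe that phase I touches only the partition and the induced subgraphs: it does not modify the order $\prec$. Hence the order at the start of phase II equals the order at the start of phase I, which by the induction hypothesis is a valid topological order of $G^-$. In particular, $k(a) \prec k(b)$ holds at the start of phase II. Now I split into cases according to $V^+(a)$ and $V^+(b)$.

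In the cross-edge case $V^+(a) \neq V^+(b)$, I would note that since $E^- \subseteq E^+$, the edge $(a,b)$ is also a cross edge in $G^+$. Applying Lemma~\ref{app:lm:cross} to $G^+$ gives $V^+(a) \prec^* V^+(b)$, and then Corollary~\ref{app:cor:phase:order} (which guarantees consistency of $\prec$ with $\prec^*$ at the end of phase II) yields $k(a) \prec k(b)$ at the end of phase II, as required. In the internal-edge case $V^+(a) = V^+(b) = V_{i,j}$, I would invoke Corollary~\ref{app:cor:phase:order:1}: since both endpoints end up in the same subset $V_{i,j}$ after phase I, phase II preserves their relative order. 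Combined with the observation from the previous paragraph that $k(a) \prec k(b)$ holds at the start of phase II, we conclude $k(a) \prec k(b)$ at the end of phase II.

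There is essentially no hard step in this argument; the real content lives in Corollaries~\ref{app:cor:phase:order} and~\ref{app:cor:phase:order:1}, which are already established. The only subtlety worth flagging explicitly is why we are entitled to invoke Lemma~\ref{app:lm:cross} in the cross-edge case: we must apply it to $G^+$ (not $G^-$), and use the trivial inclusion $E^- \subseteq E^+$ so that the edge under consideration is present when the lemma is invoked. With that subtlety addressed, combining the two cases over all edges of $G^-$ shows that $\prec$ remains a valid topological order of $G^-$ at the end of phase II.
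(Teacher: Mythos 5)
Your proposal is correct and matches the paper's own proof sketch: both split on whether the edge is a cross edge or internal edge under the phase-I (post-update) partition, invoke Lemma~\ref{app:lm:cross} together with Corollary~\ref{app:cor:phase:order} in the cross-edge case, and Corollary~\ref{app:cor:phase:order:1} together with the pre-phase-II validity of $\prec$ in the internal-edge case. The only difference is that you make explicit two points the paper leaves implicit (that phase~I does not touch $\prec$, and that Lemma~\ref{app:lm:cross} must be applied to $G^+$), which is a slight improvement in rigor rather than a different route.
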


\begin{proof}(Sketch)
We show that $k(x) \prec k(y)$ at the end of phase II for every edge $(x, y)$ in $G^-$. We consider two cases. 
{\em Case (1). The two endpoints $x$ and $y$ belong to two different subsets at the end of phase I, that is,  $V^+(x) \neq V^+(y)$.} In this case,   Lemma~\ref{app:lm:cross} implies that $V^+(x) \prec^* V^+(y)$. Since the total order $\prec$ is consistent with $\prec^*$ at the end of phase II (see Corollary~\ref{app:cor:phase:order}), we  also have $k(x) \prec k(y)$ at this point in time. {\em Case (2). The two endpoints $x$ and $y$ belong to the same subset (say) $V_{i,j}$ at the end of phase II, that is, $V^+(x) = V^+(y) = V_{i,j}$.} Since $\prec$ was a topological ordering of $G^-$ before the insertion of the edge $(u,v)$, we have $k(x) \prec k(y)$ just before phase II.  Corollary~\ref{app:cor:phase:order:1} implies that the relative ordering between the two nodes $x$ and $y$ does not change during phase II. Hence, even at the end of phase II we have $k(x) \prec k(y)$. This concludes the proof of the lemma.
\end{proof}

We now bound the total time spent by the algorithm in phase II.

\begin{lemma}
\label{app:lm:phase:order:time}
The total time spent in phase II across all edge-insertions is at most $\tilde{O}(n^2/\tau)$.
\end{lemma}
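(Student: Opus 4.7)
The plan is to bound the total time spent in phase II by the total number of node-reassignments across the partition $\{V_{i,j}\}$, and then to bound that quantity by observing that each individual node can be reassigned at most $O(|S|)$ times over the entire sequence of edge insertions.

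First, I would show that a single call to MOVE-UP$(i,j)$ (respectively, MOVE-DOWN$(i,j)$) runs in time $\tilde{O}(|X_{i,j}^{up}|)$ (respectively, $\tilde{O}(|X_{i,j}^{down}|)$). The $\arg\max$ / $\arg\min$ in line~3 can be supported in $\tilde{O}(1)$ time per iteration using a standard heap built on the set $Q$, and each INSERT-AFTER / INSERT-BEFORE call in line~5 costs $O(1)$ time via the ordered-list data structure of Dietz and Sleator~\cite{DietzS87}. Since each iteration of the while-loop deletes one element from $Q$ and performs $\tilde{O}(1)$ additional work, the total cost per invocation is (up to polylog factors) proportional to the size of the input set.

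Second, I would argue that identifying the sets $\{X_{i,j}^{up}\}$ and $\{X_{i,j}^{down}\}$ themselves does not dominate the bound: whenever phase~I moves a node $x$ from $V^-(x)$ to $V^+(x)$, we can immediately place $x$ into whichever of $X_{V^+(x)}^{up}$ or $X_{V^+(x)}^{down}$ is appropriate, based on a single $\prec^*$-comparison between $V^-(x)$ and $V^+(x)$. This bookkeeping is subsumed by the cost of phase~I, which Lemma~\ref{app:lm:partition:time} bounds by $\tilde{O}(mn/\tau)$ in expectation. Combining this with the per-call bound above, the total time in phase~II across all edge-insertions is at most $\tilde{O}(C)$, where $C$ is the total number of node-reassignments (summed over all nodes and all time steps).

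Finally, I would bound $C$. The crucial observation, identical in spirit to the sketch given in the main body, is that edges are only inserted and never deleted, so for every node $x \in V$ both $A_S(x)$ and $D_S(x)$ are monotonically non-decreasing in time. Hence each reassignment of $x$ must strictly increase either $|A_S(x)|$ or $|D_S(x)|$, and since both of these quantities always lie in $[0, |S|]$, the node $x$ can be reassigned at most $2|S|$ times over the entire sequence. Summing over nodes gives $C \le 2n|S|$, and since $\mathbf{E}[|S|] = (n\log n)/\tau$ by linearity of expectation, we obtain $\mathbf{E}[C] = \tilde{O}(n^2/\tau)$, which is the stated bound. I do not foresee any serious obstacle in this proof; the only subtle point is to ensure that the work of setting up the heaps and classifying each moved node into the correct $X^{up}_{i,j}$ or $X^{down}_{i,j}$ bucket is charged to phase~I rather than to phase~II, so that the $\tilde{O}(n^2/\tau)$ bound truly reflects only the ordered-list manipulations performed inside MOVE-UP and MOVE-DOWN.
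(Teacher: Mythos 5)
Your proposal is correct and follows essentially the same route as the paper's own proof: it charges per-call MOVE-UP/MOVE-DOWN work to $\tilde{O}(1)$ per node via a heap plus the Dietz--Sleator ordered list, subsumes the bucket-construction bookkeeping into phase~I, reduces the total to the reassignment counter $C$, and then bounds $C \le 2n|S|$ using the monotonicity of $|A_S(x)|$ and $|D_S(x)|$ together with $\mathbf{E}[|S|] = \tilde{O}(n/\tau)$. No gaps.
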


\begin{proof}(Sketch)
The time taken to identify and construct the nonempty subsets $X_{i,j}^{up}$ and $X_{i,j}^{down}$ is subsumed by the time spent in phase I in updating the partition $\{V_{i,j} \}$. We now bound the total time spent across all the calls to the subroutines MOVE-UP($i,j$) and MOVE-DOWN($i,j$).

Step 3 in Figure~\ref{app:fig:move:up} (resp. Figure~\ref{app:fig:move:down}) can be implemented in $\tilde{O}(1)$ time by using a max-heap (resp. min-heap) data structure. Similarly, steps 5 in Figure~\ref{app:fig:move:up} and Figure~\ref{app:fig:move:down} can be implemented in $\tilde{O}(1)$ time by using the ordered-list data structure~\cite{DietzS87}. We therefore conclude that the time spent in a single call to the subroutine MOVE-UP($i,j$) or MOVE-DOWN($i,j$) is at most $\tilde{O}\left(\left|X^{up}_{i,j}\right|\right)$ or $\tilde{O}\left(\left|X^{down}_{i,j}\right|\right)$, respectively. From the definitions of the sets $X^{up}_{i,j}$ and $X^{down}_{i,j}$, it follows that the total time spent on the calls to these two subroutines during the entire sequence of edge-insertions in $G$ is at most $\tilde{O}(C)$, where $C$ is a counter that keeps track of the number of times some node moves from one subset in the partition $\{V_{i, j}\}$ to another.

It now remains to upper bound the value of the counter $C$. Towards this end, recall that  a node $x \in V$ belongs to a subset $V_{i,j}$ iff $|A_S(x)| = i$ and $|D_S(x)| = j$.  As more and more edges keep getting inserted in $G$, the node $x$ can never lose a sampled node in $S$ as its ancestor or descendent. Instead, both the sets $A_S(x)$ and $D_S(x)$ can only grow with the passage of time. In particular, each time the node $x$ moves from one subset in the partition $\{V_{i,j} \}$ to another, either $|A_S(v)|$ or $|D_S(v)|$ increases by at least one. Since $|A_S(v)|, |D_S(v)| \in [0, |S|]$, each node $x$ can move from one subset in the partition $\{V_{i,j}\}$ to another at most $2 \cdot |S|$ times. In other words, each node in $V$ contributes at most $2 \cdot |S|$ to the counter $C$, and  we have $C \leq |V| \cdot 2 |S| = O(n |S|)$. Since $\mathbf{E}[|S|] = \tilde{O}(n/\tau)$, the total expected time spent in all the calls to the subroutines MOVE-UP($i,j$) and MOVE-DOWN($i,j$) is at most $\tilde{O}(n^2/\tau)$, and the total time spent in phase II  is also at most $\tilde{O}(n^2/\tau)$. 
\end{proof}

\subsubsection{Phase III: Checking if the insertion of the edge $(u,v)$ creates a cycle.} 
\label{app:sec:phase:cycle}

As in Section~\ref{app:sec:phase:order}, let $G^-$ and $G^+$ respectively denote the graph $G$  before and after the insertion of the edge $(u, v)$. Consider the total order $\prec$  on the set of nodes $V$ in the beginning of phase III (or, equivalently, at the end of phase II). Corollary~\ref{app:cor:phase:order} guarantees that  $\prec$ is consistent with the total order $\prec^*$ on $\{V_{i,j}\}$.  Lemma~\ref{app:lm:phase:order:correct}, on the other hand, guarantees that $\prec$ is a valid topological ordering in $G^-$.  We will use these two properties of the total order $\prec$ throughout the current phase.

In phase III, our goal is to determine if the insertion of the edge $(u, v)$ creates a cycle in $G$. Note that if  $k(u) \prec k(v)$, then   $\prec$ is also a valid topological ordering in $G^+$ as per Lemma~\ref{app:lm:phase:order:correct}, and clearly the insertion of the edge $(u,v)$ does not create a cycle.  The difficult case occurs when $k(v) \prec k(u)$. In this case, we first infer that $V(u) = V(u)$. In words, both the nodes $u$ and $v$ belong to the same subset in the partition $\{V_{i,j}\}$ at the end of phase II. This is because of the following reason. The total order $\prec$ is consistent with the total order $\prec^*$ as per Corollary~\ref{app:cor:phase:order}. Accordingly, since $k(v) \prec k(u)$, we conclude that if $V(v) \neq V(u)$ then  $V(v) \prec^* V(u)$. But this would contradict Lemma~\ref{app:lm:cross} as there is a cross edge from $u$ to $v$.

To summarize, for the rest of this section we assume that $k(v) \prec k(u)$ and $V(v) = V(u) = V_{i,j}$ for some $i, j \in [0, |S|]$. We have to check if there is a path $P_{v,u}$ from $v$ to $u$ in $G^-$. Along with the edge $(u,v)$, such a path $P_{v,u}$ will define a cycle in $G^+$. Hence, by Lemma~\ref{app:lm:cycle}, every edge $e$ in such a path $P_{v, u}$ will belong to the subgraph $G_{i,j} = (V_{i,j}, E_{i,j})$. Thus, from now on our task is to determine if there is a path $P_{v,u}$ from $v$ to $u$ in $G_{i,j}$. We perform this task by calling the subroutine SEARCH($u,v$) described below. The pseudocode for this subroutine is given in Figure~\ref{app:fig:search}.

\medskip
\noindent {\bf SEARCH($u,v$).} 
We simultaneously conduct two searches in order to find the path $P_{v, u}$: A {\em forward search} from $v$, and a {\em backward search} from $u$. Specifically, let $F$ and $B$ respectively denote the set of nodes visited by the forward search and the backward search till now. Initially, we have $F = \{v\}$ and $B = \{u\}$, and we always ensure that $F \cap B = \emptyset$. A node in $F$ (resp. $B$) is referred to as a forward (resp. backward) node.   Every forward node $x \in F$ is reachable from the node $v$ in $G_{i,j}$. In contrast, the node $u$ is reachable from every backward node $x \in B$ in $G_{i,j}$. We further classify each of the sets $F$ and $B$ into two subsets: (1) $F_a \subseteq F$ and $F_d = F \setminus F_a$, (2) $B_a \subseteq B$ and $B_d = B \setminus B_a$. The nodes in $F_a$ and $B_a$ are called {\em alive}, whereas the nodes in $F_d$ and $B_d$ are called {\em dead}. Intuitively, the dead nodes have already been {\em explored} by the search, whereas the alive nodes have not yet been explored. 

\begin{property}
\label{app:prop:forward:back}
Every node $x \in F_a \cup F_d$ is reachable from the node $v$ in $G_{i,j}$, and the node $u$ is reachable from every node $y \in B_a \cup B_d$ in $G_{i,j}$. The sets $F_a, F_d, B_a$ and $B_d$ are pairwise mutually exclusive.
\end{property}

A simple strategy for exploring a  forward and alive node $x \in F_a$ is as follows. For each of its outgoing edges $(x, y) \in E_{i,j}$, we check if $y \in B$. If yes, then we have detected a path from $v$ to $u$: This path goes from $v$ to $x$ (this is possible since $x$ is a forward node), follows the edge $(x, y)$, and then from $y$ it goes to $u$ (this is possible since $y$ is a backward node).  Accordingly, we stop and report that the graph $G^+$ contains a cycle. In contrast, if $y \notin B$ and $y \notin F$, then we insert $y$ into the set $F_a$ (and $F$), so that $y$ becomes a forward and alive node  which will be explored in future. In the end, we move the node $x$ from the set $F_a$ to the set $F_d$. We refer to the subroutine that explores a node $x \in F_a$ as EXPLORE-FORWARD($x$). See Figure~\ref{app:fig:forward}.

Analogously, we explore a  backward and alive node $x \in B_a$ is as follows. For each of its incoming edges $(y, x) \in E_{i,j}$, we check if $y \in F$. If yes, then there is a path from $v$ to $u$: This path goes from $v$ to $y$ (this is possible since $y$ is a forward node), follows the edge $(y, x)$, and then from $x$ it goes to $u$ (this is possible since $x$ is a backward node).  Accordingly, we stop and report that the graph $G^+$ contains a cycle. In contrast, if $y \notin F$ and $y \notin B$, then we insert $y$ into the set $B_a$ (and $B$), so that $y$ becomes a backward and alive node  which will be explored in future. In the end, we move the node $x$ from the set $B_a$ to the set $B_d$. We refer to the subroutine that explores a node $x \in B_a$ as EXPLORE-BACKWARD($x$). See Figure~\ref{app:fig:backward}.

\begin{property}
\label{app:prop:node:move}
Once a node $x \in F_a$ (resp. $x \in B_a$) has been explored, we delete it from the set $F_a$ (resp. $B_a$) and insert it into the set $F_d$ (resp. $B_d$).
\end{property}

While exploring a node $x \in F_a$ (resp. $x \in B_a$), we ensure that all its outgoing (resp. incoming) neighbors are included in $F$ (resp. $B$). This leads to the following important corollary.
\begin{corollary}
\label{app:cor:alive:dead}
Consider any edge $(x, y) \in E_{i,j}$. At any point in time, if $x \in F_d$, then at that time we  also have $y \in F_a \cup F_d$.  Similarly, at any point in time, if $y \in B_d$, then at that time we also have $x \in B_a \cup B_d$.
\end{corollary}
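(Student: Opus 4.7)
The plan is to argue from the dynamics of the subroutines EXPLORE-FORWARD and EXPLORE-BACKWARD, together with the invariant expressed in Property~\ref{app:prop:node:move} that nodes only migrate from the ``alive'' sets to the ``dead'' sets, and are never removed from $F_a \cup F_d$ or $B_a \cup B_d$ once they enter.

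First I would handle the forward direction of the corollary. Consider the instant $t$ at which $x$ is first placed into $F_d$. By Property~\ref{app:prop:node:move}, this happens precisely during the call to EXPLORE-FORWARD($x$). Inspecting the pseudocode of EXPLORE-FORWARD($x$) in Figure~\ref{app:fig:forward}, this call iterates over every outgoing edge $(x, y) \in E_{i,j}$. For the particular $y$ in the statement, one of three things happens: either (i) $y \in B_a \cup B_d$ is detected and the subroutine returns YES before $x$ is transferred to $F_d$, which contradicts the premise that $x \in F_d$ and so can be ignored; or (ii) $y \notin F_a \cup F_d$ is discovered and $y$ is inserted into $F_a$; or (iii) $y$ is already in $F_a \cup F_d$ and is left untouched. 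In cases (ii) and (iii), at the moment immediately after $x$ is moved into $F_d$ we have $y \in F_a \cup F_d$, which gives the desired conclusion at time $t$.

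Next I would verify that the inclusion $y \in F_a \cup F_d$ is preserved for all subsequent times during (and after) the call to SEARCH($u, v$). The only modifications EXPLORE-FORWARD and EXPLORE-BACKWARD make to these sets are insertions into $F_a$, insertions into $B_a$, and the transfer of a node from $F_a$ to $F_d$ or from $B_a$ to $B_d$; in particular no node is ever removed from $F_a \cup F_d$. Hence once $y$ belongs to $F_a \cup F_d$, it continues to do so, and the forward half of the corollary holds at every time $t' \ge t$.

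The backward half of the statement is handled by an entirely symmetric argument, with EXPLORE-BACKWARD playing the role of EXPLORE-FORWARD and the incoming edge $(x, y)$ playing the role of an outgoing edge. I do not anticipate any real obstacle; the only point that requires care is to make the monotonicity of $F_a \cup F_d$ and $B_a \cup B_d$ explicit, so that establishing the invariant at the very moment a node first enters $F_d$ (respectively $B_d$) is enough to guarantee it at every later instant.
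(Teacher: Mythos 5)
Your proposal is correct and matches the paper's (very brief, one-sentence) justification: the corollary is asserted to follow from the fact that exploring a node places all its relevant neighbors into $F$ (resp.\ $B$), and you simply make this precise by examining the three branches of Figure~\ref{app:fig:forward} at the moment $x$ enters $F_d$ and then invoking the monotonicity of $F_a \cup F_d$ and $B_a \cup B_d$, which is exactly the intended argument.
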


\begin{figure}[h!]
	\centerline{\framebox{
			\begin{minipage}{5.5in}
				\begin{tabbing}
					01. \   \= {\sc Initialize:} $F_a = \{v\}$,  $B_a = \{u\}$, $F_d = \emptyset$  and  $B_d = \emptyset$. \\
					02. \> {\sc While} $F_a \neq \emptyset$ AND $B_a \neq \emptyset$: \\
					03. \>  \ \ \ \ \ \ \  \= $x = \arg \min_{x' \in F_a} \{ k(x) \}$. \\
					04. \> \> {\sc If} $k(x) \succ \min_{y' \in B_d} \{ k(y') \}$, {\sc Then} \\
					05. \> \> \ \ \ \ \ \ \ \ \= {\sc Return} NO. \qquad // Insertion of the edge $(u,v)$ does not create a cycle. \\
					06. \> \> {\sc Else} \\
					07. \> \> \> EXPLORE-FORWARD($x$). \\
					09. \>  \> $y = \arg \max_{y' \in B_a} \{ k(y') \}$. \\
					10. \> \> {\sc If} $k(y) \succ \max_{x' \in B_d} \{ k(x') \}$, {\sc Then} \\
					11. \> \> \> {\sc Return} NO. \qquad // Insertion of the edge $(u,v)$ does not create a cycle. \\
					12. \> \> {\sc Else} \\
					13. \> \> \> EXPLORE-BACKWARD($y$). \\
					14. \> {\sc Return} NO. \qquad \qquad  \qquad // Insertion of the edge $(u,v)$ does not create a cycle.
					\end{tabbing}
			\end{minipage}
	}}
	\caption{\label{app:fig:search} Subroutine:  SEARCH$(u, v)$.}
\end{figure}

\begin{figure}[h!]
	\centerline{\framebox{
			\begin{minipage}{5.5in}
				\begin{tabbing}
					1. \   \=  $F_a = F_a \setminus \{x\}$ and  $F_d = F_d \cup \{x\}$. \\
					2. \> {\sc For all} $(x,x') \in E$ with $V(x) = V(x')$: \\
					3. \> \ \ \ \ \ \ \ \= {\sc If} $x' \in B_a \cup B_d$, {\sc Then} \\
					4. \> \> \ \ \ \ \ \ \ \ \  \= {\sc Retrun} YES. \qquad // Insertion of the edge $(u,v)$ creates a cycle. \\
					5. \> \> {\sc Else if} $x' \notin F_a \cup F_d$, {\sc Then} \\
					6. \> \> \> $F_a = F_a \cup \{x'\}$. 
					\end{tabbing}
			\end{minipage}
	}}
	\caption{\label{app:fig:forward} Subroutine:  EXPLORE-FORWARD$(x)$.}
\end{figure}

\begin{figure}[h!]
	\centerline{\framebox{
			\begin{minipage}{5.5in}
				\begin{tabbing}
					1. \   \=  $B_a = B_a \setminus \{y\}$ and  $B_d = B_d \cup \{y\}$. \\
					2. \> {\sc For all} $(y', y) \in E$ with $V(y') = V(y)$: \\
					3. \> \ \ \ \ \ \ \ \= {\sc If} $y' \in F_a \cup F_d$, {\sc Then} \\
					4. \> \> \ \ \ \ \ \ \ \ \  \= {\sc Retrun} YES. \qquad // Insertion of the edge $(u,v)$ creates a cycle. \\
					5. \> \> {\sc Else if} $y' \notin B_a \cup B_d$, {\sc Then} \\
					6. \> \> \> $B_a = B_a \cup \{y'\}$. 
					\end{tabbing}
			\end{minipage}
	}}
	\caption{\label{app:fig:backward} Subroutine:  EXPLORE-BACKWARD$(y)$.}
\end{figure}

We need to specify two more aspects of the subroutine SEARCH($u,v$). First, how frequently do we explore forward nodes compared to exploring backward nodes? Second, suppose that we are going to explore a forward (resp. backward) node at the present moment. Then how do we select the node $x$ from the set $F_a$ (resp. $B_a$) that has to be explored? Below, we  state two crucial properties of our algorithm that address these two questions. See Figure~\ref{app:fig:search} for the pseudocode. 
\begin{property}(Balanced Search)
\label{app:prop:balance}
We  alternate between calls to  EXPLORE-FORWARD(.) and  EXPLORE-BACKWARD(.).  This ensures that $|B_d| - 1 \leq |F_d| \leq  |B_d|+1$  at every point in time. In other words,  every forward-exploration step is followed by a backward-exploration step and vice versa.
\end{property}

\begin{property}(Ordered Search)
\label{app:prop:order}
While deciding which node in $F_a$ to explore next, we always pick the node $x \in F_a$ that has {\em minimum} priority $k(x)$. Thus,  we ensure that the subroutine EXPLORE-FORWARD($x$) is only called on the node $x$ that appears before every other node in $F_a$ in the total ordering $\prec$. In contrast, while deciding which node in $B_a$ to explore next, we always pick the node $y \in B_a$ that has {\em maximum} priority $k(y)$. Thus,  we ensure that the subroutine EXPLORE-BACKWARD($y$) is only called on the node $x$ that appears {\em after} every other node in $B_a$ in the total ordering $\prec$.
\end{property}

An immediate consequence of Property~\ref{app:prop:order} is that there is no {\em gap} in the set $F_d$ as far as reachability from the node $v$ is concerned. To be more specific, consider the sequence of nodes in $G_{i,j}$ that are reachable from $v$ in increasing order of their positions in the total order $\prec$. This sequence starts with $v$. The set of nodes belonging to $F_d$ always form a prefix of this sequence. This observation is formally stated below.

\begin{corollary}
\label{app:cor:order:forward}
Consider any two nodes $x, y \in V_{i,j}$ such that $k(x) \prec k(y)$ and there is a path in $G_{i,j}$ from $v$ to each of these two nodes. At any point in time, if $y \in F_d$, then we must also have $x \in F_d$. 
\end{corollary}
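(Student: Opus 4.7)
The plan is to pin down the first instant $t_y$ at which $y$ is moved from $F_a$ to $F_d$, and argue that every $x$ satisfying the corollary's hypothesis is already in $F_d$ at that moment. Since $F_d$ only grows with time (Property~\ref{app:prop:node:move}), this immediately yields the corollary. The two crucial ingredients are Property~\ref{app:prop:order} (Ordered Search), which says $y$ has minimum priority in $F_a$ at time $t_y$, and Lemma~\ref{app:lm:phase:order:correct}, which says $\prec$ is a topological order of $G^-$ throughout Phase~III (it is not modified during the search).

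The proof proceeds by contradiction: I suppose some $x$ satisfying the hypothesis is not in $F_d$ at time $t_y$. If $x \in F_a$ at that moment, then $k(x) \prec k(y)$ immediately contradicts the minimality of $y$ in $F_a$, so I may assume $x \notin F = F_a \cup F_d$. I then fix a $v$-to-$x$ path $P\colon v = z_0 \to z_1 \to \cdots \to z_\ell = x$ in $G_{i,j}$. I would argue that $P$ can be taken inside $G^-_{i,j}$: if $P$ uses the freshly inserted edge $(u,v)$, then its prefix up to the first arrival at $u$ lies in $G^-_{i,j}$ and witnesses a $v$-to-$u$ path in $G^-_{i,j}$; but then $v$ already reaches both endpoints of $(u,v)$ in $G^-_{i,j}$, so $(u,v)$ does not enlarge the reachability set of $v$, and the path may be rechosen in $G^-_{i,j}$. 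Applying Lemma~\ref{app:lm:phase:order:correct} to every edge of $P$ then gives $k(z_0) \prec k(z_1) \prec \cdots \prec k(z_\ell) = k(x) \prec k(y)$.

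Finally I would walk along $P$ to produce the contradiction. Since $z_0 = v$ is placed in $F_a$ at initialization and nodes never leave $F$, $z_0 \in F$, whereas $z_\ell = x \notin F$ by assumption. Let $i^* \in \{1,\ldots,\ell\}$ be the smallest index with $z_{i^*} \notin F$; then $z_{i^*-1} \in F$. Corollary~\ref{app:cor:alive:dead} applied to the edge $(z_{i^*-1}, z_{i^*}) \in E_{i,j}$ rules out $z_{i^*-1} \in F_d$ (that would force $z_{i^*} \in F_a \cup F_d$), so $z_{i^*-1} \in F_a$. But $k(z_{i^*-1}) \prec k(y)$, once again contradicting the minimality of $y$ in $F_a$. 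The main subtlety I anticipate is the reduction to a path inside $G^-_{i,j}$, which is needed because $\prec$ is topological only for $G^-$, not for $G^+$; once that is handled, the rest is a clean chain-along-the-path argument that pairs Ordered Search with the topological-order property.
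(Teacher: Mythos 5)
Your proof is correct and fleshes out exactly the argument the paper treats as immediate: the paper merely states the ``prefix of reachable nodes in increasing priority order'' intuition before the corollary and gives no formal proof, while you correctly identify that what drives it is the combination of Ordered Search (Property~\ref{app:prop:order}), Corollary~\ref{app:cor:alive:dead}, and the topological-order guarantee of Lemma~\ref{app:lm:phase:order:correct}, deployed via a walk-along-the-path contradiction at the moment $y$ enters $F_d$. Your extra care in rechoosing the $v$-to-$x$ path inside $G^-_{i,j}$ (so that $\prec$ is genuinely monotone along it) is a legitimate subtlety the paper glosses over, and your handling of it is sound.
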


Corollary~\ref{app:cor:order:backward} is a mirror image of Corollary~\ref{app:cor:order:forward}, albeit from the perspective of the node $u$.

\begin{corollary}
\label{app:cor:order:backward}
Consider any two nodes $x, y \in V_{i,j}$ such that $k(x) \prec k(y)$ and there is a path in $G_{i,j}$ from each of these two nodes to $u$. At any point in time, if $x \in B_d$, then we must also have $y \in B_d$. 
\end{corollary}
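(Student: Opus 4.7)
The plan is to prove the corollary by induction on time (or equivalently, on the sequence of EXPLORE-BACKWARD calls), mirroring the argument that justifies Corollary~\ref{app:cor:order:forward}. Since nodes are only ever added to $B_d$, never removed, the invariant can first become violated only when some new node $x^*$ is inserted into $B_d$ by a call to EXPLORE-BACKWARD($x^*$). Fix such a critical time $t$ and suppose, for contradiction, that the corollary fails immediately after this call: there is a node $y \in V_{i,j}$ with $k(x^*) \prec k(y)$ and a path from $y$ to $u$ in $G_{i,j}$, yet $y \notin B_d$ both just before and just after $t$.

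The first step will be to locate $y$. By Property~\ref{app:prop:order}, $x^* = \arg\max_{x' \in B_a} k(x')$ at time $t$, so every node in $B_a$ has priority at most $k(x^*)$. Since $k(y) \succ k(x^*)$, this forces $y \notin B_a$. Combined with $y \notin B_d$, we conclude $y \notin B = B_a \cup B_d$ just before $t$. By contrast, $u \in B_a \cup B_d$ throughout the subroutine (it starts in $B_a$ and can only migrate to $B_d$), so along any path $y = z_0 \to z_1 \to \cdots \to z_k = u$ in $G_{i,j}$ there is a smallest index $l \geq 1$ with $z_l \in B$ and $z_{l-1} \notin B$.

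I would then split on whether $z_l$ is dead or alive. If $z_l \in B_d$, then since $(z_{l-1}, z_l) \in E_{i,j}$, Corollary~\ref{app:cor:alive:dead} forces $z_{l-1} \in B_a \cup B_d$, contradicting the minimality of $l$. Otherwise $z_l \in B_a$ just before $t$, so $k(z_l) \preceq \max_{x' \in B_a} k(x') = k(x^*)$. But at this stage in phase III the total order $\prec$ is a valid topological ordering of $G^-$ by Property~\ref{lm:phase:order:correct}, hence every edge on the path strictly increases priority: $k(y) = k(z_0) \prec k(z_1) \prec \cdots \prec k(z_l)$. Therefore $k(z_l) \succ k(y) \succ k(x^*)$, contradicting $k(z_l) \preceq k(x^*)$.

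The main obstacle I anticipate is being careful that the path from $y$ to $u$ does indeed respect the priorities $\prec$; this is where invoking Property~\ref{lm:phase:order:correct} together with the fact that the edges of the path lie in $G_{i,j} \subseteq G^-$ is essential. Once that topological-order step is in place, the rest is a clean two-case argument: the only ways $y$ could have been ``skipped'' are blocked by Property~\ref{app:prop:order} (rules out $z_l$ alive) and by Corollary~\ref{app:cor:alive:dead} (rules out $z_l$ dead). Taking contrapositives, the assumed violator $y$ cannot exist, so the invariant is preserved by the call to EXPLORE-BACKWARD($x^*$), completing the induction.
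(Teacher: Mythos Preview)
Your argument is correct and is precisely the fleshed-out version of what the paper leaves implicit: the paper states this corollary only as the ``mirror image'' of Corollary~\ref{app:cor:order:forward}, calling both immediate consequences of Property~\ref{app:prop:order} without further proof, and your induction on the EXPLORE-BACKWARD calls together with the two-case split (via Corollary~\ref{app:cor:alive:dead} and Property~\ref{app:prop:order}) is exactly the intended reasoning.

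One imprecision to fix: you assert that ``the edges of the path lie in $G_{i,j} \subseteq G^-$'', but $G_{i,j}$ is the induced subgraph of $G^+$ and already contains the newly inserted edge $(u,v)$, so $G_{i,j} \not\subseteq G^-$ in general. Fortunately this does not affect your conclusion. The only edge of $G_{i,j}$ missing from $G^-$ is $(u,v)$, which leaves $u$; since $u \in B_a \cup B_d$ throughout, if $(u,v)$ appeared on the path $z_0,\ldots,z_k$ at position $i < l$ we would have $z_i = u \in B$, contradicting the minimality of $l$. Hence the subpath $z_0,\ldots,z_l$ lies entirely in $G^-$, and priorities do strictly increase along it as you need.
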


To complete the description of the subroutine SEARCH($u,v$), we now specify six {\em terminating conditions}. Whenever one of these conditions is satisfied, the subroutine does not need to run any further because it already  knows whether or not the insertion of the edge $(u, v)$ creates a cycle in the graph $G$.

\medskip
\noindent
{\bf (C1)} {\em  $F_a = \emptyset$}. 

\smallskip
\noindent
Here, as described in step 2 and step 14 in Figure~\ref{app:fig:search}, we conclude that the graph $G$ remains acyclic even after the insertion of the edge $(u,v)$. We now justify this conclusion.  Recall that if the insertion of the edge $(u,v)$ creates a cycle, then that cycle must contain a path $P_{v, u}$ from $v$ to $u$ in $G_{i,j}$. When the subroutine SEARCH($u,v$) begins execution, we have $F_a = \{v\}$ and $B_a = \{u\}$. Hence, Property~\ref{app:prop:node:move} implies that  at the present moment $v \in F_d \cup F_a$ and $u \in B_d \cup B_a$. Since the sets $F_d, F_a, B_d, B_a$ are pairwise mutually exclusive (see Property~\ref{app:prop:forward:back}) and  $F_a = \emptyset$, we currently have $v \in F_d$ and $u \notin F_d$. Armed with this observation, we consider the path $P_{vu}$ from $v$ to $u$, and let $x$ be the first node in this path that does not belong to  $F_d$. Let $y$ denote the node that appears just before $x$ in this path. Then by definition, we have $y \in F_d$ and $(y, x) \in E_{i,j}$. Now, applying Corollary~\ref{app:cor:alive:dead}, we get $x \in F_d \cup F_a = F_d$, which leads to a contradiction.

\medskip
\noindent
{\bf (C2)} {\em  $B_a = \emptyset$}.

\smallskip
\noindent
Here, as described in step 2 and step 14 in Figure~\ref{app:fig:search}, we conclude that the graph $G$ remains acyclic even after the insertion of the edge $(u,v)$. The argument here is analogous to the argument for condition (C1) above.

\medskip
\noindent 
 {\bf (C3)} {\em While exploring a node $x \in F_a$, we discover that $x$ has an outgoing edge to a node $x' \in B_a \cup B_d$.}

\smallskip
\noindent
Here,  we conclude that the insertion of the edge $(u,v)$ creates a cycle (see step 4 of Figure~\ref{app:fig:forward}). We now justify this conclusion. Since $x \in F_a$, Property~\ref{app:prop:forward:back}  implies that there is a path $P_{v, x}$ from $v$ to $x$. Since $x' \in B_a \cup B_d$, Property~\ref{app:prop:forward:back} also implies that there is a path $P_{x', u}$ from $x'$ to $u$. We get a cycle by combining the path $P_{v, x}$, the edge $(x, x')$,  the path $P_{x', u}$ and the edge $(u,v)$.

\medskip
\noindent
{\bf (C4)} {\em While exploring a node $y \in B_a$, we discover that $y$ has an incoming edge from a node $y' \in F_a \cup F_d$.}

\smallskip
\noindent
Here, as described in step 4 of Figure~\ref{app:fig:backward},  we conclude that the insertion of the edge $(u,v)$ creates a cycle.  To justify this conclusion, we can apply an argument  analogous to the one used in condition (C3) above. 

\medskip
\noindent
 {\bf (C5)} {\em  $\min_{x \in F_a} k(x) \succ \min_{y \in B_d} k(y)$.}

\smallskip
\noindent
Here, as described in step 4 of Figure~\ref{app:fig:search}, we conclude that the graph $G$ remains acyclic even after the insertion of the edge $(u, v)$. 
We now justify this conclusion. Suppose that the insertion of the edge $(u,v)$ creates a cycle. Such a cycle defines a path $P_{v, u}$ from $v$ to $u$. Below, we make a  claim that will be proved later on.

\begin{claim}
\label{app:cl:path}
The path $P_{v, u}$ contains at least one node $x$ from the set $F_a$.
\end{claim}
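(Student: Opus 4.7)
The plan is to locate a node on $P_{v,u}$ that sits in $F_a$ by examining how the path exits the set $F_d$ of already-explored forward nodes. First I would pin down where the endpoints of the path live: since $v$ was initialized inside $F_a$ and, by Property~\ref{app:prop:node:move}, can only ever move to $F_d$, we have $v \in F_a \cup F_d$; similarly $u \in B_a \cup B_d$. Because the four sets $F_a, F_d, B_a, B_d$ are pairwise disjoint (Property~\ref{app:prop:forward:back}), the node $u$ lies outside $F_a \cup F_d$, and in particular $u \notin F_d$.

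Next, I would split on whether $v$ itself is in $F_a$. If $v \in F_a$, the path $P_{v,u}$ already starts in $F_a$ and the claim is immediate. Otherwise, we are in the case $v \in F_d$. Now the key observation is that we have a path whose first node lies in $F_d$ and whose last node does not, so traversing the path from $v$ to $u$ we must at some point step from a node of $F_d$ to a node outside $F_d$. Let $x$ be the first node on $P_{v,u}$ that is not in $F_d$, and let $y$ be its predecessor on the path; by construction $y \in F_d$, $x \notin F_d$, and $(y,x)$ is an edge of $P_{v,u}$, which we already argued lies entirely in $G_{i,j}$, so $(y,x) \in E_{i,j}$.

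The final step is to apply Corollary~\ref{app:cor:alive:dead} to this edge: from $y \in F_d$ and $(y,x) \in E_{i,j}$, the corollary forces $x \in F_a \cup F_d$. Combined with $x \notin F_d$, this yields $x \in F_a$, so $P_{v,u}$ contains a node of $F_a$, as required. I expect the only subtle step to be ensuring that Corollary~\ref{app:cor:alive:dead} is genuinely applicable: it requires $(y,x)$ to be an internal edge of the subgraph $G_{i,j}$, which is precisely what Lemma~\ref{app:lm:cycle} gives us once we know that any cycle created by the insertion of $(u,v)$ lies in $G_{i,j}$, and that is exactly how $P_{v,u}$ was chosen at the start of the analysis of condition (C5). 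Everything else is just bookkeeping about disjointness of the four sets $F_a, F_d, B_a, B_d$.
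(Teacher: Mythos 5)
Your proof is correct and follows exactly the same route as the paper's: establish $v \in F_a \cup F_d$ and $u \notin F_a \cup F_d$ via Property~\ref{app:prop:node:move} and disjointness (Property~\ref{app:prop:forward:back}), dispose of the case $v \in F_a$ immediately, then in the case $v \in F_d$ take the first node $x$ on $P_{v,u}$ outside $F_d$ and apply Corollary~\ref{app:cor:alive:dead} to the edge $(y,x)$ from its predecessor $y \in F_d$ to conclude $x \in F_a$. Your explicit check that $(y,x) \in E_{i,j}$ (via Lemma~\ref{app:lm:cycle}) is a small welcome clarification the paper leaves implicit, but the argument is otherwise identical.
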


\noindent Armed with Claim~\ref{app:cl:path}, we consider any node $x'$ in the path $P_{v,u}$ that belongs to the set $F_a$. Let $y' = \arg \min_{y \in B_d} \{ k(y) \}$.  Note that $k(y') = \min_{y \in B_a} k(y) \prec \min_{x \in F_a} k(x) \preceq  k(x')$. In particular, we infer that $k(y') \prec k(x')$. As  $y' \in B_d$, the node $u$ is reachable from  $y'$ (see Property~\ref{app:prop:forward:back}). Similarly, as the node $x'$ lies on the path $P_{v, u}$, the node $u$ is also reachable from $x'$. Since the node $u$ is reachable from both  the nodes $y' \in B_d$ and $x'$,  and since $k(y') \prec k(x)$, Corollary~\ref{app:cor:order:backward} implies that $x' \in B_d$. This leads to a contradiction, for $x' \in F_a$ and $F_a \cap B_d = \emptyset$ (see Property~\ref{app:prop:forward:back}). Hence, our initial assumption was wrong, and  the insertion of the edge $(u,v)$ does not create a cycle in $G$. It now remains to prove Claim~\ref{app:cl:path}.

\smallskip
\noindent {\em Proof of Claim~\ref{app:cl:path}.}
Applying the same argument used to justify condition (C1), we first observation that $v \in F_a \cup F_d$ and $u \in B_a \cup B_d$. As the subsets $F_a, F_d, B_a$ and $B_d$ are pairwise mutually exclusive (see Property~\ref{app:prop:forward:back}), we have $u \notin F_a \cup F_d$. Note that  if $v \in F_a$, then there is nothing further to prove.  Accordingly, for the rest of the proof we consider the scenario where $v \in F_d$. Since $v \in F_d$ and $u \notin F_d$, there has to be at least one node in the path $P_{v,u}$ that does not belong to the set $F_d$.  Let $x$ be the first such node, and let $y$ be the node that appears just before $x$ in the path $P_{v, u}$. Thus, we have $y \in F_d$, $x \notin F_d$ and $(y, x) \in E_{i,j}$. Hence, Corollary~\ref{app:cor:alive:dead} implies that $x \in F_a$. So the path $P_{v, u}$ contains some node from the set $F_a$.
\qed

\medskip
\noindent
{\bf (C6)} {\em  $\max_{y \in B_a} k(y) \prec \max_{x \in F_d} k(x)$.}

\smallskip
\noindent
Here, as described in step 10 of Figure~\ref{app:fig:search}, we conclude that the graph $G$ remains acyclic. To justify this conclusion, we can apply an argument  analogous to the one used in condition (C5) above. \\

We now state an important corollary that follows from our stopping  conditions (C5) and (C6). It states that every node $x \in F_d$ appears before every node $y \in B_d$ in the total order $\prec$ in phase III.

\begin{corollary}
\label{app:cor:order:terminate}
We always have $\max_{x \in F_d} \{k(x)\} \prec \min_{y \in B_d} \{k(y)\}$.
\end{corollary}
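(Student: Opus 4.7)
The plan is to prove the claim by induction on the explore-calls executed during SEARCH$(u,v)$. At initialization both $F_d$ and $B_d$ are empty, so the inequality holds vacuously under the convention $\max \emptyset = -\infty$ and $\min \emptyset = +\infty$. By Property~\ref{app:prop:node:move}, these two sets can only change through calls to EXPLORE-FORWARD(.) and EXPLORE-BACKWARD(.), so any first violation must surface at the end of one such call. I would therefore assume for contradiction that $t$ is the first moment where $\max_{x \in F_d} k(x) \succeq \min_{y \in B_d} k(y)$, and then identify which type of call produced this violation.

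In the forward-exploration case, the triggering call would be EXPLORE-FORWARD$(x')$ moving some $x'$ from $F_a$ to $F_d$. My plan here is to combine two ingredients: first, Property~\ref{app:prop:order} tells us that $x'$ was selected as the minimum-priority element of $F_a$, so $k(x') = \min_{x \in F_a} k(x)$; second, the fact that the algorithm did not return via the guard on line 4 of Figure~\ref{app:fig:search} means stopping condition (C5) did not fire, so $\min_{x \in F_a} k(x) \preceq \min_{y \in B_d} k(y)$. Since $F_a$ and $B_d$ are disjoint (Property~\ref{app:prop:forward:back}) and the priorities maintained by the ordered list are pairwise distinct, this refines to the strict inequality $k(x') \prec \min_{y \in B_d} k(y)$. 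Together with the inductive hypothesis $\max_{x \in F_d} k(x) \prec \min_{y \in B_d} k(y)$ just prior to time $t$, this shows the enlarged set $F_d \cup \{x'\}$ continues to satisfy the corollary, contradicting our choice of $t$.

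The backward-exploration case would be handled by the mirror argument, using the dual half of Property~\ref{app:prop:order} (that the chosen $y'$ has maximum priority in $B_a$) together with the non-firing of stopping condition (C6), which gives $k(y') = \max_{y \in B_a} k(y) \succ \max_{x \in F_d} k(x)$, and then invoking the inductive hypothesis to extend the inequality to $B_d \cup \{y'\}$. I do not anticipate a real obstacle; the only subtlety worth flagging is the edge case where one of $F_d$, $B_d$ happens to be empty at the violating step, which I would dispatch with the same $\pm\infty$ convention used for the base case rather than a separate case split.
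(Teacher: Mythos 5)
Your argument mirrors the paper's proof exactly: both consider the first violating explore-call, use the fact that the guard corresponding to condition (C5) (resp. (C6)) must not have fired before the call, and combine that with the pre-existing invariant to show the newly added node cannot break the inequality. The only cosmetic difference is that you phrase the ``first violation'' argument explicitly as induction and handle the empty-set base case with the $\pm\infty$ convention, but the content is the same.
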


\begin{proof}
Suppose that the corollary is false. Note that initially when the subroutine SEARCH($u, v$) begins execution, we have $F_d = B_d = \emptyset$ and hence the corollary is vacuously true at that time. Consider the first time-instant (say) $t$ when the corollary becomes false. Accordingly, we have:
\begin{equation}
\label{app:eq:time}
\max_{x \in F_d} \{k(x)\} \prec \min_{y \in B_d} \{k(y)\} \text{ just before time } t.
\end{equation}
One the following two events must have occurred at time $t$ for the corollary to get violated. 
\begin{itemize}
\item (1) A  node $x' \in F_a$   was explored during a call to  EXPLORE-FORWARD($x'$), and it happened to be the case that $k(x') \succ \min_{y \in B_d} \{k(y)\}$. The subroutine EXPLORE-FORWARD($x'$) then moved the node $x'$ from the set $F_a$ to the set $F_d$ (see step 1 in Figure~\ref{app:fig:forward}), which violated the corollary. Note that a call to the subroutine EXPLORE-FORWARD(.) can only be made from step 7 in Figure~\ref{app:fig:search}. Hence, according to steps 4-5  in Figure~\ref{app:fig:search} we had $k(x') \prec \min_{y \in B_d} \{k(y)\}$ just before time $t$. Thus, from~(\ref{app:eq:time}) we conclude that the corollary  remains satisfied even after adding the node $x'$ to the set $F_d$. This leads to a contradiction.
\item (2) A  node $x' \in F_a$   was explored during a call to  EXPLORE-FORWARD($x'$), and it happened to be the case that $k(x') \succ \min_{y \in B_d} \{k(y)\}$. The subroutine EXPLORE-FORWARD($x'$) then moved the node $x'$ from the set $F_a$ to the set $F_d$ (see step 1 in Figure~\ref{app:fig:forward}), which violated the corollary. Applying an argument analogous to the one applied in case (1), we again reach a contradiction.
\end{itemize}
Thus, our initial assumption was wrong, and we infer that the corollary is always satisfied.
\end{proof}

The next lemma states that in phase III our algorithm correctly detects a cycle. The proof of the lemma follows immediately from the preceding discussion.

\begin{lemma}
\label{app:lm:phaseIII:correct}
The subroutine SEARCH($u,v$) in Figure~\ref{app:fig:search} returns YES if the insertion of the edge $(u,v)$ creates a cycle in the graph $G$, and NO otherwise.
\end{lemma}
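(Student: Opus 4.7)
The plan is to verify the two directions of the ``if and only if'' separately, leveraging the justifications already carried out for the six terminating conditions (C1)--(C6). The first step is to confirm termination: by Property~\ref{app:prop:node:move}, every call to EXPLORE-FORWARD$(\cdot)$ or EXPLORE-BACKWARD$(\cdot)$ moves a node from $F_a$ to $F_d$ or from $B_a$ to $B_d$, respectively, and no node is ever re-added (Property~\ref{app:prop:forward:back} keeps the four sets mutually exclusive). Since $|V_{i,j}|$ is finite and the sizes of $F_a \cup F_d$ and $B_a \cup B_d$ are bounded by $|V_{i,j}|$, the main while loop of Figure~\ref{app:fig:search} executes only finitely many times and the subroutine must halt.

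Next, I would observe that, by inspection of the pseudocode, every return statement falls under exactly one of (C1)--(C6). Specifically, lines~4--5 of Figure~\ref{app:fig:search} invoke (C5) and line~14 is reached only when the while condition fails, i.e.\ when $F_a = \emptyset$ (C1) or $B_a = \emptyset$ (C2); the symmetric statements hold for (C6) in lines~10--11. The two YES-returns occur inside EXPLORE-FORWARD$(\cdot)$ and EXPLORE-BACKWARD$(\cdot)$ and correspond precisely to (C3) and (C4). Hence it suffices to argue that each of the six conditions is resolved correctly.

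For the YES direction, the reasoning for (C3) and (C4) has already been spelled out: if we detect an edge $(x,x') \in E_{i,j}$ with $x \in F_a$ and $x' \in B_a \cup B_d$, then concatenating the $v \rightsquigarrow x$ path (Property~\ref{app:prop:forward:back}), the edge $(x,x')$, the $x' \rightsquigarrow u$ path (Property~\ref{app:prop:forward:back}), and the newly inserted edge $(u,v)$ yields an explicit cycle in $G^+$; the case (C4) is symmetric. For the NO direction, I would invoke the four contradiction arguments already given: (C1) uses Corollary~\ref{app:cor:alive:dead} to rule out any $v \rightsquigarrow u$ path in $G_{i,j}^-$; (C5) combines Claim~\ref{app:cl:path} with Corollary~\ref{app:cor:order:backward}, using that $\min_{x \in F_a} k(x) \succ \min_{y \in B_d} k(y)$, to derive a contradiction from the existence of any such path; (C2) and (C6) are mirror images of (C1) and (C5).

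The only remaining worry is that the correctness arguments for (C5) and (C6) implicitly rely on the invariant $\max_{x \in F_d} k(x) \prec \min_{y \in B_d} k(y)$ being preserved throughout the execution, which is exactly Corollary~\ref{app:cor:order:terminate}; that corollary in turn depends on Property~\ref{app:prop:order} (ordered search), which is respected by lines~3 and~9 of Figure~\ref{app:fig:search}. With this invariant in hand, assembling termination, the enumeration of return sites, and the per-condition soundness arguments completes the proof. I do not expect any serious obstacle here, since essentially all the technical work has been front-loaded into the justifications of (C1)--(C6); the ``proof'' is really a bookkeeping verification that those justifications cover every execution path of the subroutine.
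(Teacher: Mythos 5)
Your proposal is correct and matches the paper's approach, which simply observes that the lemma ``follows immediately from the preceding discussion'' (i.e.\ the per-condition soundness arguments for (C1)--(C6)); you make explicit the bookkeeping that the paper leaves implicit, namely termination and the exhaustiveness of the return sites. One small misattribution: the soundness arguments for (C5) and (C6) rest on Corollary~\ref{app:cor:order:backward} and Corollary~\ref{app:cor:order:forward} respectively (both direct consequences of Property~\ref{app:prop:order}), not on Corollary~\ref{app:cor:order:terminate}, which is a separate invariant used later for Phase~IV and the charging argument in Lemma~\ref{app:lm:search:time:total} rather than for the correctness of the terminating conditions themselves.
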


The lemma below bounds the time spent in any single call to the subroutine SEARCH($u, v$).

\begin{lemma}
\label{app:lm:search:time}
Consider any call to the subroutine SEARCH($u,v$). The time spent on this call is at most $\tilde{O}(m/n)$ times the size of the set $F_d$ at the end of the call.
\end{lemma}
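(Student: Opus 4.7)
The plan is to charge the total running time of a single call to SEARCH$(u,v)$ to the number of explorations that it performs, and then to bound this number by $|F_d|$ using the balanced search property.

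First I would decompose the work done by SEARCH$(u,v)$ into three types of operations: (i) selecting the next node to explore (the $\arg\min$ in step 3 and the $\arg\max$ in step 9 of Figure~\ref{app:fig:search}); (ii) testing the two stopping conditions on priorities (steps 4 and 10); and (iii) actually calling EXPLORE-FORWARD$(\cdot)$ or EXPLORE-BACKWARD$(\cdot)$. Operations (i) and (ii) can both be serviced by maintaining $F_a$ as a min-heap keyed on $k(\cdot)$ and $B_a$ as a max-heap keyed on $k(\cdot)$, together with the min over $B_d$ and the max over $F_d$ (easily maintained on the side, since elements only get inserted into $F_d$ and $B_d$). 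Each such query/update takes $\tilde{O}(1)$ time using any standard priority-queue implementation and the ordered list structure of~\cite{DietzS87} for comparisons.

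Next I would bound the cost of a single call to EXPLORE-FORWARD$(x)$ or EXPLORE-BACKWARD$(y)$. Inspecting Figures~\ref{app:fig:forward} and~\ref{app:fig:backward}, such a call scans the outgoing (resp.\ incoming) edges of $x$ (resp.\ $y$) in $G_{i,j}$, spending $\tilde{O}(1)$ work per edge to check set memberships (again via the ordered-list and simple flags on nodes). By Assumption~\ref{app:assume}, the in- and out-degree of every node in $G$, and hence in any $G_{i,j}$, is $O(m/n)$. Thus one exploration costs $\tilde{O}(m/n)$.

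Combining these two observations, the total time of SEARCH$(u,v)$ is $\tilde{O}(m/n)$ times the total number of exploration calls made during the execution. By Property~\ref{app:prop:node:move}, every call to EXPLORE-FORWARD contributes exactly one node to $F_d$ and every call to EXPLORE-BACKWARD contributes exactly one node to $B_d$. Therefore the number of exploration calls equals $|F_d| + |B_d|$ at the end of the call, and by the Balanced Search Property~\ref{app:prop:balance} we have $|B_d| \leq |F_d| + 1$, so $|F_d| + |B_d| = O(|F_d| + 1)$. This yields a total running time of $\tilde{O}((m/n)\cdot |F_d|)$, as required. The only mildly nontrivial point is ensuring that the priority-queue/heap bookkeeping really is $\tilde{O}(1)$ per operation; this is standard but worth stating explicitly.
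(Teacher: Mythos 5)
Your proof is correct and follows essentially the same approach as the paper's: bound each EXPLORE call by $O(m/n)$ via Assumption~\ref{app:assume}, charge the heap bookkeeping at $\tilde O(1)$ per step, observe that each exploration adds exactly one node to $F_d$ or $B_d$, and then invoke the balanced-search property to collapse $|F_d|+|B_d|$ to $O(|F_d|)$. The only (very minor) addition you make beyond the paper's sketch is spelling out how the stopping-condition tests on $\min_{B_d}$ and $\max_{F_d}$ are serviced, which is a nice clarification but not a different argument.
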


\begin{proof}(Sketch)
Each call to  EXPLORE-FORWARD($x$) or EXPLORE-BACKWARD($x$) takes time proportional to the out-degree (resp. in-degree) of $x$ in the subgraph $G_{i,j}$. Under Assumption~\ref{app:assume}, the maximum in-degree and maximum out-degree of a node in $G_{i,j}$ are both at most $O(m/n)$. Thus, a single call to  EXPLORE-FORWARD($x$) or EXPLORE-BACKWARD($x$) takes $O(m/n)$ time. 

Now, consider the {\sc While} loop in Figure~\ref{app:fig:search}. Steps 3 and 9 in Figure~\ref{app:fig:search} can be implemented in $\tilde{O}(1)$ time by respectively using a min-heap and a max-heap data structure. So the time spent on this {\sc While} loop is dominated by the time taken by the calls to EXPLORE-FORWARD(.) and EXPLORE-BACKWARD(.). Each such call takes $O(m/n)$ time. Furthermore, after each call to EXPLORE-FORWAD(.) or EXPLORE-BACKWARD(.), the size of the set $F_d$ or $B_d$ respectively increases by one (see steps 1 in Figure~\ref{app:fig:forward} and Figure~\ref{app:fig:backward}). Accordingly, the time spent on one call to the subroutine SEARCH($u,v$) is at most $\tilde{O}(m/n)$ times the size of the set $F_d \cup B_d$ at the end of the call. The lemma now follows from Property~\ref{app:prop:balance}.
\end{proof}

\medskip
\noindent {\bf Bounding the total time spent in phase III.} We now analyze the total time spent by our algorithm in phase III, over the entire sequence of edge insertions in $G$.
For $l \in [1, m]$, consider the $l^{th}$ edge-insertion in the graph $G$, and let $t_l$ denote the size of the set $F_d$ at the end of phase III while handling this $l^{th}$-edge insertion. Lemma~\ref{app:lm:search:time} implies that the total time spent in phase III  is at most $\tilde{O}\left((m/n) \cdot \sum_{l=1}^m t_l\right)$. We now focus on upper bounding the sum $\sum_{l=1}^m t_l$. We start with the lemma below.

\begin{lemma}
\label{app:lm:search:time:total}
We have $\sum_{l=1}^m t_l^2 = O(n \tau)$. 
\end{lemma}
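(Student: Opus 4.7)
The plan is to use the balanced forward/backward search together with the quadratic counting: each pair (backward-dead node, forward-dead node) produced while handling the $l^{th}$ insertion should be a \emph{new} sometime $\tau$-related pair, so by Theorem~\ref{th:bound:related:nodes} the total number of such pairs is $O(n\tau)$. Since $|F_d^{(l)}|\approx|B_d^{(l)}|\approx t_l$, this will give $\sum_l t_l^2 = O(n\tau)$.

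In more detail, I would first dispose of the easy case $k(u)\prec k(v)$ in which the algorithm returns immediately in phase III and $t_l = 0$. So fix an insertion of $(u,v)$ with $k(v)\prec k(u)$ and $V(u)=V(v)=V_{i,j}$, and take any $x\in F_d^{(l)}$ and $y\in B_d^{(l)}$. By Property~\ref{prop:forward:back}, $v$ reaches $x$ in $G_{i,j}^{(l)}$ and $y$ reaches $u$ in $G_{i,j}^{(l)}$; concatenating these paths with the newly inserted edge $(u,v)$ yields a path from $y$ to $x$ in $G_{i,j}^{(l)}$. Therefore, by the third item of Lemma~\ref{lm:propofpartition}, the ordered pair $(y,x)$ is $\tau$-related in $G^{(l)}$ with high probability.

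The crucial step is to argue that $(y,x)$ was \emph{not} $\tau$-related at any earlier time. Just before the $l^{th}$ insertion the total order $\prec$ is, by Property~\ref{lm:phase:order:correct}, a valid topological order of $G^{(l-1)}$. Moreover, Corollary~\ref{cor:order:terminate} guarantees $k(x)\prec k(y)$ at the end of phase III, and the $\prec$ used there is the one inherited from phase II, hence from $G^{(l-1)}$. Consequently there is no path from $y$ to $x$ in $G^{(l-1)}$, and since the graph is incremental, no such path existed in $G^{(l')}$ for any $l'<l$. Thus $(y,x)$ is not $\tau$-related in $G^{(l')}$ for $l'<l$ but is $\tau$-related in $G^{(l)}$, so it contributes a fresh sometime $\tau$-related pair charged to step $l$.

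Summing across all insertions, $\sum_{l=1}^m |F_d^{(l)}|\,|B_d^{(l)}|$ is bounded above by the total number of sometime $\tau$-related pairs, which by Theorem~\ref{th:bound:related:nodes} is $O(n\tau)$. Finally, Property~\ref{prop:balance} (balanced search) gives $|B_d^{(l)}|\geq t_l-1$, so $|F_d^{(l)}|\,|B_d^{(l)}|=\Omega(t_l^2)$ whenever $t_l\geq 2$ (and $t_l^2\leq 1$ in the remaining cases, contributing at most $O(m)$ which is absorbed). Combining gives $\sum_{l=1}^m t_l^2 = O(n\tau)$, as required.

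The main obstacle is the second step: formally justifying that $(y,x)$ is a genuinely \emph{new} $\tau$-related pair. This requires carefully tying together the order-consistency Property~\ref{lm:phase:order:correct} (so that $\prec$ is a topological order of $G^{(l-1)}$) with Corollary~\ref{cor:order:terminate} (which orders $F_d$ before $B_d$ inside $\prec$) to rule out any path from $y$ to $x$ before time $l$. Once this is clean, the rest is a one-line counting argument combined with Theorem~\ref{th:bound:related:nodes} and the balanced-search property.
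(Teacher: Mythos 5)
Your proposal matches the paper's proof essentially step for step: dispose of the easy case $k(u)\prec k(v)$, charge each pair $(y,x)$ with $y\in B_d^{(l)}$, $x\in F_d^{(l)}$ to a \emph{fresh} sometime $\tau$-related pair (using Property~\ref{lm:phase:order:correct} together with Corollary~\ref{cor:order:terminate} to rule out any earlier $y\rightsquigarrow x$ path), bound $\sum_l |F_d^{(l)}|\cdot|B_d^{(l)}|$ by $O(n\tau)$ via Theorem~\ref{th:bound:related:nodes}, and finish with the balanced-search property. You even state the final inequality in the direction actually needed, $|F_d^{(l)}|\cdot|B_d^{(l)}|=\Omega(t_l^2)$, whereas the paper writes it as an $O(t_l^2)$ upper bound --- a small slip you tacitly repair.
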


\begin{proof}
For any $l \in [1, m]$, and let $F_d^{(l)}$ and $B_d^{(l)}$ respectively denote the sets $F_d$ and $B_d$ at the end of phase III while handling the $l^{th}$ edge-insertion in $G$. Furthermore, let $G^{(l)}$ and $G_{i,j}^{(l)}$ respectively denote the input graph $G$ and the subgraph $G_{i,j}$  after the $l^{th}$ edge-insertion in $G$. 

Suppose that the edge $(u,v)$ is the $l^{th}$ edge to be inserted into $G$. We focus on the procedure for handling this edge insertion. During this procedure,  if we find  $k(u) \prec k(v)$ in the beginning of phase III, then our algorithm immediately declares that the insertion of the edge $(u,v)$ does not create a cycle and moves on to phase IV. In such a scenario, we clearly have $F_d^{(l)} = B_d^{(l)} = \emptyset$ and hence $t_l = 0$. Accordingly, from now on we assume that $k(v) \prec k(u)$ in the beginning of phase III. Consider any two nodes $x \in F_d^{(l)}$ and $y \in B_d^{(l)}$.  The nodes $x$ and $y$ belong to the same subgraph $G_{i,j}^{(l)}$. Property~\ref{app:prop:forward:back} guarantees that there is a path $P_{y, x}$ from $y$ to $x$ in $G_{i,j}^{(l)}$ -- we can go from $y$ to $u$, take the edge $(u,v)$ and then go from $v$ to $x$. Hence, from Lemma~\ref{app:lm:partition:related:nodes} we infer that the ordered pair $(y, x)$ is $\tau$-related in $G^{(l)}$ with high probability, and we condition on this event for the rest of the proof.  We now claim that there was no path from $y$ to $x$ in the graph $G^{(l-1)}$, where $G^{(l-1)}$ denotes the graph $G$ just before the $l^{th}$ edge-insertion, or equivalently, just after the $(l-1)^{th}$ edge-insertion. To see why this claim is true, we recall Lemma~\ref{app:lm:phase:order:correct}. This lemma states that in the beginning of phase III (after the $l^{th}$ edge-insertion) the total order $\prec$ on the node-set $V$ is a topological order in the graph $G^{(l-1)}$. Since $y \in B_d^{(l)}$ and $x \in F_d^{(l)}$, Corollary~\ref{app:cor:order:terminate} implies that $x$ appears before $y$ in the total order $\prec$ in phase III (after the $l^{th}$ edge-insertion). From these last two observations, we conclude that there is no path from $y$ to $x$ in $G^{(l-1)}$. As edges only get inserted into $G$ with the passage of time, this also implies that there is no path from $y$ to $x$ in the graph $G^{(l')}$, for all $l' < l$. Accordingly, the ordered pair $(y,x)$ is {\em not} $\tau$-related in the graph $G^{(l')}$ for any $l' < l$.

To summarize, for every node $x \in F_d^{(l)}$ and every node $y \in B_d^{(l)}$ the following conditions hold. (1) The ordered pair $(y,x)$ is $\tau$-related in the graph $G^{(l)}$. (2) For all $l' < l$, the ordered pair $(y,x)$ is {\em not} $\tau$-related in the graph $G^{(l')}$. Let $C$ denote a counter which keeps track of the number of sometime $\tau$-related pairs of nodes (see Definition~\ref{app:def:related:nodes:time}). Conditions (1) and (2) imply that every ordered pair of nodes $(y, x)$, where $y \in B_d^{(l)}$ and $x \in F_d^{(l)}$, contributes one towards the counter $C$.  A simple counting argument gives us: 
\begin{equation}
\label{app:eq:count:new}
\sum_{l=1}^m \left| F_d^{(l)} \right| \cdot \left| B_d^{(l)} \right| \leq C = O(n \tau)
\end{equation}
In the above derivation,  the last equality follows from Theorem~\ref{app:th:bound:related:nodes}. We now recall Property~\ref{app:prop:balance}, which says that our algorithm in phase III explores (almost) the same number of forward and backward nodes. In particular, we have $\left| F_d^{(l)} \right| \cdot \left| B_d^{(l)} \right| = O\left(\left| F_d^{(l)} \right|^2\right) = O(t_l^2)$ for all $l \in [1, m]$. This observation, along with~(\ref{app:eq:count:new}), implies that $\sum_{l=1}^m t_l^2 = O(n \tau)$. This concludes the proof of the lemma.
\end{proof}

\begin{corollary}
\label{app:cor:search:time:total}
We have $\sum_{l=1}^m t_l = O(\sqrt{mn\tau})$.
\end{corollary}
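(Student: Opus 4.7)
The plan is to derive the bound on $\sum_{l=1}^m t_l$ from the already-established bound $\sum_{l=1}^m t_l^2 = O(n\tau)$ of Lemma~\ref{app:lm:search:time:total}. The cleanest route is a direct application of the Cauchy--Schwarz inequality to the sequence $(t_1,\ldots,t_m)$ paired against the all-ones vector of length $m$. This immediately gives
\[
\sum_{l=1}^m t_l \;=\; \sum_{l=1}^m 1\cdot t_l \;\le\; \sqrt{m}\cdot\sqrt{\sum_{l=1}^m t_l^2} \;=\; \sqrt{m}\cdot\sqrt{O(n\tau)} \;=\; O\!\left(\sqrt{mn\tau}\right),
\]
which is exactly the corollary.

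Alternatively, and in keeping with the elementary style used elsewhere in the paper, one can argue by a thresholding/splitting argument that avoids invoking Cauchy--Schwarz by name. Split the index set $[1,m]$ into $X=\{l:t_l\le\sqrt{n\tau/m}\}$ and $Y=\{l:t_l>\sqrt{n\tau/m}\}$. The small-index contribution is bounded trivially: $\sum_{l\in X} t_l \le m\cdot\sqrt{n\tau/m}=\sqrt{mn\tau}$. For the large-index contribution, write every $t_l$ with $l\in Y$ as $t_l=\sqrt{n\tau/m}+\delta_l$ with $\delta_l>0$; then $t_l^2\ge 2\delta_l\sqrt{n\tau/m}$ (discarding the nonnegative $(\sqrt{n\tau/m})^2$ and $\delta_l^2$ terms as needed), so Lemma~\ref{app:lm:search:time:total} yields $\sqrt{n\tau/m}\cdot\sum_{l\in Y}\delta_l = O(n\tau)$ and hence $\sum_{l\in Y}\delta_l = O(\sqrt{mn\tau})$. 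Adding in $\sum_{l\in Y}\sqrt{n\tau/m}\le m\sqrt{n\tau/m}=\sqrt{mn\tau}$ gives $\sum_{l\in Y}t_l = O(\sqrt{mn\tau})$, and combining the two cases completes the proof.

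There is no real obstacle here; the corollary is a purely analytic consequence of the preceding lemma, and both routes reach the target bound in a few lines. I would present the threshold-splitting proof, since it matches the tone of the rest of the appendix and makes transparent where the balance $t_l\approx\sqrt{n\tau/m}$ comes from, which in turn motivates the eventual choice $\tau=n/m^{1/3}$ used to optimize the total update time.
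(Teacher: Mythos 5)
Your threshold-splitting argument reproduces the paper's proof essentially verbatim, down to the same definitions of $X$ and $Y$, the decomposition $t_l = \sqrt{n\tau/m} + \delta_l$, and the same final accounting. Both your variant and the paper's are correct.

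Your first observation—that the corollary is an immediate instance of Cauchy--Schwarz applied to $(t_l)$ against the all-ones vector—is a genuinely cleaner route. The splitting argument the paper uses is really just a hands-on re-derivation of the $\ell^1$--$\ell^2$ inequality over $m$ terms, so nothing is lost by invoking Cauchy--Schwarz by name, and the one-line version
\[
\sum_{l=1}^m t_l \;\le\; \sqrt{m}\,\Bigl(\sum_{l=1}^m t_l^2\Bigr)^{1/2} \;=\; O\bigl(\sqrt{mn\tau}\bigr)
\]
is hard to improve on. The main thing the splitting argument buys is the explicit appearance of the threshold $\sqrt{n\tau/m}$, which, as you note, foreshadows the parameter balance $\tau = n/m^{1/3}$; but that motivation could equally well be supplied by a remark after the one-line proof.
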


\begin{proof}
We partition the set of indices $\{1, \ldots, m\}$ into two subsets: 
$$X = \left\{ l \in [1, m] : t_l \leq \sqrt{n \tau/m} \right\} \text{ and } Y = \left\{ l \in [1, m] : t_l > \sqrt{n \tau/m}\right\}.$$ 
It is easy to check that $\sum_{l \in X} t_l \leq |X| \cdot \sqrt{n \tau/m} \leq m \cdot \sqrt{n \tau/m} = \sqrt{mn\tau}$.  Accordingly, for the rest of the proof we focus on bounding the sum $\sum_{l \in Y} t_l$. Towards this end, for each $l \in Y$, we first express the quantity $t_l$ as $t_l = \sqrt{n \tau/m} + \delta_l$, where $\delta_l > 0$. Now, Lemma~\ref{app:lm:search:time:total} implies that:
\begin{equation}
\label{app:eq:last}
\sum_{l \in Y} t_l^2 = \sum_{l \in Y} \left(\sqrt{n \tau/m} + \delta_l\right)^2  = O(n \tau)
\end{equation}
We also note that:
\begin{equation}
\label{app:eq:last:1}
 \sum_{l \in Y} \left(\sqrt{n \tau/m} + \delta_l\right)^2  \geq \sum_{l \in Y} \left(\delta_l \cdot \sqrt{n \tau/m}\right) =  \sqrt{n \tau/m} \cdot \sum_{l \in Y} \delta_l
\end{equation}
From~(\ref{app:eq:last}) and~(\ref{app:eq:last:1}), we get $\sqrt{n \tau/m} \cdot \sum_{l \in Y} \delta_l = O(n \tau)$, which in turn gives us: $\sum_{l \in Y} \delta_l = O\left(\sqrt{mn\tau}\right)$. This leads to the following upper bound on the sum $\sum_{l \in Y} t_l$.
\begin{equation}
\nonumber
\sum_{l \in Y} t_l = \sum_{l \in Y} \left(\sqrt{n \tau/m} + \delta_l\right) = \sum_{l \in Y} \sqrt{n \tau/m} + \sum_{l \in Y} \delta_l \leq m \cdot  \sqrt{n \tau/m} + O\left(\sqrt{mn\tau}\right)
= O\left(\sqrt{mn\tau}\right).
\end{equation}
This concludes the proof of the corollary.
\end{proof}

We are now ready to upper bound the total time spent by our algorithm in phase III.

\begin{lemma}
\label{app:lm:search:final:total:time}
Our algorithm spends $\tilde{O}\left(\sqrt{m^3 \tau/n}\right)$ total time in phase III, over the entire sequence of edge-insertions in the graph $G$.
\end{lemma}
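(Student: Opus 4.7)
The plan is a short deduction that combines the per-call bound from Lemma~\ref{app:lm:search:time} with the aggregate bound from Corollary~\ref{app:cor:search:time:total}. First I would recall that, by Lemma~\ref{app:lm:search:time}, the time spent in the single call to SEARCH($u,v$) triggered by the $l^{th}$ edge-insertion is at most $\tilde{O}(m/n) \cdot t_l$, where $t_l = |F_d|$ at the end of that call. Summing over all $m$ edge-insertions, the total time spent in phase~III is
$$ \sum_{l=1}^{m} \tilde{O}(m/n) \cdot t_l \;=\; \tilde{O}(m/n) \cdot \sum_{l=1}^{m} t_l. $$

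Next I would invoke Corollary~\ref{app:cor:search:time:total}, which gives the aggregate bound $\sum_{l=1}^{m} t_l = O(\sqrt{m n \tau})$. Substituting this into the previous display yields
$$ \tilde{O}(m/n) \cdot O(\sqrt{m n \tau}) \;=\; \tilde{O}\!\left( \sqrt{ (m/n)^2 \cdot m n \tau } \right) \;=\; \tilde{O}\!\left( \sqrt{m^3 \tau / n} \right), $$
which is the desired bound.

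There is no real obstacle: the two nontrivial ingredients (the per-call cost and the $\ell_1$ bound on $\{t_l\}$) have already been established; the lemma is a one-line algebraic combination of them. The only minor point to note is that the $\tilde O(\cdot)$ hides polylogarithmic factors from the priority-queue operations inside SEARCH and from the high-probability conditioning used in Lemma~\ref{app:lm:search:time:total} to derive Corollary~\ref{app:cor:search:time:total}, so the final bound is in expectation (and with high probability), matching the other phase-bounds in the paper.
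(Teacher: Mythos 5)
Your proof is correct and follows exactly the paper's own argument: apply Lemma~\ref{app:lm:search:time} to bound the per-insertion cost by $\tilde{O}(m/n)\cdot t_l$, sum over $l$, and substitute the bound $\sum_{l=1}^m t_l = O(\sqrt{mn\tau})$ from Corollary~\ref{app:cor:search:time:total}. No differences to note.
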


\begin{proof}
Lemma~\ref{app:lm:search:time} implies that the total time spent in phase III is at most $O\left( (m/n) \cdot \sum_{l=1}^m t_l \right)$. The lemma now follows from Corollary~\ref{app:cor:search:time:total}.
\end{proof}

\subsubsection{Phase IV: Ensuring that $\prec$ is a  topological ordering for $G^+$ (only when $G^+$ is acyclic)}
\label{app:sec:phase:final}

As in Section~\ref{app:sec:phase:cycle}, we let $G^-$ and $G^+$ respectively denote the graph $G$ just before and after the insertion of the edge $(u,v)$. If in phase III we detect a cycle, then we do not need to perform any nontrivial computation from this point onward, for the graph $G$ will contain a cycle after every future edge-insertion. Hence, throughout this section we assume that no cycle was detected in phase III, and as per Lemma~\ref{app:lm:phaseIII:correct} the graph $G^+$ is acyclic.  Our goal in phase IV is to update  $\prec$ so that it becomes a topological ordering in $G^+$. 

Consider two possible cases, depending on whether $u$ appeared before or after $v$ in the total order $\prec$ in the beginning of phase III. If $k(u) \prec k(v)$ in the beginning of phase III, then the discussion in the first three paragraphs of Section~\ref{app:sec:phase:cycle} implies that $\prec$ is already a topological ordering of $G^+$, and nothing further needs to be done. Thus, from now on we assume that $k(v) \prec k(u)$ and $V(u) = V(v) = V_{i,j}$ for some $i, j \in [0, |S|]$ in the beginning of phase.

Recall the six terminating conditions for the subroutine SEARCH($u, v$) used in phase III (see the discussion after Corollary~\ref{app:cor:order:backward}). We have already assumed that we do not detect any cycle in phase III. Hence, the subroutine SEARCH($u, v$) terminates under one of the following four conditions: (C1), (C2), (C5) and (C6). How we update the total order $\prec$ in phase IV depends on the terminating condition under which the subroutine SEARCH($u,v$) returned in phase III. In particular, there are two cases to consider.

\medskip
\noindent {\bf Case 1. The subroutine SEARCH($u,v$) returned under condition (C2) or (C6) in phase III.} 

\smallskip
\noindent In this scenario, we update the total order $\prec$ by calling the subroutine described in Figure~\ref{app:fig:update:forward}. In this subroutine, the symbols $F_d$ and $B_d$ respectively denotes the set of forward-dead and backward-dead nodes at the end of phase III. Similarly, we will use the symbols $F_a$ and $B_a$ respectively to denote the set of forward-alive and backward-alive nodes at the end of phase III. The subroutine works as follows.

Note that when the subroutine SEARCH($u,v$) begins execution in phase III, we had $v \in F_a$ and $u \in B_a$ (see step 1 in Figure~\ref{app:fig:search}). Since the subroutine SEARCH($u,v$) returned under conditions (C2) or (C6), Property~\ref{app:prop:node:move} implies that $v \in F_d$ and $u \in B_d$ at the end of phase III.  Accordingly, at this point in time, let $v, x_1, \ldots, x_f$ denote the nodes in $F_d$ in increasing order of their priorities, that is, we have $k(v) \prec k(x_1) \prec \cdots \prec k(x_f)$. Similarly, let $y_1, \cdots, y_b, u$ denote the nodes in $B_d$ in increasing order of their priorities, that is, we have $k(y_1) \prec \cdots \prec k(y_b) \prec k(u)$.  By Corollary~\ref{app:cor:order:terminate}, we also have $k(x_f) \prec k(y_1)$. Now that the edge $(u,v)$ has been inserted, we need to update the relative ordering among the nodes in $F_d \cup B_d$.

Steps 1-8 in Figure~\ref{app:fig:update:forward} update the total order $\prec$ in such a way  that it satisfies the following properties. (1) We still have $k(v) \prec k(x_1) \prec \cdots \prec k(x_f)$. In words, the relative ordering among the nodes in $F_d$ does not change. (2) Consider any two nodes $x, y \in V$
 such that we had $k(x) \prec k(x_f) \prec k(y)$ at the end of phase III. Then we still have $k(x) \prec k(x_f) \prec k(y)$ at the end of step 8 in Figure~\ref{app:fig:update:forward}. In words, the relative position of $x_f$ among all the nodes in $V$ does not change. (3)  The nodes in $F_d$ occur in consecutive positions in the total order $\prec$. In other words, at the end of step 8 we cannot find any node $x \notin F_d$ and any two nodes $x', x'' \in F_d$ such that $k(x') \prec k(x) \prec k(x'')$. 
 
 \begin{claim}
 \label{app:cl:critical}
 Consider any edge $(x, y)$ in $G^-$ where $y \in B_d$ and $x \notin B_d$. Then  $k(x) \prec k(v)$ at the end of step 8 in Figure~\ref{app:fig:update:forward}.
 \end{claim}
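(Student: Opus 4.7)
The plan is to first use Corollary~\ref{app:cor:alive:dead} to pin down which of the four sets $F_a, F_d, B_a, B_d$ the node $x$ belongs to, then use this to determine which terminating condition SEARCH$(u,v)$ must have ended under, and finally read off the desired inequality from properties (2) and (3) of Figure~\ref{app:fig:update:forward}.

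\textbf{Step 1 (locating $x$).} By hypothesis $(x,y) \in E_{i,j}$ with $y \in B_d$. Corollary~\ref{app:cor:alive:dead} then forces $x \in B_a \cup B_d$, and since $x \notin B_d$ we conclude $x \in B_a$. In particular $B_a \neq \emptyset$, so SEARCH$(u,v)$ did \emph{not} terminate under condition (C2); since we are in Case~1, it must have terminated under (C6).

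\textbf{Step 2 (applying (C6)).} Condition (C6) gives $\max_{y' \in B_a} k(y') \prec \max_{x' \in F_d} k(x') = k(x_f)$ at the end of phase III. Since $x \in B_a$, this yields $k(x) \preceq \max_{y' \in B_a} k(y') \prec k(x_f)$, i.e.\ $k(x) \prec k(x_f)$ at the end of phase III. Also, Property~\ref{app:prop:forward:back} tells us $F_d \cap B_a = \emptyset$, so $x \notin F_d$.

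\textbf{Step 3 (reading off the conclusion from steps 1--8 of Figure~\ref{app:fig:update:forward}).} Property (2) of steps 1--8 preserves the relative position of $x_f$ with respect to every node outside $F_d$. Hence $k(x) \prec k(x_f)$ continues to hold at the end of step 8. Property (3) says the nodes of $F_d = \{v, x_1, \ldots, x_f\}$, in this order, occupy consecutive positions in $\prec$. Thus the only positions available for $x \notin F_d$ are either before $v$ or after $x_f$; the inequality $k(x) \prec k(x_f)$ rules out the latter, leaving $k(x) \prec k(v)$.

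The argument is essentially a bookkeeping one; no step looks delicate. The only place that could require care is Step~3, where one has to be precise about what ``relative position of $x_f$ is preserved'' and ``$F_d$ occupies consecutive positions'' jointly imply about a node $x$ outside $F_d$ with $k(x) \prec k(x_f)$. This is the main (mild) obstacle: one has to rule out the possibility that $x$ ends up between two members of $F_d$, which is exactly what property (3) forbids, and ensure $x$ is not pushed past $x_f$, which is what property (2) forbids.
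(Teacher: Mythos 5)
Your proof takes essentially the same route as the paper's: use Corollary~\ref{app:cor:alive:dead} to place $x$ in $B_a$, deduce that SEARCH must have terminated under (C6), read off $k(x) \prec k(x_f)$ from (C6), and then invoke the preservation properties of steps 1--8 in Figure~\ref{app:fig:update:forward}. Your version is slightly more careful in Step 3 — explicitly noting $x \notin F_d$ (via $F_d \cap B_a = \emptyset$) before invoking property (3) — which is a small but genuine tightening of the bookkeeping that the paper leaves implicit.
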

 
 \begin{proof}
Since $y \in B_d$, there is an edge from $x$ to $y$ and $x \notin B_d$, Corollary~\ref{app:cor:alive:dead} implies that $x \in B_a$. Hence, the subroutine SEARCH($u,v$) returned under condition (C6), and {\em not} under condition (C2). By condition (C6), we have $k(x) \prec k(x_f)$ at the end of phase III. Since steps 1-8 in Figure~\ref{app:fig:update:forward} ensures that the nodes in $F_d$ occur in consecutive positions in $\prec$ and does not change the relative position of $x_f$ among all the nodes in $V$, it follows that $k(x) \prec k(v)$ at the end of step 8 in Figure~\ref{app:fig:update:forward}.
 \end{proof}
 
 Steps 9-15 in Figure~\ref{app:fig:update:forward} now further update the total order $\prec$ in such a way  that it satisfies the following properties. (4) We still have $k(y_1) \prec \cdots \prec k(y_b) \prec k(u)$. In words, the relative ordering among the nodes in $B_d$ does not change. (5) The node $u$ is placed immediately before the node $v$ in the total order $\prec$. This is consistent with the fact that the edge $(u,v)$ has been inserted into the graph $G$. (6) The nodes in $B_d$ occur in consecutive positions in the total order $\prec$. In other words, at the end of step 15 we cannot find any node $y \notin B_d$ and any two nodes $y', y'' \in F_d$ such that $k(y') \prec k(y) \prec k(y'')$. 
 
 To summarize, at this point in time in  $\prec$ the nodes $y_1, \ldots, y_b, u, v, x_1, \ldots, x_f$ occur consecutive to one another, and in this order. Accordingly, Corollary~\ref{app:cor:order:forward}, Corollary~\ref{app:cor:order:backward} and Claim~\ref{app:cl:critical} ensure that the total order $\prec$ remains a topological order in $G^-$ at this point in time. Since $u$ appears before $v$ in $\prec$, we also conclude that at this point in time $\prec$ is a topological order in $G^+$.
 
 \medskip
\noindent {\bf Case 2. The subroutine SEARCH($u,v$) returned under condition (C1) or (C5) in phase III.} 

 \smallskip
 \noindent This case is completely analogous to case 1 above, and we omit its description.

\begin{lemma}
\label{app:lm:time:update:order}
We spend $\tilde{O}\left(  \sqrt{mn \tau} \right)$ time in phase IV, over the entire sequence of edge-insertions in $G$.
\end{lemma}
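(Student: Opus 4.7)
My plan is to bound the time spent in Phase IV after a single edge-insertion by $\tilde{O}(|F_d| + |B_d|)$, where $F_d$ and $B_d$ are the forward-dead and backward-dead sets at the end of Phase III for that insertion. Once this per-insertion bound is established, the total time bound follows by summing over all $m$ insertions and invoking Corollary~\ref{app:cor:search:time:total} together with Property~\ref{app:prop:balance}.

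For the per-insertion bound, I would inspect the subroutine in Figure~\ref{app:fig:update:forward} (and its mirror image handling Case~2). The main loops iterate once per node in $F_d$ and once per node in $B_d$. Each iteration performs (i) an $\arg\max$/$\arg\min$ extraction, which can be implemented in $\tilde{O}(1)$ time using a max-heap or min-heap, and (ii) an INSERT-BEFORE call on the ordered list data structure of~\cite{DietzS87}, which runs in $O(1)$ time. Hence the total work done by one call to the Phase IV subroutine is $\tilde{O}(|F_d| + |B_d|)$. If Phase III detected a cycle (so Phase IV is skipped entirely) or if $k(u) \prec k(v)$ held at the start of Phase III (so $F_d = B_d = \emptyset$), the work is zero, consistent with the bound.

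For the summation, I would let $t_l$ denote $|F_d|$ at the end of Phase III while processing the $l$-th edge insertion, exactly as in Section~\ref{app:sec:phase:cycle}. By Property~\ref{app:prop:balance} we have $|B_d| \leq |F_d| + 1 = t_l + 1$, so $|F_d| + |B_d| = O(t_l + 1)$. Summing the per-insertion bound across all $m$ insertions then yields a total of $\tilde{O}\bigl(m + \sum_{l=1}^m t_l\bigr)$. By Corollary~\ref{app:cor:search:time:total}, $\sum_{l=1}^m t_l = O(\sqrt{mn\tau})$, and since $\sqrt{mn\tau} \geq \sqrt{mn} \geq m$ under the assumption $\tau \geq m/n$ (which will hold when we later set $\tau = n/m^{1/3}$ provided $m \geq n^{3/2}$, and otherwise the $m$ term is already dominated by other phases in~(\ref{eq:total:time:final})), the bound $\tilde{O}(\sqrt{mn\tau})$ follows.

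I do not expect any substantive obstacle here, since the algorithmic content of Phase IV is just a pair of list-splice operations whose cost is clearly linear in $|F_d \cup B_d|$; all the hard analytic work lies in the $\tau$-relatedness counting argument of Lemma~\ref{app:lm:search:time:total} and Corollary~\ref{app:cor:search:time:total}, which we are simply reusing. The only minor point to verify carefully is that the mirror-image subroutine for Case~2 (terminating under (C1) or (C5)) admits an identical analysis, which is routine given the symmetry between forward and backward exploration.
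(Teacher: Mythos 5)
Your proposal follows the paper's proof almost verbatim: bound the per-insertion cost of Phase IV by the size of $F_d \cup B_d$ using the $O(1)$ ordered-list and $\tilde O(1)$ heap operations, then sum over all insertions via Property~\ref{app:prop:balance} and Corollary~\ref{app:cor:search:time:total}. That is exactly the argument in the paper's proof sketch.

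Your extra paragraph handling the additive $+m$ term is more careful than the paper (which silently writes the total as $O(\sum_l t_l)$), but it contains two small algebraic slips. First, the chain $\sqrt{mn\tau}\geq\sqrt{mn}\geq m$ is false in general: since the paper assumes $m\geq n$, one has $\sqrt{mn}\leq m$, not $\geq$. What you actually want is the single implication $\tau\geq m/n \Rightarrow \sqrt{mn\tau}\geq m$. Second, with $\tau=n/m^{1/3}$ the condition $\tau\geq m/n$ is equivalent to $n^2\geq m^{4/3}$, i.e.\ $m\leq n^{3/2}$ --- the opposite of the regime you wrote. Your fallback observation is nonetheless sound: when $m>n^{3/2}$ the extra $m$ is dominated by the Phase~III bound $\tilde O(\sqrt{m^3\tau/n})$, since $\tau\geq 1\geq n/m$ forces $\sqrt{m^3\tau/n}\geq m$. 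So the conclusion stands, but the justification as written has its inequalities reversed.
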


\begin{proof}(Sketch)
Steps 7 and 14 in Figure~\ref{app:fig:update:forward} can be implemented in $O(1)$ time using the ordered list data structure~\cite{DietzS87}. Hence, the time spent in phase IV after a given edge-insertion is proportional to the sizes of the sets $F_d$ and $B_d$ at the end of phase III, and  by Property~\ref{app:prop:balance}, the sizes of the sets $F_d$ and $B_d$ are (almost) equal to one another. For $l \in [1, m]$, let $t_l$ denote the size of the set $F_d$ at the end of phase III while handling the $l^{th}$ edge-insertion in $G$. We conclude that the total time spent in phase IV, during the entire sequence of edge-insertions in $G$, is given by $O\left(\sum_{l=1}^m t_l \right)$. The lemma now follows from Corollary~\ref{app:cor:search:time:total}.
\end{proof}

\begin{figure}[h!]
	\centerline{\framebox{
			\begin{minipage}{5.5in}
				\begin{tabbing}
					01. \  \ \=   $Q = F_d$. \\
					02. \> $x^* = \arg \max_{x \in Q} \{ k(x) \}$	 \\				
					03. \> $Q = Q \setminus \{x^*\}$. \\
					04. \> {\sc While} $Q \neq \emptyset$: \\
					05. \> \ \ \ \ \ \ \ \ \= $x' = \arg \max_{x \in Q} \{k(x)\}$. \\
					06. \> \> $Q = Q \setminus \{x'\}$. \\
					07. \> \> INSERT-BEFORE($x', x^*$). \\
					08. \> \> $x^* = x'$. \\
					09. \> $y^* = v$. \\
					10. \> $Q = B_d$. \\
					11. \> {\sc While} $Q \neq \emptyset$: \\
					12. \> \> $y' = \arg \max_{y \in Q} \{ k(y) \}$. \\
					13. \> \> $Q = Q \setminus \{ y' \}$. \\
					14. \> \> INSERT-BEFORE($y', y^*$). \\
					15. \> \> $y^* = y'$. 
					\end{tabbing}
			\end{minipage}
	}}
	\caption{\label{app:fig:update:forward} Subroutine:  UPDATE-FORWARD$(.)$.}
\end{figure}

\subsection{Bounding the Total Update Time of Our Algorithm}
\label{app:sec:time}

We simply add up the total time spent by our algorithm in each of these four phases, throughout the entire sequence of edge-insertions in $G$. In particular, we invoke Lemma~\ref{app:lm:partition:time}, Lemma~\ref{app:lm:phase:order:time}, Lemma~\ref{app:lm:search:final:total:time} and Lemma~\ref{app:lm:time:update:order} and conclude that the total expected update time of our algorithm is at most:
\begin{equation}
\label{app:eq:total:time:final}
\tilde{O}\left( mn/\tau + n^2/\tau + \sqrt{m^3\tau/n} + \sqrt{mn\tau} \right) = \tilde{O}\left( mn/\tau +  \sqrt{m^3\tau/n}  \right).
\end{equation}
In the above derivation, we have made the assumption that $m = \Omega(n)$. Now, setting $\tau = n/m^{1/3}$, we get a total expected update time of $\tilde{O}(m^{4/3})$. This concludes the proof of Theorem~\ref{th:main}.

\end{document}